\newif\ifdouble
\theoremstyle{plain}
\newtheorem{theorem}{Theorem}
\newtheorem{Definition}{Definition}
\newtheorem{lemma}{Lemma}
\newtheorem{remark}{Remark}
\newcommand{\eqdef}{=:}
\newcommand{\rmnum}[1]{\romannumeral #1}
\newcommand{\Rmnum}[1]{\expandafter\@slowromancap\romannumeral #1@}
\newcommand{\FMOR}[1]{\textbf{\textcolor{red}{***FIXME ORI #1 ***}}}
\begin{document}


\title{Distributed Scheduling in Time Dependent Environments: Algorithms and Analysis} 

\author{Ori Shmuel, Asaf Cohen and Omer Gurewitz\\
Ben-Gurion University of the Negev, Israel\\
Email: \{shmuelor,coasaf,gurewitz\}@bgu.ac.il
}


\maketitle
\begin{abstract}

Consider the problem of a multiple access channel in a time dependent environment with a large number of users. In such a system, mostly due to practical constraints (e.g., decoding complexity), not all users can be scheduled together, and usually only one user may transmit at any given time. Assuming a distributed, opportunistic scheduling algorithm, we analyse the system's properties, such as delay, QoS and capacity scaling laws. Specifically, we start with analyzing the performance while \emph{assuming the users are not necessarily fully backlogged}, focusing on the queueing problem and, especially, on the \emph{strong dependence between the queues}. We first extend a known queueing model by Ephremides and Zhu, to give new results on the convergence of the probability of collision to its average value (as the number of users grows), and hence for the ensuing system performance metrics, such as throughput and delay. This model, however, is limited in the number of users one can analyze. We thus suggest a new model, which is much simpler yet can accurately describes the system behaviour when the number of users is large. 

We then proceed to the analysis of this system under the assumption of time dependent channels. Specifically, we assume each user experiences a different channel state sequence, expressing different channel fluctuations (specifically, the Gilbert-Elliott model). The system performance under this setting is analysed, along with the channel capacity scaling laws.

\end{abstract}

\begin{IEEEkeywords}
QoS, scaling-laws, dependent-channels, EVT, Point-Process.
\end{IEEEkeywords}

\section{Introduction}

In recent years, connectivity of everyday objects, which are often equipped with computing and networking capabilities, is becoming attractive and in some cases even necessary. These next generation information technologies form the new field of Internet of Things (IoT). A key aspect of such technologies is to support data transfer from a \emph{huge number} of nodes in the network (e.g., sensors), in order to provide novel applications. For example, many cities today provide smart city technologies such as smart metering, surveillance and security, infrastructure management, city automation, and eHealth. All these applications require data transfer from a huge number of sensors/devices, while sharing a common channel or infrastructure.

Due to this rapid growth in the number of users/devices, which rely on the wireless connectivity to a single gateway, we expect to have extremely large amount of traffic, data or control
, and since not all devices can transmit simultaneously, some channel access mechanism is required to coordinate between the transmitters efficiently. 

A common channel access for IoT and wireless sensors networks (WSN) is the random access mechanism \cite{pratas2012code,zhou2012contention,huang2013evolution}. Yet, as widely explored over the past few decades, such contention based access paradigms can result in low channel utilization due to collisions and mutual interference. Furthermore, since users are accessing the channel arbitrarily, regardless of their channel quality, a user with bad channel quality can capture the channel and transmit for a long duration (i.e., low rate), degrading the overall network performance \cite{heusse2003performance}. On the other hand, schedule-access-based protocols, allowing a scheduler to schedule users according to their current channel state, hence, exploit the multi-user diversity which is inherent to the wireless medium \cite{viswanath2002opportunistic,liu2001opportunistic}, may suffer from a large overhead and a complexity borden if the number of users is large.

Multi-user diversity gains are even more acute when utilizing multiple antennas both at the transmitter and the receiver. Many scheduling studies can be found for such Multiple-Input Multiple-Output (MIMO) technologies. For examples, in \cite{kim2005scheduling} and \cite{yoo2006optimality}, Zero-Forcing Beamforming (ZFBF) was investigated and user selection was performed in order to avoid interference among users' streams. Note, however, that such schedule-base schemes involve large channel state information exchange overhead, which hinders the large throughput gain \cite{bejarano2014mute}.

Considering the above, a suitable channel access scheme can be a distributed opportunistic threshold based algorithm, in which users can attempt transmission only if their channel state is above a threshold. On the one hand, such a scheme is opportunistic as it allows channel access only to users with advantageous channel conditions. On the other hand, it does not require extensive channel state information exchange, hence entails only a relatively low overhead. Several studies in the literature considered similar threshold-based opportunistic systems, e.g., \cite{qin2003exploiting,qin2006distributed} and \cite{kampeas2014capacity} for the homogeneous and non-homogeneous user cases, respectively. However, these studies investigated only the potential capacity or throughput gains of the mechanism, but have not analyzed other systems' metrics such as \emph{delay or buffer occupancy}. Moreover, they did not consider fully the very practical model of a time-dependent channel. \cite{qin2003exploiting} have a model with states but it's a bit different since one state is for no transmission at all and the other allow transmissions.

It is important to note that metrics such as delay or buffer occupancy are especially interesting in threshold-based algorithms, as one might suspect that a threshold-based algorithm will result in a long delay (until the threshold is exceeded) or in some kind of unfair scheduling (if a user exceeds the threshold more often than others).

\subsection{Main contributions}
In this work we address these concerns. Specifically, we consider a multi-user system comprising of $K$ users (devices) with a single antenna, wishing to communicate with a single gateway with multiple antennas. The channel access is governed by a threshold based random-access mechanism where each user
transmits packets in a first-in-first-out (FIFO) manner, such that packets arriving to a user which is already busy with a pending packet transmission wait for their turn to be transmitted. 
Accordingly, each user maintains a queue in which it stores packets waiting for transmission. Note that due to the shared medium the \emph{queues at different users are tightly correlated}. 
Furthermore, we assume \emph{time dependent channels}, that is, the channel state users experience is not \emph{i.i.d.}, and depends on the previous channel conditions. 

We provide analytical models and closed formulas to determine important properties of such systems. Specifically, the contributions of this work are divided into two main threads which, together, gives a complete understanding of the system. We start with exploring the \emph{performance} of the system by presenting approximate models for the systems' queues behavior. We first modify the model presented in \cite{ephremides1987delay}, while suiting it to our setting. Numerical results and interesting observations are presented for this model. Then, we present a simpler model which, in contrast to the former, is able to describe the system's behavior when the number of users (hence, queues) is large. In this work-flow, the first model essentially emphasizes the difficulty in tracking such a complex and \emph{dependent} system, even for time independent channels, while the second model, which is simpler, provides a good answer to this problem by decoupling the dependency between the queues. 

We then present a third approximation model, in which we tie the second approximation model with the assumption of time dependent channels. Note that this knot is not possible using the techniques in \cite{ephremides1987delay}, and is enabled by our simpler decoupling technique.  Specifically, we assume that each user experiences \emph{a time varying channel} modeled as a Good-Bad channel (the Gilbert-Elliot model \cite{gilbert1960}) which reflects the time varying channel distribution. We thus suggest a time dependent queue model for our multi-user Multiple Access Channel (MAC) system, which shows very good agreement with simulations results.

Finally, after assuming time dependent channels, we study the \emph{capacity} while considering both the centralized and the threshold-based distributed scheduling algorithms. We achieve closed analytic expressions, in two different ways, for the channel capacity scaling laws. Specifically, we use two statistical tools, Extreme Value Theory (EVT) and Point Process Approximation (PPA). These enable us to examine the limit distribution of the system's throughput. 

    The rest of the paper is organized as follows. Section \ref{sec-model description} describes the model and basic assumptions for this work. In Section \ref{sec-Performance analysis}, we examine the performance metrics of the queues under this algorithm, e.g., delay and buffer occupancy. Models for the time-independent and time-dependent scenarios are presented along with numerical results. In Section \ref{sec-Scaling Law Under Time Dependent Channel}, we focus on time-dependent channels and their \emph{asymptotic capacity} under centralized and distributed algorithms. Section \ref{sec-conclusion} concludes this work.

\section{System Model and Assumptions}\label{sec-model description}
	
	We consider an uplink system with $K$ independent users and one base station. We assume a slotted system in which the time axis is divided into fixed-length intervals, referred to as time slots or simply slots. Following a typical slotted system model, we assume that all nodes are synchronized and that transmissions can only start at slot boundaries. Obviously, simultaneous transmissions may result in collisions. We assume that each user maintains a queue in which the user stores the packets waiting for transmission. The packets are transmitted in a FIFO manner, in which each packet is repeatedly transmitted until received successfully by the base station. We focus our attention on the users' queues, as illustrated in Figure~\ref{fig-QueuingSystem}. We assume that the arrival process of new packets to each user's queue is characterized by a Poisson process. Accordingly, the users are not always backlogged. We further assume that all users are homogenous, thus, all users have the same arrival rate $\lambda$. At the beginning of each slot, each user estimates its own channel conditions, i.e., the expected achievable rate, and tests whether it exceeds a predefined threshold. Upon exceeding and having a packet to send, i.e., if the queue is not empty, the user attempts transmission. Throughout most of this paper the threshold value is set such that on average the probability for exceedance is $1/K$. Note that it was shown in \cite{qin2003exploiting} that under a \emph{fully backlogged} system this value is optimal. However, considering that users may not have packets to send at all times, the probability of $1/K$ is conservative. That is, it is possible that some slots will not utilized due to over-restrained transmission probability, or alternatively, that the arrival rate, which keeps the system stable, can be slightly higher if we allow users to be slightly more aggressive in their transmission attempts. We emphasise that the analysis presented in this work is correct for any threshold value and arrival rate as long as those maintain stability. We thus give evaluation of different exceedance probabilities via simulations and further discussion regarding the threshold in the sequel.


\begin{figure}[!t]
            \centering
            \includegraphics[width=2in]{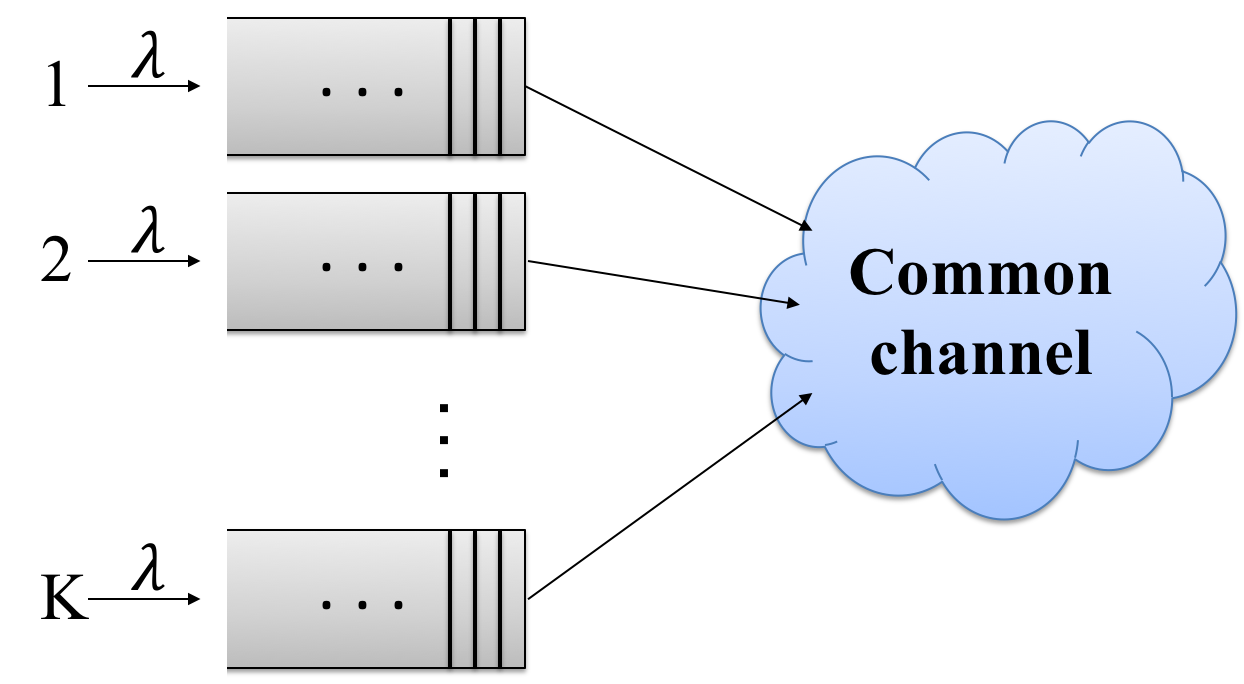}
            \caption[Model of queueing system]{System model. $K$ users access a common channel. Each user has a packet arrival process with rate $\lambda$.}
            \label{fig-QueuingSystem}
    \end{figure}

	We present several approximate models. Thus, new packets may enter the system at any given (continuous) point on the time axis or at the beginning of a slot, depending on the approximation model used. The slot size is set to encompass a single packet transmission at a rate which corresponds to the threshold. Note that since users are transmitting only while having an above-threshold capacity, and this capacity \emph{grows with the number of users} \cite{kampeas2014capacity}, the slots are expected to be small, specifically, we may neglect the transmission time in the analysis. In the first part of the paper, we assume that the achievable rates seen by each user at each time slot are \emph{i.i.d.} We extend these results to a setup in which these achievable rates are still identically distributed yet are not independent. In particular, we assume the channel of each user can alternate between different channel distributions (e.g., Good-Bad channel) according to the Gilbert-Elliot model. A full description will be provided in the sequel.

    We define the service time as the time from the moment a packet becomes first in queue, until it is successfully transmitted. Hence, the service time of the packets is composed of the waiting time \emph{for transmission} and the transmission time.

\section{Performance analysis}\label{sec-Performance analysis}

	Our system consists of $K$ queues, each with an independent Poisson arrival process, and a common server (i.e., the communication channel, Figure~\ref{fig-QueuingSystem}). A user will attempt transmission only when it is backlogged and the expected transmission rate for the next slot is above a threshold. We assume that any simultaneous transmissions will fail, i.e., no capture. The challenge in analyzing such a queueing system lies in the strong interdependence between the queues. Specifically, the user's collision or successful transmission probabilities depend on which of the other users' queues is backlogged (if the threshold exceedance probability is the same for all the users, it depends on the number of backlogged queues). Several works in the literature explored the behavior of various systems with queues interdependency. Due to the interdependence between the queues, each of these works considered a different simplified mathematical model, which resulted in approximations for the required metrics. For example, an iterative approximation model was suggested in \cite{saadawi1981analysis}, utilizing decoupled Markov chains. A refined model was presented in \cite{ephremides1987delay}. In \cite{sidi1983two}, the mean delay was given for the case of two identical users, as well as an approximation for a larger population. Extension for slotted CSMA/CD model can be found in \cite{takagi1985mean} and bounds on the stability region for such system can be found in \cite{luo1999stability}.  
	
In the sequel, we investigate the behavior of the system. We start by presenting two different approximations which capture the system performance. The first is a modification of the model presented in \cite{ephremides1987delay} to fit a threshold-based system. Different analytical results for the system performance metrics are obtained. The second model uses different technique from the works mentioned above which simplifies the analysis greatly. We extend the analysis for time dependent channels for each user.

    \subsection[Queueing Approximate model \Rmnum{1} ]{Approximation using System and Users' State}\label{Approximate model 1}
  
  	Trying to model the system state as the number of pending packets in each one of the queues, even though straightforward conceptually, is intractable and hence impractical. Accordingly, as our first approximation, we suggest an analytically tractable simplification which relies on the decoupling of the Markov chain into two separate and decoupled chains. This approximation is an adaptation of Ephremides and Zhu's model presented in~\cite{ephremides1987delay}, which analyzed the slotted ALOHA system with a finite number of buffered nodes. The main difference between the two models is that in slotted aloha described in~\cite{ephremides1987delay}, the transmission scheme is "immediate first transmission" i.e., if user $i$ has an empty queue when a packet arrives, it transmits the packet instantaneously, and in the case of collision, it may transmit again with a retransmission probability $p_i$, while our model follows a "delayed first transmission" scheme, where a transmission (first time or a later one) is delayed until the user channel state is favorable, i.e., happens only when the threshold has been exceeded. 
     

As in~\cite{ephremides1987delay}, we assume that the arrival rate at user $i$ is $\lambda_i$, and the arrival processes are statistically independent between the users\footnote{This model is able to capture heterogeneous arrival rates. Hence, we keep the subscript $i$ in its description. When we turn to the performance analysis we assume that $\lambda_i=\lambda$ as described in the system model.}. Time is slotted and it takes exactly one slot to transmit one packet. We will assume that since the transmission rates are high (users are transmitting only when their expected transmission rate is above a threshold) slot duration is quite small. Furthermore, since we are mainly interested in stable systems, i.e., the inter arrival time is much larger then the slot duration, we will assume that the probability that user $i$ receives a new packet for transmission during any given slot is $\Delta\lambda_i$, where $\Delta$ is the slot duration. We consider this duration as a unit size. The probability that a user receives more than one packet per slot is, similar to other Poisson models, negligible (i.e., $o(\Delta\lambda_i)$). Let us denote by $p_i(n)$ the probability that user $i$'s achievable rate in slot $n$ is above the predefined threshold. Since our first model assumes identically distributed channel conditions per slot, we will omit the slot index $n$, i.e., the threshold exceedance probability will be denoted by $p_i$ for all $n$. Accordingly, user $i$ attempts to transmit the head-of-the-line packet in its queue (given that its queue is nonempty), with probability $p_i$. Adopting the model in~\cite{ephremides1987delay}, we define three states in which each user can be at at the beginning of a given slot, namely \emph{Idle}, \emph{Active} or \emph{Blocked}. 
   
The states are determined at the beginning of each slot but depend on the previous slot as well. Specifically, a user is in \emph{Idle} state in two situations: having an empty queue at the beginning of the current slot, i.e., no packet arrival in the preceding slot, or having one packet in the beginning of the current slot which arrived after the beginning of the preceding slot. A user is in Blocked state if it was backlogged (i.e., its queue was not empty) at the beginning of the previous slot yet it has not transmitted successfully during the previous slot. Note that not transmitting successfully means either there was an unsuccessful transmission attempt, or that there was no transmission attempt at all, i.e., the user's channel state was below the threshold. A user is in Active state if it is backlogged at the beginning of the current slot and the user has successfully transmitted a packet during the last slot. We emphasize that the \emph{Active} state is an auxillary state which is utilized in the performance analysis by distinguishing successful transmissions for backlogged users.

Due to the above user $i$'s transmission probability is:
    \begin{equation}\label{equ-probability for transmission queueing }
    p_i= \left\{
          \begin{array}{l l}
            \frac{1}{K}\lambda_i & \ \text{if $i$ is idle},\\
            \frac{1}{K} & \ \text{if $i$ is active or blocked}.\\
    \end{array} \right.
    \end{equation}

As previously mentioned, the state space of such a system, which incorporates both the status of each user and the number of pending packets in its queue is intractable. Accordingly, our approximation relies on the decoupling of the Markov chain into two interdependent yet separate chains, the system-status chain and the queue-length chain. The transition probabilities of each chain are tightly depended on the steady-state probabilities of the other chain and thus, all state equations for both chains must be solved simultaneously.

The system-status chain captures the state of each user at any given time. Hence, the status variable $\overline{S}$ consists of $K$ ternary variables, $S_1,S_2,...,S_K$, each of which indicates the status of the corresponding terminal. Namely, $S_i=\{0,1,2\}$ for Idle, Active and Blocked, respectively. Since the system can incorporate at most one active user at any given time (i.e., at most one user can transmit successfully in the previous slot), it can be shown by summing over all the states where no users are active plus all the possible states in which a single user is active, that the total number of states achievable is $2^{K-1}(K+2)$. The transition probabilities of the system-status chain, which are different from the one presented in \cite{ephremides1987delay}, are given in Appendix \ref{Appendix A}.  However, we emphasize two quantities which are required for the system-status chain transition probabilities calculation:
    \begin{equation}\label{equ-p(1|1) and p(0|2) definition}
      \begin{array}{l}
            P_i(1\mid 1) \triangleq P_r(\text{queue size $>1\mid$ user $i$ is active})\\
            P_i(0\mid 2) \triangleq P_r(\text{queue size $=1\mid$ user $i$ is blocked}).
          \end{array}
    \end{equation}
Note that these probabilities, which are utilized in the system-status chain solution, reflect the coupling between the two chains. That is, the steady-state probability distribution $P(\overline{S})$ relies on the queue-length chain steady state. Their calculation can be found in Appendix \ref{Appendix B}.


The queue-length Markov chain tracks both the status and queue length of each user, independently of the status and queue length of the other users. Specifically, the pair $(T_i,N_i)$ represents the state of user $i$, where $N_i$ denotes the total number of packets at queue $i$ and $T_i$ is an indicator variable, indicating whether the user is blocked or unblocked (active or idle status), denoted by $0$ and $1$, respectively. We further denote by $ \mathbf {\pi}(T_i,N_i)$ user $i$'s steady-state probabilities.


   The transition probabilities of the chain depend on the following average transmission success probabilities (assuming a user has a packet to send):
    \begin{equation}\label{equ-Average success probabilities model 1}
    \begin{array}{l}
            P_B(i) =  P_r(\text{success $\mid$ user $i$ is blocked})\\
            P_A(i) =  P_r(\text{success $\mid$ user $i$ is active})\\
            P_I(i) =  P_r(\text{success $\mid$ user $i$ is idle}),
    \end{array}
    \end{equation}
    where the averaging is performed over the status of the \emph{other users} (the status of the system). Namely, in order to calculate the probabilities \eqref{equ-Average success probabilities model 1}, one needs the stationary distribution of the system-status chain $P(\overline{S)}$. This leads again to the coupling of the two sets of equations. The calculations of these average success probabilities are presented in Appendix \ref{Appendix B}. The conditional moment generating function of the chain for user $i$ is defined as
    \begin{equation}\label{equ-conditional moment generating function}
      G_{T_i}^i(z)\triangleq \sum_{N_i=0}^{\infty} \pi(T_i,N_i)z^{N_i}, \ \ \ \ T_i=0,1.
    \end{equation}
  
     In order to find the steady state probabilities of the decoupled chains, which, as previously explained, must be solved simultaneously, we utilize an iterative process which is based on the Wegstein iteration method. Specifically, at each iteration we exploit the auxiliary quantities $P_i(1\mid 1),P_i(0\mid 2)$ computed in the previous iteration to compute $P_I(i),P_A(i),P_B(i)$, and vice versa. After achieving satisfying convergence, the performance metrics can be calculated using the steady state of the chains.
     
   We emphasize here that due to the "delayed first transmission" (which is the result of the threshold exceedance requirement) assumption, the transition probabilities of the system chain became quite complicated compared to the analytical derivation in \cite{ephremides1987delay}.

\subsubsection{Queueing performance analysis}\label{Approximate model 1 - Queueing performance analysis}

In this subsection, we analyze the delay a packet endures from the moment it is generated and arrives to a user's queue until it is successfully transmitted. This delay can be divided into three components: The queueing time to allow the packets already queued to be transmitted, denoted by $W_q(i)$; the head-of-the-line waiting time, which is the time elapsed since the packet became first in the queue until successful transmission, denoted by $W_s(i)$; the transmission duration which is exactly one slot. Each of them is approximated separately, and the sum of the three gives the total average delay (measured in slot duration):
    \begin{equation}\label{equ-Delay of user i model 1}
      D_i=W_q(i)+W_s(i)+1.
    \end{equation}
Our main contributions for the performance analysis are given in the following Theorem, which gives the metrics affected directly from the threshold-based scheduling scheme. That is, the probability for a  packet to be blocked is now affected also by the successful exceedance of the threshold and thus the service time is also affected. In addition, the probability for success given a transmission attempt, which was not calculated in \cite{ephremides1987delay}, is given and will be compared with the approximation model in the next Section.

\begin{theorem}\label{Approximate model 1 - performance metrics theorem}
Under the queueing model which follows the "delayed first transmission" scheme, the performance metrics, i.e., the head-of-the-line waiting time, the probability for a packet to be blocked and the probability for success given a transmission attempt are, 
\begin{equation}\label{equ-service time approximation model 1}
      W_s(i)=Pr(\text{blocked packet}) \cdot \frac{1}{P_B(i)},
\end{equation}
\begin{equation} \label{equ-probability to be blocked approximation model 1}
\begin{aligned}
    Pr(\text{blocked packet})&=\left(1-Pr(\text{unblocked packet})\right)\\
    &=1-\left( \frac{\pi(1,0)}{\lambda \pi(1,0) +(G^i_1(1)-\pi(1,0))}P_I(i)+ \right.\\ 
    &\quad \quad \quad \quad \quad \quad \quad \left. \frac{G^i_1(1)-\pi(1,0)}{\lambda \pi(1,0) +(G^i_1(1)-\pi(1,0))}P_A(i)\right),
\end{aligned}
\end{equation}
    and,
     \begin{equation}\label{equ-success probability model 1}
      p_{succ}(i)= \frac{P_A(i)(G_1^i(1)-\pi(1,0))+P_I(i)\pi(1,0)+P_B(i)G_0^i(1)}{\frac{1}{K}(1-(1-\lambda)\pi(1,0))}.
    \end{equation}
\end{theorem}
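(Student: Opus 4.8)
The three formulas are steady‑state accounting identities for the decoupled queue‑length chain, so the plan is to derive each by a law‑of‑total‑probability or rate‑conservation argument, treating the conditional success probabilities $P_I(i),P_A(i),P_B(i)$ and the stationary weights $\pi(T_i,N_i)$ (equivalently $G^i_{T_i}(1)$) as already available from the coupled fixed point computed by the Wegstein iteration, and setting $\lambda_i=\lambda$ as in the system model. Starting with the head‑of‑the‑line waiting time \eqref{equ-service time approximation model 1}: I would track a tagged packet from the slot it becomes first in its queue. In that slot it succeeds with the appropriate status‑conditioned probability, and otherwise it enters the \emph{Blocked} state; a packet that never becomes blocked contributes $0$ to $W_s(i)$, since its transmission slot is the separate ``$+1$'' in \eqref{equ-Delay of user i model 1}. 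Under the decoupling approximation each subsequent slot spent blocked is an independent success trial with probability $P_B(i)$, so conditioned on being blocked the number of additional slots until departure is geometric with mean $1/P_B(i)$; taking expectations gives $W_s(i)=Pr(\text{blocked packet})\cdot\frac{1}{P_B(i)}$.

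For the blocking probability \eqref{equ-probability to be blocked approximation model 1} I would write $Pr(\text{blocked packet})=1-Pr(\text{unblocked packet})$ and obtain $Pr(\text{unblocked packet})$ as the ratio of the long‑run rate of head‑of‑the‑line packets that succeed on their first attempt to the long‑run rate of head‑of‑the‑line events. A packet becomes head‑of‑the‑line in exactly one of two ways: it arrives to an empty queue (user \emph{Idle}) or its predecessor departs (user \emph{Active}). The rate of the first kind of event is $\lambda\pi(1,0)$ and the rate of the second is $G^i_1(1)-\pi(1,0)$ -- each slot in an unblocked, backlogged state corresponds to exactly one such event -- giving the denominator $\lambda\pi(1,0)+\big(G^i_1(1)-\pi(1,0)\big)$. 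The rate of first‑attempt successes of the first kind is $\pi(1,0)P_I(i)$ and of the second kind $\big(G^i_1(1)-\pi(1,0)\big)P_A(i)$, by the definition of $P_I(i),P_A(i)$ as per‑slot success probabilities conditioned on user status; the coefficient mismatch between $\lambda\pi(1,0)$ in the denominator and $\pi(1,0)$ in the numerator reflects that $P_I(i)$ already absorbs the probability that the Idle user actually received a packet. Dividing yields \eqref{equ-probability to be blocked approximation model 1}.

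For the per‑attempt success probability \eqref{equ-success probability model 1} I would use $p_{succ}(i)=Pr(\text{success in a slot})/Pr(\text{attempt in a slot})$. Partitioning a slot by the user's status and applying total probability, the numerator is $P_A(i)\big(G^i_1(1)-\pi(1,0)\big)+P_I(i)\pi(1,0)+P_B(i)G^i_0(1)$ (the Active, Idle, Blocked contributions with their stationary probabilities), which in a stable system also equals the departure rate $\lambda$ -- a useful cross‑check. For the denominator, by \eqref{equ-probability for transmission queueing } a user attempts with probability $\frac1K$ when Active or Blocked and $\frac{\lambda}{K}$ when Idle, so $Pr(\text{attempt})=\frac1K\big(1-\pi(1,0)\big)+\frac{\lambda}{K}\pi(1,0)=\frac1K\big(1-(1-\lambda)\pi(1,0)\big)$. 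Dividing gives \eqref{equ-success probability model 1}.

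The main obstacle is not any single calculation but keeping the bookkeeping of the decoupling approximation internally consistent: one must pin down precisely which user status each queue‑length state $(T_i,N_i)$ represents -- in particular the treatment of a freshly arrived packet and of the state $(1,0)$ versus the states $(1,N_i)$ with $N_i\ge 1$ -- and the identities above are only meaningful once $P_I(i),P_A(i),P_B(i)$ and the stationary distribution form the jointly consistent fixed point, which is exactly why the iterative scheme is needed, with $P_i(1\mid1),P_i(0\mid2)$ from \eqref{equ-p(1|1) and p(0|2) definition} carrying the coupling to the system‑status chain. I would therefore state the status/state correspondence explicitly first, after which the three identities follow as sketched.
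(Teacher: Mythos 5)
Your proposal is correct and follows essentially the same route as the paper's proof: the geometric argument with mean $1/P_B(i)$ for the head-of-the-line waiting time, the complement of the immediate-success probability from Idle or Active for the blocking probability, and the ratio of the unconditional success probability to the attempt probability for $p_{succ}(i)$. Your rate-ratio bookkeeping for \eqref{equ-probability to be blocked approximation model 1} — in particular the observation that the $\lambda\pi(1,0)$ versus $\pi(1,0)$ asymmetry arises because $P_I(i)$ already absorbs the arrival probability — is in fact more explicit than the paper's one-line justification, but it is the same decomposition.
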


\begin{proof}
The head-of-the-line waiting time takes into account the time the packet spends at the head of the queue excluding the successful transmission slot. Accordingly, the head-of-the-line waiting time is zero both in the case of successful transmission upon arrival at an empty queue (i.e., successful transmission from an \emph{Idle} state) and in the case of a successful transmission of a backlogged packet, in the slot consecutive to its becoming the head of the queue (i.e., successful transmission from an \emph{Active} state). Consequently, the head-of-the-line waiting time is the time the packet spends in \emph{Blocked} state (it is zero for packets which did not pass through Blocked state). Since the number of slots until successful transmission while Blocked is a geometric random variable with mean equal to $1/P_B(i)$ we have Equation \eqref{equ-service time approximation model 1}.

The probability for a packet to be blocked (Equation \eqref{equ-probability to be blocked approximation model 1}) can be computed as the complement of the probability to be unblock (successful transmission without passing through Block state) which is essentially the probability of immediate successful transmission upon arrival of a packet (to the head of the line) which may be only in Idle or Active states. The terms before the success probabilities are the proportion of successful transmissions while in Idle and Active states, respectively.

The probability for success given a transmission attempt (Equation \eqref{equ-success probability model 1}), i.e., the user has packet to send, exceeds the threshold and successfully transmits can be computed as the general success probability regardless the users' state (Idle, Active, Blocked), divided by the probability for transmission attempt. Note that this event is the result of two independent events, the user channel norm exceeding the threshold and the transmission being successful. Note also that the user can be in Idle state and yet manage to transmit successfully, which can occur upon successful transmission of a packet upon its arrival to an empty queue.
\end{proof}

The remaining performance metric, i.e., the queueing time, is calculated, standardly, using Little's result, hence 
    \begin{equation}\label{equ-time in line approximation model 1}
      W_q(i)=\frac{L_i}{\lambda_i},
    \end{equation}
    where $L_i$ is the average queue length of a user (without considering the blocked head-of-line packet as part of the queue), which is given by \cite{ephremides1987delay}:
    \begin{equation}\label{equ-mean queue size model 1}
      L_i=\frac{\lambda_i^2\overline{\lambda}_i\overline{P}_I(i)}{(\overline{\lambda}_iP_B(i)-\lambda_i\overline{P}_A(i))(\overline{\lambda}_iP_B(i)-\lambda_i(P_I(i)-P_A(i)))},
    \end{equation}
    where $\overline{P}=1-P$.

    \subsubsection{Performance results}\label{Approximate model 1 - results}
    In order to gain some insight on the analytical results attained in the previous subsection, in this subsection we present some numerical results. In particular, we present the performance attained by the threshold based channel access mechanism under various metrics such as delay, average queue size and success probability. 
    
    The simulation and analytic calculations were performed under homogenous settings in which all users experience the same arrival process and channel distribution. Specifically, the arrival process was approximated by a Bernoulli process in which at each slot each user receives a new packet for transmission with probability $\lambda_i=\frac{\lambda}{K}$. The threshold was set such that the exceedance probability is $\frac{1}{K}$. Accordingly, each backlogged user examined its channel state at the beginning of each slot, and if its channel state was above a threshold it started transmission. Note that since the channel distribution was homogenous in time, the threshold exceedance process can also be viewed as Bernoulli process in which the transmission probability is $\frac{1}{K}$. Since, as explained earlier, the number of states in the system-status Markov chain grows exponentially with the number of users, for this model, the simulation and the numerical results were compared only for modest number of users and specifically only for $2$ to $10$ users. In section \ref{Approximate model 2} we present a model which is able to capture a much larger number of users.

We start by evaluating the effect of the number of users on the average queue length and on the delay, where we divide the delay to its different components, namely, the time in queue and service time. Figures \ref{fig-MeanQueueSize_2-10_0366}, \ref{fig-TimeInLine_2-10_0366} and \ref{fig-ServiceTime_2-10_0366} depict the performance metrics as a function of the number of users, where the total arrival rate is set close to the maximum value which still allows the system to be stable. The results depict very good match between the simulations and the analytic results given in \eqref{equ-mean queue size model 1}, \eqref{equ-time in line approximation model 1} and \eqref{equ-service time approximation model 1}. However, as mentioned earlier, this approximation can only be used for a small number of users.
    
Figures \ref{fig-MeanQueueSize_K=7}, \ref{fig-TimeInLine_K=7} and \ref{fig-ServiceTime_K=7} depict performance as a function of the total arrival rate, which asserts the following observations: First, for small and moderate values of the arrival rate one can see very good agreement between the analytic results and the simulation. Near the maximum value, however, around the instability region, the values start to diverge, especially in Figures \ref{fig-MeanQueueSize_K=7} and \ref{fig-TimeInLine_K=7}. These results are consistent with the results of \cite{ephremides1987delay}, which used small values of $\lambda_i's$ as well. The second observation is that as the number of users grows, one can see even stronger compliance with the analytical results. This suggests that this approximate model may be suitable for a large population. Unfortunately, as mentioned before, the calculation of the system steady state is intractable due to the exponential growth of the number of states with the number of users.

     \ifdouble
    
    \begin{figure}[!t]
        \centering
        \begin{subfigure}[b]{0.45\textwidth}
                \centering
                \includegraphics[width=\textwidth]{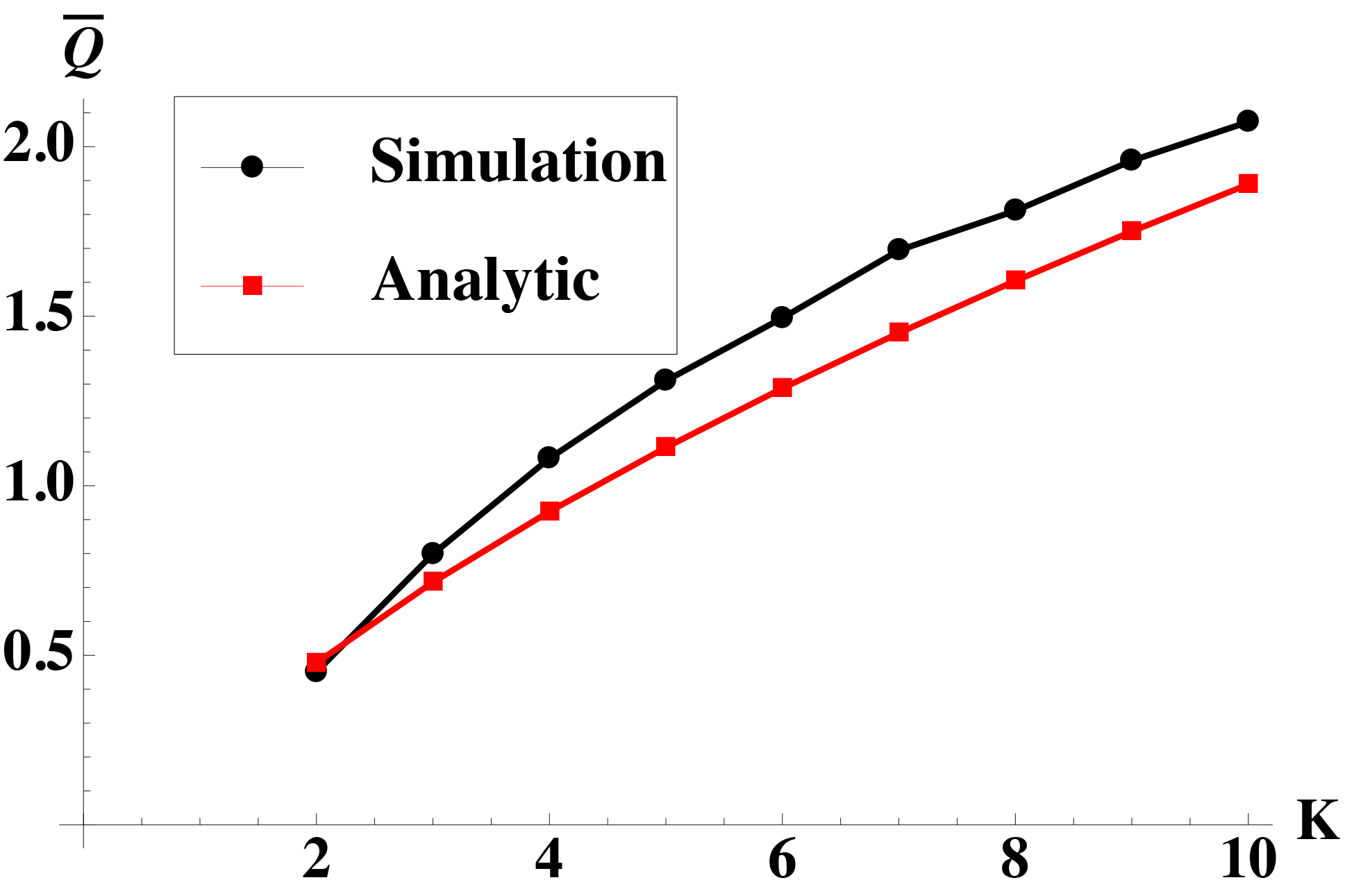}
                \caption{}
                \label{fig-MeanQueueSize_2-10_0366}
        \end{subfigure}%
         \begin{subfigure}[b]{0.45\textwidth}
                \centering
                \includegraphics[width=\textwidth]{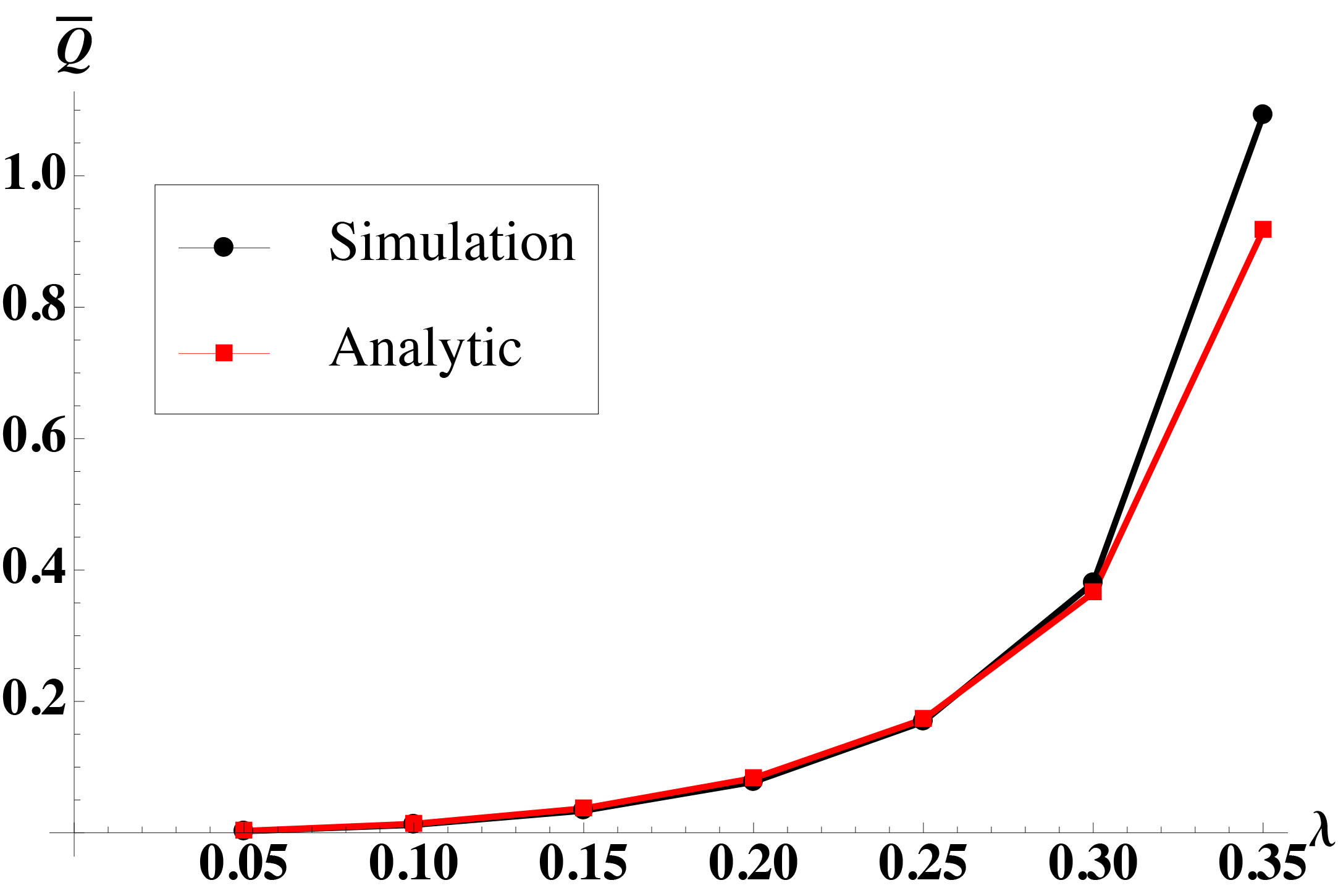}
                \caption{}
                \label{fig-MeanQueueSize_K=7}
        \end{subfigure}
        
        \begin{subfigure}[b]{0.45\textwidth}
                \includegraphics[width=\textwidth]{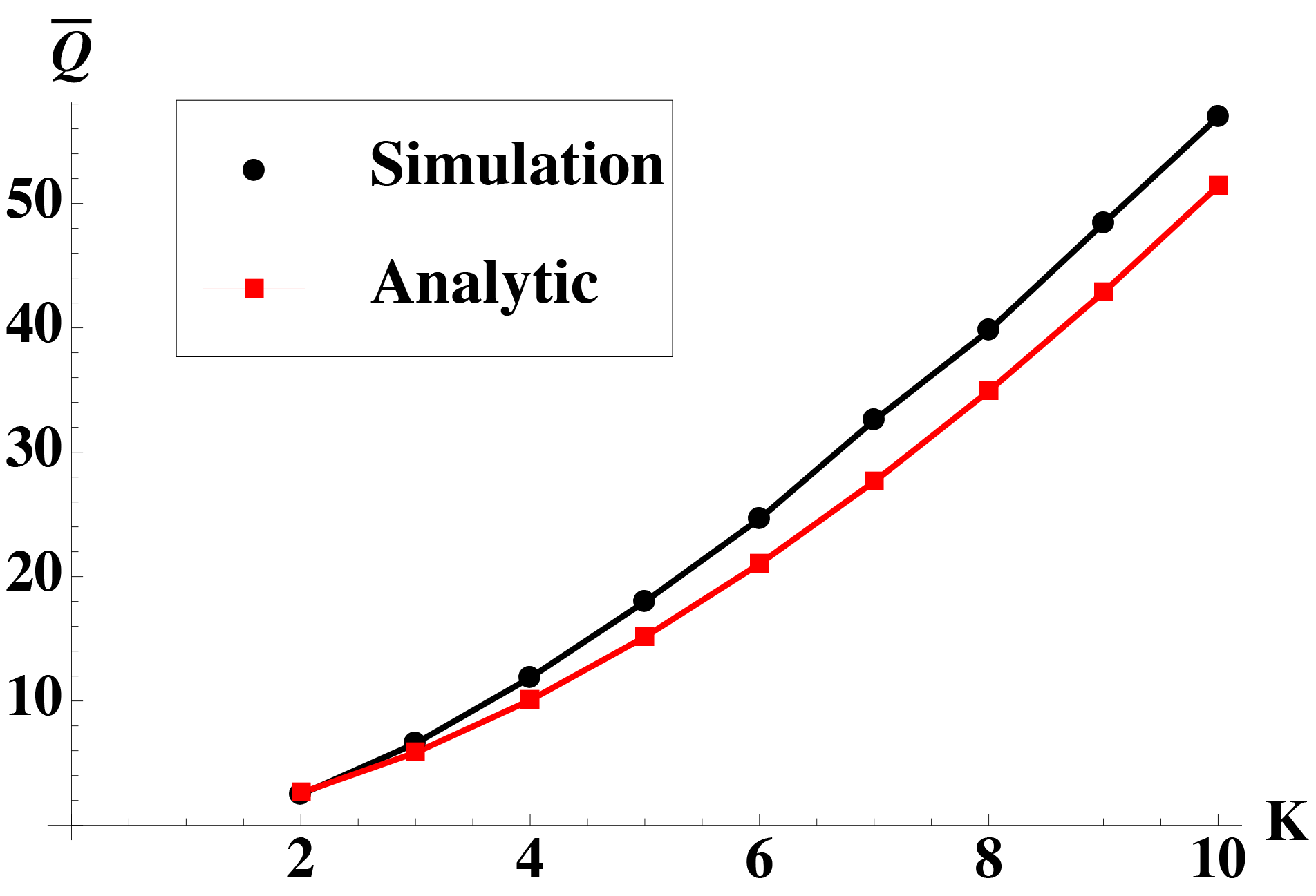}
                \caption{}
                \label{fig-TimeInLine_2-10_0366}
        \end{subfigure}%
        \begin{subfigure}[b]{0.45\textwidth}
                \includegraphics[width=\textwidth]{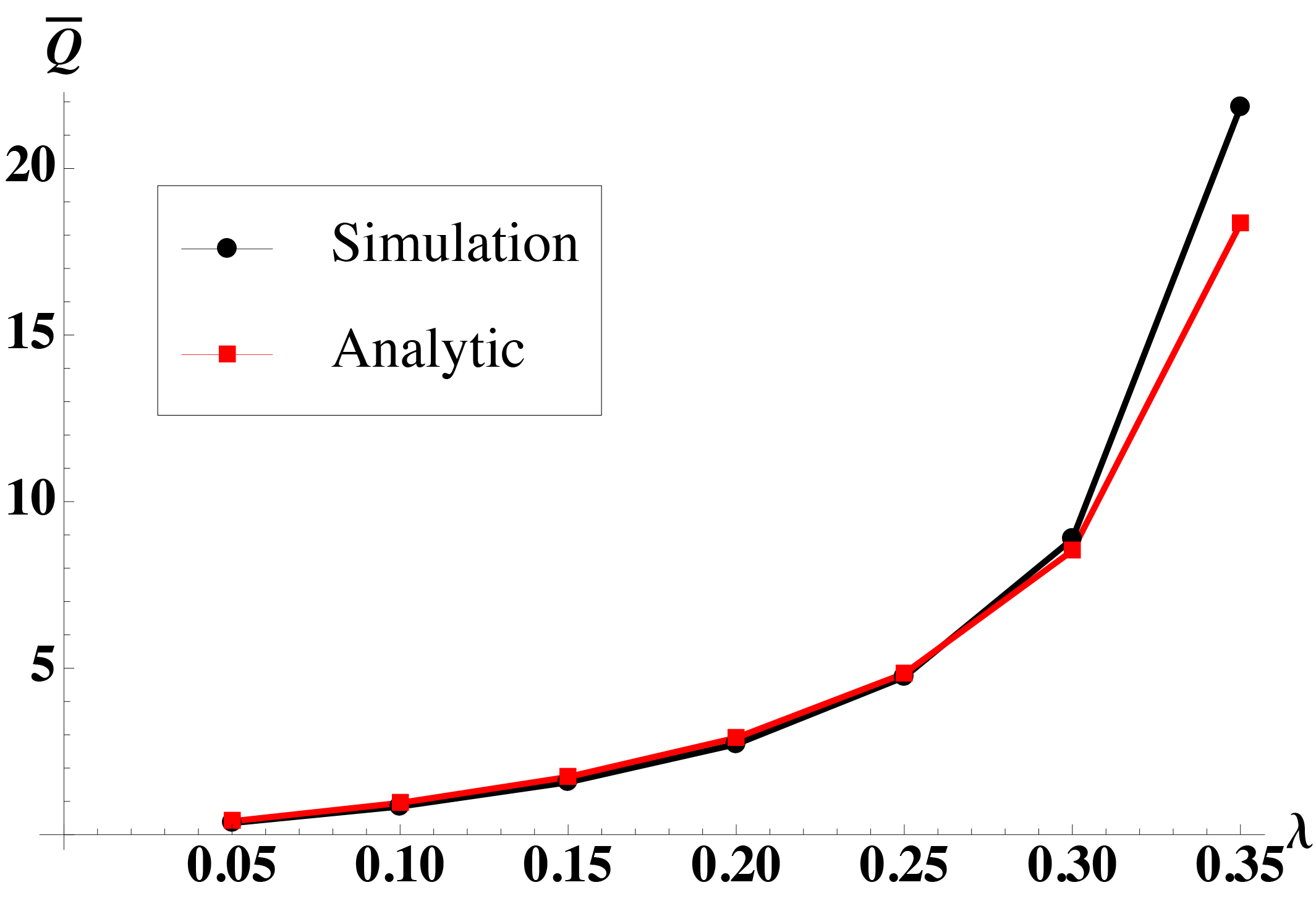}
                \caption{}
                \label{fig-TimeInLine_K=7}
        \end{subfigure}
        
        \begin{subfigure}[b]{0.45\textwidth}
                \includegraphics[width=\textwidth]{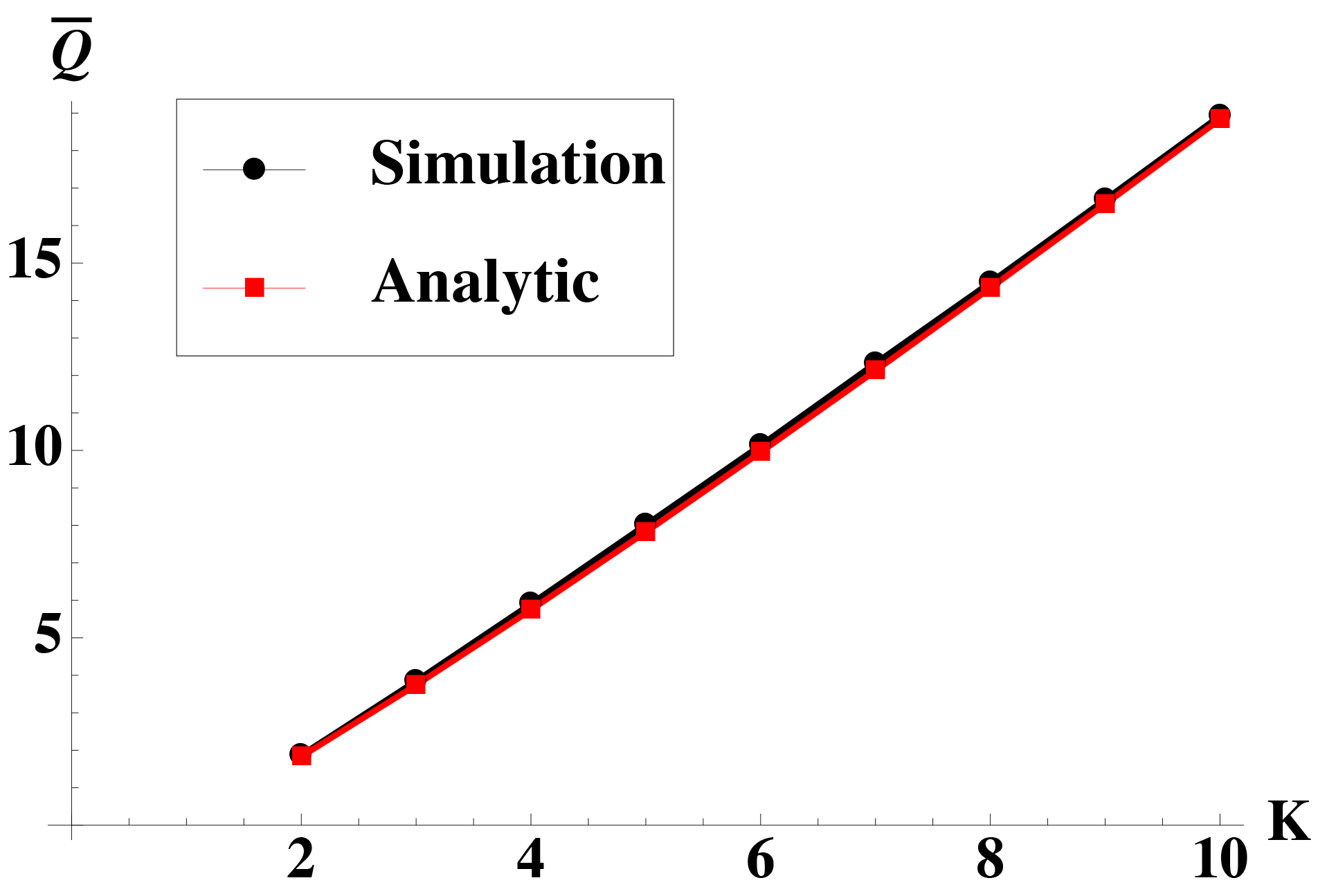}
                \caption{}
                \label{fig-ServiceTime_2-10_0366}
        \end{subfigure}%
        \begin{subfigure}[b]{0.45\textwidth}
                \includegraphics[width=\textwidth]{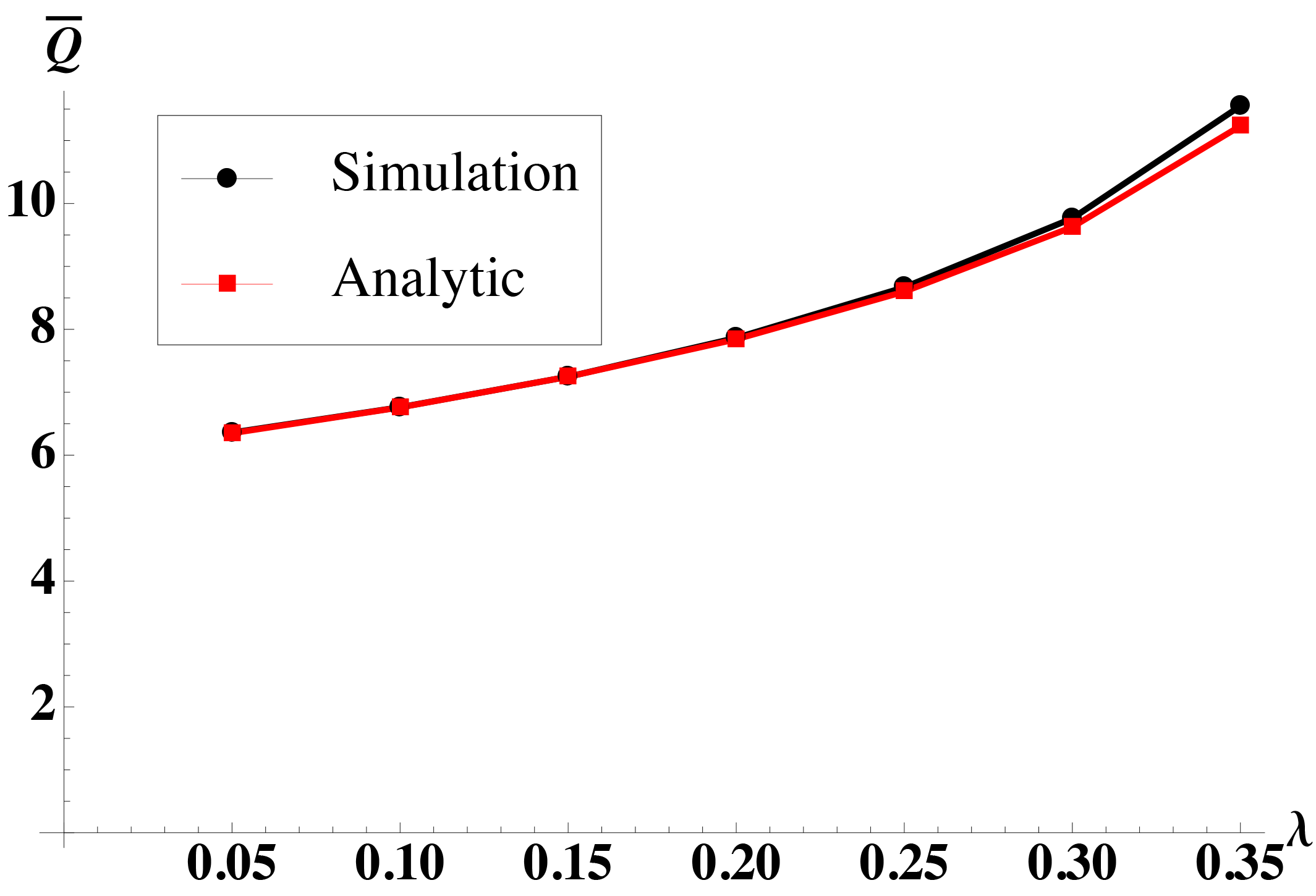}
                \caption{}
                \label{fig-ServiceTime_K=7}
        \end{subfigure}

        \caption[System performance time independent model]{Analytic results of the approximation using system and users' state compared with simulation for the system performances. The first row is as a function of the number of users, where the total arrival rate is $\lambda_T=\frac{1}{e}(1-0.001)$ and the second row is as a function of the total arrival rate where the number of users is $K=7$. Figures (a,b) depict the mean queue size, (c,d) depict the time in line and (e,f) depict the service time. The red lines describe the anaclitic expressions, \eqref{equ-mean queue size model 1}, \eqref{equ-time in line approximation model 1} and \eqref{equ-service time approximation model 1}, while the black lines describe the simulation results.}
        \label{fig-QueuingPerformance}
    \end{figure}
    \else
    \begin{figure*}[!t]
        \centering
        \begin{subfigure}[b]{0.31\textwidth}
                \centering
                \includegraphics[width=\textwidth]{MeanQueueSize_2-10_0366}
                \caption{Mean queue size.}
                \label{fig-MeanQueueSize_2-10_0366}
        \end{subfigure}%
        \begin{subfigure}[b]{0.31\textwidth}
                \includegraphics[width=\textwidth]{TimeInLine_2-10_0366}
                \caption{Time in line.}
                \label{fig-TimeInLine_2-10_0366}
        \end{subfigure}
        \begin{subfigure}[b]{0.31\textwidth}
                \includegraphics[width=\textwidth]{ServiceTime_2-10_0366}
                \caption{Service time.}
                \label{fig-ServiceTime_2-10_0366}
        \end{subfigure}
        
        \begin{subfigure}[b]{0.31\textwidth}
                \centering
                \includegraphics[width=\textwidth]{MeanQueueSize_K=7}
                \caption{Mean queue size.}
                \label{fig-MeanQueueSize_K=7}
        \end{subfigure}%
        \begin{subfigure}[b]{0.31\textwidth}
                \includegraphics[width=\textwidth]{TimeInLine_K=7}
                \caption{Time in line.}
                \label{fig-TimeInLine_K=7}
        \end{subfigure}
        \begin{subfigure}[b]{0.31\textwidth}
                \includegraphics[width=\textwidth]{ServiceTime_K=7}
                \caption{Service time.}
                \label{fig-ServiceTime_K=7}
        \end{subfigure}

        \caption[System performance time independent model]{Analytic results of the approximation using system and users' state compared with simulation for the system performances. The first raw is as a function of the number of users where the total arrival rate is $\lambda_T=\frac{1}{e}(1-0.001)$ and the second raw is as a function of the total arrival rate where the number of users is $K=7$. Figures (a,d) depict the mean queue size, (b,e) depict the time in line and (c,f) depict the service time. The red lines describes the anaclitic expressions \eqref{equ-mean queue size model 1}, \eqref{equ-time in line approximation model 1} and \eqref{equ-service time approximation model 1}, while the black lines describe the simulation results.}

        \label{fig-QueuingPerformance}
    \end{figure*}
    \fi

Next we compare the estimated success probability given a transmission attempt, $p_{succ}$ with simulation results. Figure~\ref{fig-SuccessProbability_2-10_0366_independent} clearly depicts that the analytical approximation described in \eqref{equ-success probability model 1} matches the simulation results with high accuracy. Even for a small number of seven users we have a difference of only 1.19\%. This highly accurate approximation is the foundation for a simplified model, described in the following section, which, in contrast to the model considered thus far, is able to capture the system's behavior for a large number of users with high accuracy.


    \subsection[Queueing Approximate model \Rmnum{2} ]{Approximation using Constant Collision Probability}\label{Approximate model 2}
    As described in the previous section, a user's success (or conversely, collision) probability depends on the other users' queues, hence on the system state. That is, different system states will result in different success (collision) probabilities. Trying to solve this set of equations is complicated even for a moderate number of users, all the more so for a large user population. In this section, we present a simpler approach, one which assumes that \emph{a user's collision probability is constant}. This will allow us to give close form results which are easy to calculate, and, as simulations depict, are very accurate for large and even moderate for population sizes.
    
    The key approximation method we adopt in this section is inspired by the mean field theory. Mean field theory studies the behavior of a large number of particles which interact. Specifically, when the number of particles is large, mean field approximation suggests an \emph{independent evolution of a certain particle relatively to others} by approximating the effect of all other particles on that particle by the averaged effect (essentially this is a concentration result). Ever since the mean field approximation was introduced in physics, it was adopted by various fields, including in the context of Markov process models for various dynamic systems. For example, in \cite{bordenave2012asymptotic}, mean field approximation was used to describe the stability region of the slotted Aloha paradigm. Therein, the authors proved that the distribution of a user's queue state is not affected by the other users when $K$ goes to infinity. Given that, the collision probability can be approximated by a fixed point equation, assuming the probability for empty queue is constant and is independent of the other users. Another seminal work which utilized a constant collision probability as the key approximation is \cite{bianchi2000performance}, denoted as the \emph{decoupling approximation}. In this work, a Markov model for the 802.11 back-off process was considered, with $n$ users competing on a shared medium. This decoupling assumption can be formally justified as a consequence of convergence to mean field, as the number of users goes to infinity. Several elaborations were made along with verification for the validity of the decoupling approximation \cite{malone2007modeling}, \cite{cho2012asymptotic}.  We also note that in \cite{sidi1983two}, independent M/M/1 queues were assumed, with a given distribution on the number of backlogged users. This resulted in an approximation with a \emph{varying} collision probability. 
Herein, and similar to the works mentioned above, we assume a constant collision probability which we denote by $p_{coll}$, i.e., we assume that given a transmission attempt, each user experiences a fixed collision probability regardless of the state or queues of the other users, and regardless of its own state. Even though the mean field theory mostly applies to large and complex stochastic models involving a large number of particles, we will show via simulation that our approach gives high accuracy even for a moderate number of users. We start by investigating the service time.

    \subsubsection{Service time analysis}
     The service time, which is the time from the moment a packet becomes first in queue, until it is successfully transmitted, depends on the rate at which a user exceeds the threshold and the probability of success (the probability that a collision did not occur). We note that the threshold exceedance process of each user can be modeled as a series of Bernoulli trials. We further note that this process converges to a Poisson process \cite{leadbetter1976weak}, especially if the slot duration is small compared to the time between threshold exceedances. Accordingly, we will approximate the time between threshold exceedances for each user as exponentially distributed with parameter $\uptau$. In the second part of this work, we will prove that indeed, as the number of users grows and the threshold exceedance probability decreases respectfully, convergence exists and exponential distribution between consecutive thresholds exceedances can be considered (see Section~\ref{subsec-Threshold_exceedance_process}).

Since we decoupled the queues, we can now assume that each user's queue behaves like an $M/M/1$ queue with a feedback loop. Specifically, packets enter each user's queue according to Poisson process with rate $\lambda_i$. The inter exceedance interval is exponentially distributed with mean $1/\tau$. Upon exceeding the threshold, a packet will be successfully transmitted (hence depart the queue) with probability $p_{succ}=1-p_{coll}$, and will need to be retransmitted with probability $p_{coll}$. Accordingly, the queue can be modeled as M/M/1 queue with exponentially distributed service time with parameter $(1-p_{coll}) \uptau$. The probability of an empty such $M/M/1$ queue, is:
    \begin{equation}\label{equ-the probability for empty queue M/M/1}
      P(Q=0)=1-\rho=1-\frac{\lambda}{(1-p_{coll}) \uptau},
    \end{equation}
    where $Q$ is the number of packets in the queue. We thus have the following lemma:

    \begin{lemma}\label{lem-Transmission success probability satisfies the equation - Memoryless arrival}
       Assume each user's queue is modeled as an $M/M/1$ queue with a feedback loop, with an arrival rate $\lambda$, exceedance rate $\uptau$ and constant collision probability $p_{coll}$. Then, the probability for collision $p_{coll}$ satisfies the equation
       \begin{equation}\label{equ-Transmission success probability satisfies the equation - Memoryless arrival before 1}
         p_{coll}=1-e^{-\frac{\lambda}{(1-p_{coll}) \uptau}} \cdot \left(1+o(1)\right).
       \end{equation}
     Hence, for large enough $K$ we have the following closed form for $p_{coll}$
     \begin{equation}\label{equ-Transmission success probability satisfies the equation - Memoryless arrival}
         p_{coll}=1-e^{-\frac{\lambda}{(1-p_{coll}) \uptau}}.
       \end{equation}
    \end{lemma}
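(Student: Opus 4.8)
The plan is to exploit the decoupling already in place: each user's queue is treated as an $M/M/1$ queue with feedback, so by \eqref{equ-the probability for empty queue M/M/1} the probability that a tagged user's buffer is nonempty is $\rho=\lambda/((1-p_{coll})\uptau)$. Since a threshold exceedance depends only on the channel realization, it is independent of the queue occupancy, and hence the probability that the tagged user actually attempts a transmission at a given exceedance instant factorizes as $P(C_i>u)\,P(Q_i>0)=\tfrac1K\cdot\tfrac{\lambda}{(1-p_{coll})\uptau}$, using that the threshold is calibrated so that $P(C_i>u)=1/K$.

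Next I would condition on the tagged user transmitting and write the collision probability as one minus the probability that none of the remaining $K-1$ users attempts simultaneously. Under the constant-collision-probability (mean-field) assumption the other users' attempt events are independent $\mathrm{Bernoulli}$ trials with the same parameter $\tfrac1K\cdot\tfrac{\lambda}{(1-p_{coll})\uptau}$, so summing the binomial probability of ``at least one'' gives
\[
 p_{coll}=1-\Bigl(1-\tfrac1K\cdot\tfrac{\lambda}{(1-p_{coll})\uptau}\Bigr)^{K-1}.
\]
Applying the standard expansion $(1-x/K)^{K-1}=e^{-x}\,(1+o(1))$ as $K\to\infty$ with $x=\lambda/((1-p_{coll})\uptau)$ held fixed yields \eqref{equ-Transmission success probability satisfies the equation - Memoryless arrival before 1}; dropping the $1+o(1)$ factor for large $K$ gives the closed-form fixed-point equation \eqref{equ-Transmission success probability satisfies the equation - Memoryless arrival}. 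I would also note that this fixed-point equation admits a unique solution in $(0,1)$: the right-hand side is continuous and strictly increasing in $p_{coll}$ and maps $[0,1)$ into itself, so a standard intermediate-value and monotonicity argument pins the solution down.

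The main obstacle is not the algebra but justifying the two structural assumptions that make it go through: (i) that the tagged queue is genuinely $M/M/1$, i.e., the successful-departure process is Poisson with rate $(1-p_{coll})\uptau$ --- this rests on thinning the exceedance point process by independent $\mathrm{Bernoulli}(1-p_{coll})$ success indicators, which presumes the collision events are i.i.d.\ across exceedances; and (ii) that the attempt events of the $K-1$ other users are mutually independent and independent of the tagged user, which is precisely the mean-field decoupling hypothesis. Making (i)--(ii) fully rigorous would require the propagation-of-chaos/concentration statement that the per-user queue marginals become independent as $K\to\infty$ (in the spirit of \cite{bordenave2012asymptotic,bianchi2000performance}); here it is invoked as an approximation, with the $o(1)$ term in \eqref{equ-Transmission success probability satisfies the equation - Memoryless arrival before 1} absorbing the residual error and the promised rigorous convergence of the exceedance process to a Poisson process deferred to Section~\ref{subsec-Threshold_exceedance_process}.
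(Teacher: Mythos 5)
Your proposal is correct and follows essentially the same route as the paper's proof: decouple each queue as $M/M/1$ with occupancy $\rho=\lambda/((1-p_{coll})\uptau)$, factorize the attempt probability as $\tfrac{1}{K}\rho$, take the complement of ``no other user attempts'' over $K-1$ independent Bernoulli trials, and expand $\bigl(1-x/K\bigr)^{K-1}=e^{-x}(1+o(1))$ (the paper just carries this expansion out explicitly via the logarithm). The only addition is your uniqueness aside, which is not part of the lemma's claim and whose argument as stated is incomplete, since monotonicity of an increasing self-map does not by itself rule out multiple fixed points.
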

    \begin{proof}
     Since random selection (with probability $(1-p_{coll})$) is performed on the threshold exceedance Poisson process, the outcome successful transmission process is also a Poisson process with rate $(1-p_{coll})  \uptau$, which leads to an exponential service time with parameter $(1-p_{coll})  \uptau$. Let us examine the probability $p_{coll}$.

Since the channel quality and the queue length are independent, the probability that a user will attempt transmission, equals the probability that the user is backlogged (i.e., its buffer is not empty) times the probability that its expected rate is above a threshold. Specifically,

     \ifdouble     \small     \fi
     \begin{equation}
       P(C_i>u,Q_i>0)=P(C_i>u)P(Q_i>0)=\frac{1}{K}\cdot\frac{\lambda}{ (1-p_{coll})\uptau},
     \end{equation}
     \ifdouble     \normalsize     \fi
     On the other hand, given that a user has transmitted, i.e., its expected rate exceeds $u$ and its queue is not empty, its collision probability, $p_{coll}$, is the probability that among all the other users at least one other user will attempt transmission, i.e., at least one other user is backlogged and its expected rate exceeds $u$. This is one minus the probability that no other user has attempted transmission. Thus
     
      \ifdouble
      \footnotesize
     \begin{equation*}
     \begin{aligned}
          p_{coll}&=\sum_{i=1}^{K-1}  \binom {K-1} {i} \left(\frac{1}{K} \frac{\lambda}{\uptau (1-p_{coll})} \right)^{i}
          \cdot\left(1-\frac{1}{K} \frac{\lambda}{\uptau (1-p_{coll})} \right)^{K-1-i} \\
          &= 1-\left( 1-\frac{1}{K} \frac{\lambda}{\uptau (1-p_{coll})} \right)^{K-1} \\
          &= 1- e^{(K-1)\ln{\left(1-\frac{\lambda}{(1-p_{coll}) \uptau K}\right)}}\\
          &= 1- e^{(K-1)\left(-\frac{\lambda}{(1-p_{coll}) \uptau K}-O\left(\frac{1}{K^2}\right)\right)}\\
          &= 1- e^{-\frac{\lambda}{(1-p_{coll}) \uptau}-O\left(\frac{1}{K}\right)} \cdot e^{\frac{\lambda}{(1-p_{coll}) \uptau K}+O\left(\frac{1}{K^2}\right)},\\
          &= 1- e^{-\frac{\lambda}{(1-p_{coll}) \uptau}} \cdot e^{\frac{\lambda}{(1-p_{coll}) \uptau K}+O\left(\frac{1}{K}\right)},\\
          &= 1- e^{-\frac{\lambda}{(1-p_{coll}) \uptau}} \cdot \left(1+O\left(\frac{\lambda}{(1-p_{coll}) \uptau K}\right)\right)e^{O\left(\frac{1}{K}\right)},\\     
          &= 1- e^{-\frac{\lambda}{(1-p_{coll}) \uptau}} \cdot \left(1+o(1)\right)e^{O\left(\frac{1}{K}\right)},\\ 
          &= 1- e^{-\frac{\lambda}{(1-p_{coll}) \uptau}} \cdot \left(1+o(1)\right),\\          
     \end{aligned}     
     \end{equation*}
     \normalsize
     \else
      \begin{equation*}
      \begin{aligned}
          p_{coll}=&\sum_{i=1}^{K-1}  \binom {K-1} {i} \left(\frac{1}{K} \frac{\lambda}{\uptau (1-p_{coll})} \right)^{i}\left(1-\frac{1}{K} \frac{\lambda}{\uptau (1-p_{coll})} \right)^{K-1-i} \\
          &= 1-\left( 1-\frac{1}{K} \frac{\lambda}{\uptau (1-p_{coll})} \right)^{K-1}\\
          &= 1- e^{(K-1)\ln{\left(1-\frac{\lambda}{(1-p_{coll}) \uptau K}\right)}}\\
          &= 1- e^{(K-1)\left(-\frac{\lambda}{(1-p_{coll}) \uptau K}-O\left(\frac{1}{K^2}\right)\right)}\\
          &= 1- e^{-\frac{\lambda}{(1-p_{coll}) \uptau}-O\left(\frac{1}{K}\right)} \cdot e^{\frac{\lambda}{(1-p_{coll}) \uptau K}+O\left(\frac{1}{K^2}\right)},\\
          &= 1- e^{-\frac{\lambda}{(1-p_{coll}) \uptau}} \cdot e^{\frac{\lambda}{(1-p_{coll}) \uptau K}+O\left(\frac{1}{K}\right)},\\
          &= 1- e^{-\frac{\lambda}{(1-p_{coll}) \uptau}} \cdot \left(1+O\left(\frac{\lambda}{(1-p_{coll}) \uptau K}\right)\right)e^{O\left(\frac{1}{K}\right)},\\     
          &= 1- e^{-\frac{\lambda}{(1-p_{coll}) \uptau}} \cdot \left(1+o(1)\right)e^{O\left(\frac{1}{K}\right)},\\ 
          &= 1- e^{-\frac{\lambda}{(1-p_{coll}) \uptau}} \cdot \left(1+o(1)\right),\\         
      \end{aligned}
     \end{equation*}
     \fi
     which completes the proof.
    \end{proof}

    Equation~\eqref{equ-Transmission success probability satisfies the equation - Memoryless arrival} is an implicit equation and a numerical method is needed in order to find the value of $p_{coll}$. In Figure \ref{fig-SuccessProbability_2-10_0366_independent}, we depict the numerical calculation of the success probability (the blue line), as given in \eqref{equ-Transmission success probability satisfies the equation - Memoryless arrival}, compared with the approximation derived in the previous section (Eq. \eqref{equ-success probability model 1}) and with simulation results, for different user populations. Although the simple approximation is slightly less accurate for a very small number of users (e.g., around 5\% for 4 users) it coincides with the results of the simulation \emph{and the approximate model from the previous section} (with dependent queues, which cannot be calculated for large $K$) even for moderate number of users (e.g., around 2\% for 10 users). It is important to emphasize that since the number of users is relatively small, we used the equation in its explicit form, meaning without taking $K$ to infinity. In Figure \ref{fig-Service_time_model2}, we compare the average service time computed according to the $M/M/1$ queue approximation, with service rate $p_{succ}\uptau$, which was calculated according to equation \eqref{equ-Transmission success probability satisfies the equation - Memoryless arrival}, with simulation of the system, which, of course, included dependent queues and variable $p_{succ}$.  Clearly, the approximation shows excellent agreement with the simulation results.

  \begin{figure}[!t]
        \centering
        \begin{subfigure}[b]{0.45\textwidth}
                \includegraphics[width=\textwidth]{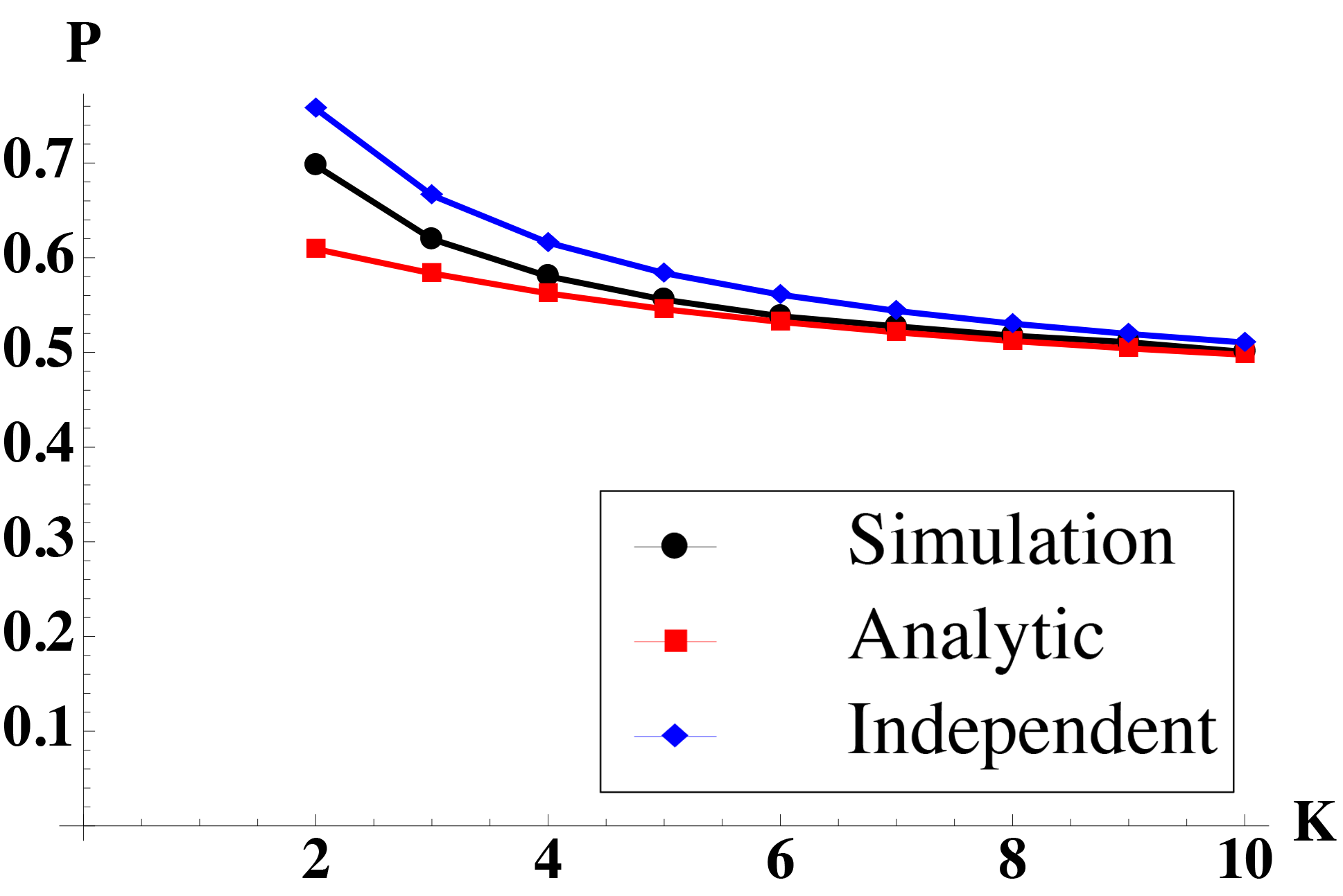}
                \caption[]{}
                \label{fig-SuccessProbability_2-10_0366_independent}
        \end{subfigure}%
        \quad
        \begin{subfigure}[b]{0.45\textwidth}
            \centering
            \includegraphics[width=\textwidth]{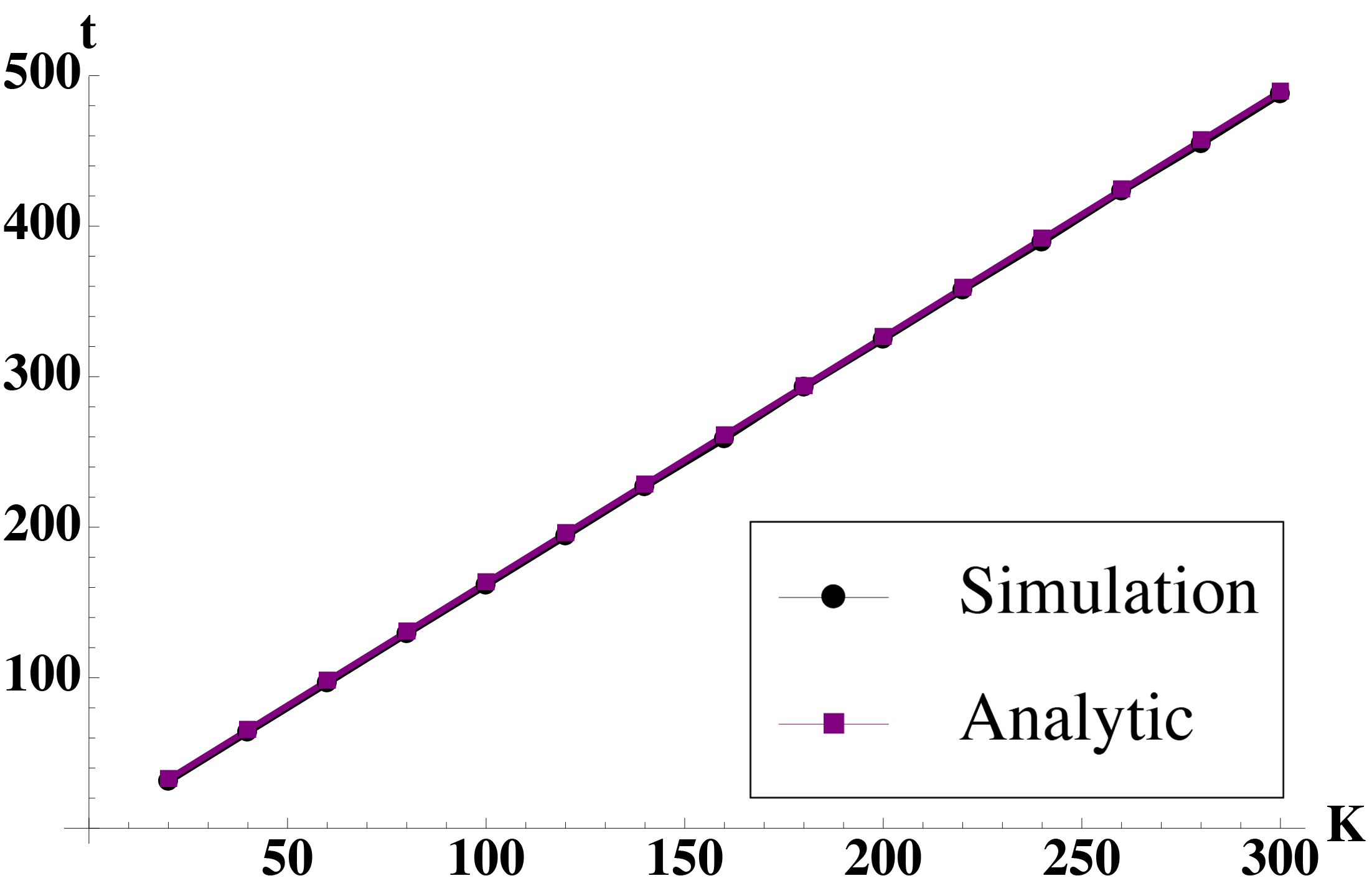}
            \caption[]{}
            \label{fig-Service_time_model2}
        \end{subfigure}
        \caption{Simulation results for the approximate model using constant collision probability. (a) Comparison of $p_{succ}$ between the analytic derivation of the approximation done in Section \ref{Approximate model 1} (equation \eqref{equ-success probability model 1}), simulation and the approximation of $K$ \emph{independent} M/M/1 queues given in Lemma \ref{lem-Transmission success probability satisfies the equation - Memoryless arrival} where the total arrival rate is $\lambda_T=\frac{1}{e}(1-0.001)$. (b) The service time of an M/M/1 queue with service rate $p_{succ}\uptau$, calculated using \eqref{equ-Transmission success probability satisfies the equation - Memoryless arrival}, compared to simulation results of system with $K$ \emph{interdependent} queues where the total arrival rate is $\lambda_T=0.3$.}
    \end{figure}
    
       
   \subsubsection[Threshold value]{Threshold value}\label{Threshold value discussion}
   
   The threshold value has a profound effect on the performance. Different threshold values will support different arrival rates, i.e., a different stability region. In addition, the threshold value affects the users' delay. On the one hand, a high threshold will result in low exceedance probability hence long intervals between transmission attempts. On the other hand, low threshold results in high exceedance probability, hence high collision probability. The optimal threshold value depends on the number of backlogged users in the system at any given time, e.g., when there are only a few backlogged users, a low threshold should be chosen and when many users are backlogged a high threshold should be chosen. Note, however, that the process of monitoring the users at all times, and notifying them regarding the current threshold before each transmission, is not only complex analytically, but mainly impractical in real systems. Therefore, in the analitical part of this work, we focused on a fixed threshold, independent of the users' status. Specifically, we chose a threshold value such that the probability of exceedance is $1/K$. This threshold is conservative, as it is designed for the case that all users are backlogged at all times. However, it is interesting to see how a different fixed value affects the results.

Hence, we examine the effect of the threshold on the performance based on simulation results and specifically, we show that a better threshold value can be chosen (which is less conservative). Figures \ref{fig-System_throughput_as_function_of_threshold} and \ref{fig-System_delay_as_function_of_threshold} depict the throughput and system delay as a function of the exceedance probability, respectively, for 50 users with total arrival rate of $\lambda_T=0.35$. The blue dotted line in Figure \ref{fig-System_throughput_as_function_of_threshold} is the average number of backlogged users as a function of the exceedance probability. Expectedly, when the exceedance probability is high, the throughput decreases and the delay is rapidly growing due to high collision probability. On the other hand, when it is low, throughput decreases as well, due to the long intervals in which no users will attempt transmission, as the threshold is high. This, of course, indicates on the instability of the system. However, as can be seen from figures \ref{fig-System_throughput_as_function_of_threshold} and \ref{fig-System_delay_as_function_of_threshold}, there is a domain of probability values for which the system achieves its maximum throughput and has a low average delay values. One can see that the value of $1/K$ (black dashed line) and the value for which we have a minimum number of backlogged users are found in this domain. It is clear that although the throughput of the system is at its maximum, for this the domain of probability values, the delay and the average number of backlogged users can be reduced if we increase the exceedance probability. A possible criterion for a better excedance probability value may be a threshold which minimizes the average number of backloged users in the system.
In addition, we wish to point out that figure \ref{fig-System_delay_as_function_of_threshold} may be misleading for values which the system is not stable (for $p<0.012$ and $p>0.03$ approximately), where one would expect to see an infinite delay. This is due to the run time of the simulation. Nevertheless, we still can see the sharp jump in the delay which indicate its general behaviour.

    \ifdouble
    \begin{figure}[tp!]
        \centering
        \begin{subfigure}[b]{0.45\textwidth}
                \centering
                \includegraphics[width=\textwidth]{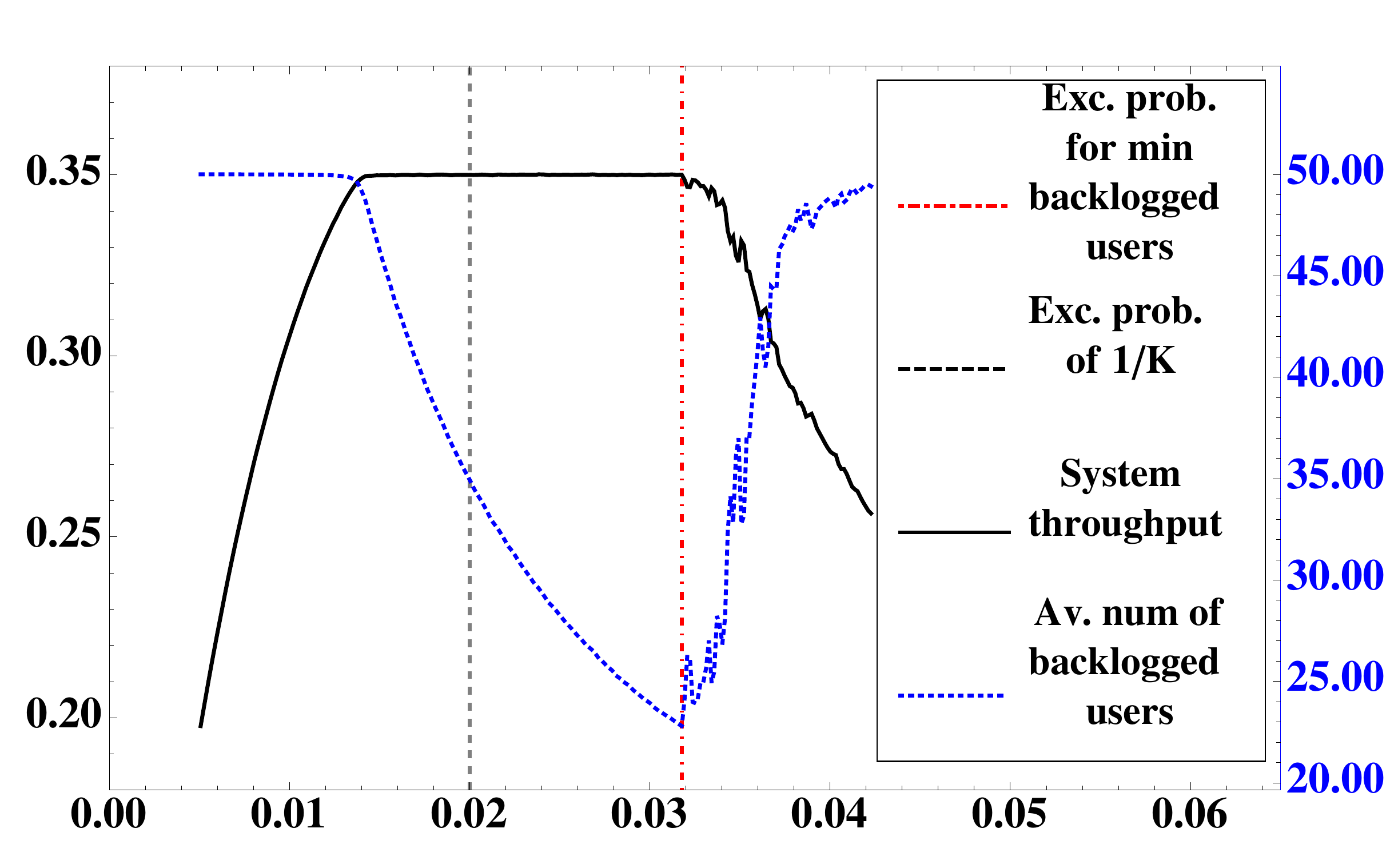}
                \caption{}
                \label{fig-System_throughput_as_function_of_threshold}
        \end{subfigure}%
        \begin{subfigure}[b]{0.45\textwidth}
            \centering
            \includegraphics[width=\textwidth]{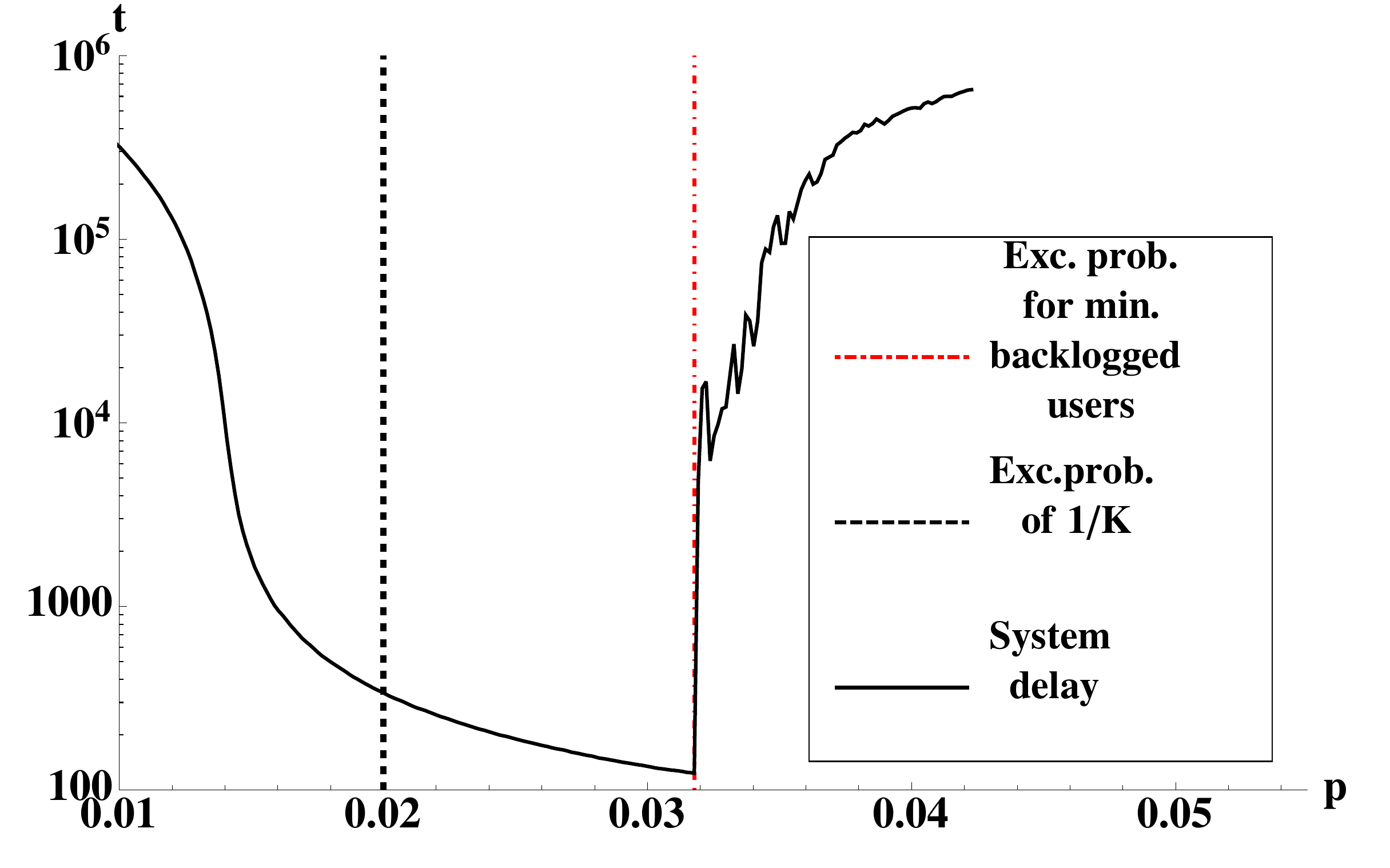}
            \caption{}
            \label{fig-System_delay_as_function_of_threshold}
        \end{subfigure}
        \caption[Throughput and delay as function of threshold]{The system performance metrics as a function of the threshold value. (a) System throughout and average number of backlogged users. (b) System delay. One can compare the performances under the exceedance probability of $1/K$ with different probability values (e.g. for the lower value of the average number of backlogged users). The simulation was preformed for 50 users with total arrival rate of $\lambda_T=0.35$.}
        \label{fig-System_throughput_and_delay_as_function_of_threshold}
    \end{figure}
    
    \else
    
    \begin{figure*}[tp!]
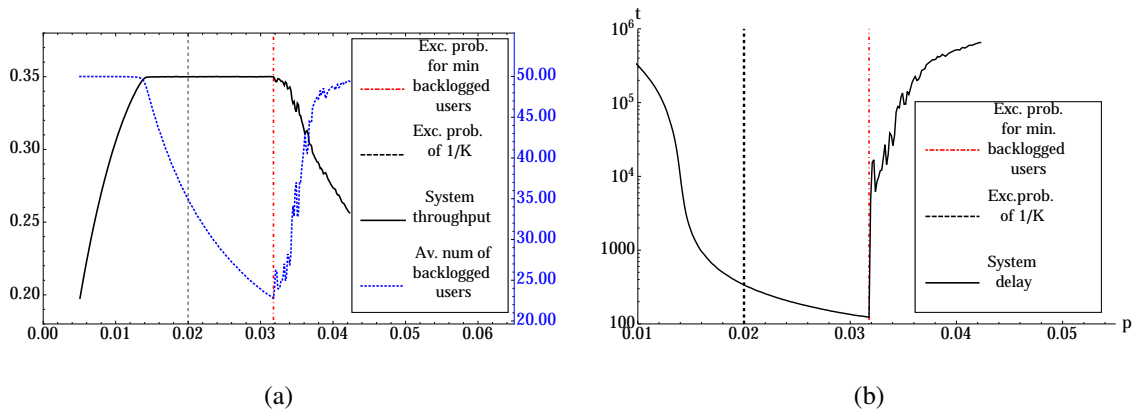

        \centering
        \begin{subfigure}[b]{0.45\textwidth}
                \centering
                \includegraphics[width=\textwidth]{ThresholdSim_50u}
                \caption{}
                \label{fig-System_throughput_as_function_of_threshold}
        \end{subfigure}%
        \quad
        \begin{subfigure}[b]{0.45\textwidth}
            \centering
            \includegraphics[width=\textwidth]{ThresholdSimDelay_50u}
            \caption{}
            \label{fig-System_delay_as_function_of_threshold}
        \end{subfigure}
        \caption[Throughput and delay as function of threshold]{The system performance metrics as a function of the threshold value. (a) System throughout and average number of backlogged users. (b) System delay. One can compare the performances under the exceedance probability of $1/K$ with different probability values (e.g. for the lower value of the average number of backlogged users). The simulation was preformed for 50 users with total arrival rate of $\lambda_T=0.35$.}
        \label{fig-System_throughput_and_delay_as_function_of_threshold}
    \end{figure*}
    \fi

    \subsection[Queueing Approximate model \Rmnum{3}]{Approximation using Constant Collision Probability - Time Dependent Channel}\label{Approximate model 3}
   
   In many practical scenarios, the channel seen by a user at a given time slot is correlated with the one seen by the user in the previous time slots. Thus, in the sequel, we will analyze the performance of the threshold based algorithm when users are experiencing a time dependent channel distribution. In particular, we assume that the channel capacity distribution experienced by each user is time varying, according to a Gilbert Elliott model \cite{gilbert1960}. Specifically, the channel distribution may be at one of two states, denoted as G (for Good) and B (for Bad), where each state determines a different channel distribution. The transitions between the states Good to Bad and Bad to Good follows a Bernoulli distribution with probabilities $\alpha$ and $\beta$, respectively. Thus, the states evolve according to a 2-state Markov chain as described in Figure \ref{fig-GoodBadchannel}. Note that while the trheshold remains fixed, the threshold exceedance probability clearly depends on the user's channel state. Namely, if a user is in a Good state, the user is expected to exceed the threshold more often than when being in a Bad state. 
Furthermore, the user's collision probability not only depends on the number of other backlogged users but also on the channel state of each such backlogged user. Accordingly, trying to solve the system's stationary distribution, which is one additional dimension over the previous analyzed system, is much more involved than before. We thus adopt the same simplified approach we took in Section~\ref{Approximate model 2}. Specifically, we assume that the collision probability that a user experiences is constant regardless of the number of other backlogged users or their channel states. As before, we assume that the slot duration is small compared to the time interval between two consecutive threshold exceedances, even when the user's channel is in Good state, and approximate the time between threshold exceedances for each user as exponentially distributed with parameter $\mu_g$ and $\mu_b$, depending on whether the user's channel is in Good or Bad state, respectively. Accordingly, the user's service time is exponentially distributed with rates $\mu_g\cdot p_{succ}$ or $\mu_b\cdot p_{succ}$, depending on the user being in Good or Bad state, respectively. Given the Poisson arrival process with rate $\lambda$, the decoupled user's queue model is presented in Figure \ref{fig-Time Dependent Queue}.


     \begin{figure}[!t]
        \centering
        \begin{subfigure}[b]{0.4\textwidth}
          \centering
        \includegraphics[width=0.6\textwidth]{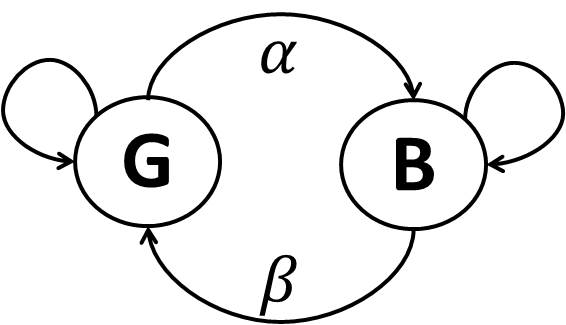}
        \caption{}
        \label{fig-GoodBadchannel}
         \end{subfigure}%
         \begin{subfigure}[b]{0.4\textwidth}
            \centering
            \includegraphics[width=0.8\textwidth]{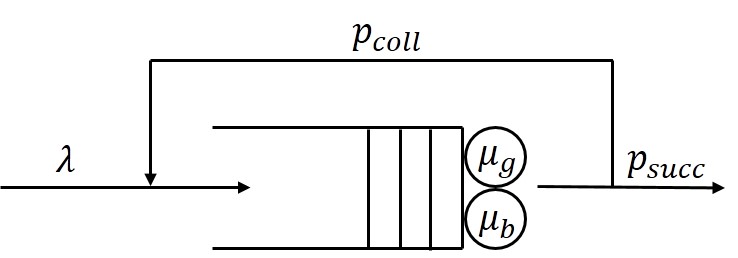}
            \caption{}
            \label{fig-Time Dependent Queue}
        \end{subfigure}%
        
        \begin{subfigure}[b]{0.7\textwidth}
            \centering
            \includegraphics[width=0.9\textwidth]{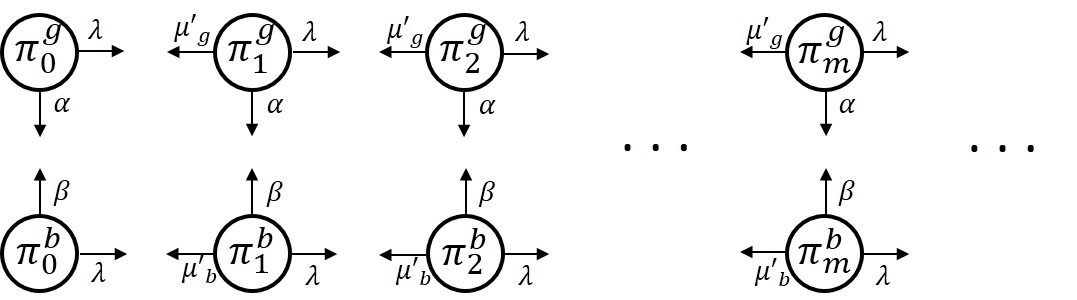}
            \caption{}
            \label{fig-Time Dependent Queue Markov chain}
        \end{subfigure}
        \caption{The Models for the approximation using constant collision probability for time dependent channel. (a) A Good-Bad channel model according to \cite{gilbert1960}. (b) The queue diagram for a user with a time dependent channel and a constant probability for collision. (c) The queue Markov Chain Model for a time dependent user.}
    \end{figure}

        

    This time-dependent queue model can be represented as a continuous Markov process on the set of states $\{\pi^i_m\}$ for $i\in\{b,g\}$, which indicates the Good or Bad state, and $m=0,1,2...$ the number of packets in the queue. This two dimensional Markov chain is presented in Figure \ref{fig-Time Dependent Queue Markov chain}. To ease notation, we denote $\mu_g\cdot p_{succ}=\mu'_g$ and $\mu_b\cdot p_{succ}=\mu'_b$.
    

    In \cite{yechiali1971queuing}, the authors studied a modification of the $M/M/1$ queuing model in which the rate of arrival and the service capacity are subject to Poisson alternations. While the model therin is different than the one here, the analysis suggested relied on a Markov chain which is similar to the one presented in Figure \ref{fig-Time Dependent Queue Markov chain}. The system is solved by using generating-function techniques, resulting in a solution for the steady state probabilities, $\{\pi^i_m\}$, as a function of the transition rate parameters and the root of a third degree polynomial $g(z)$ (only one solution exists under the assumptions). We rely on the solution presented in \cite{yechiali1971queuing}. Next, we resolve the steady state distribution of the chain.
    
    Let $\hat{\mu}$ denotes the average service rate.
    \begin{equation*}
            \hat{\mu}=\pi_g \mu'_g+\pi_b\mu'_b, \quad \text{where,} \quad \pi_g=\frac{\beta}{\alpha+\beta},\quad  \pi_b=\frac{\alpha}{\alpha+\beta}.
    \end{equation*}
  Note that in order to maintain stability, it is required that $\hat{\mu}>\lambda$. We define the partial generating functions of the system as:

    \begin{equation*}
      G_i(z)=\sum^{\infty}_{m=0}\pi^i_m z^m \quad \quad \mid z \mid \leq 1, \ i=g,b.
    \end{equation*}
    From~\cite{yechiali1971queuing}, we have
    \begin{equation*}
      \begin{aligned}
            G_g(z)&=\left( \beta(\hat{\mu}-\lambda)z+\pi^g_0\mu'_g(1-z)(\lambda z-\mu'_b) \right)/g(z) \\
            G_b(z)&=\left( \alpha(\hat{\mu}-\lambda)z+\pi^b_0\mu'_b(1-z)(\lambda z-\mu'_g) \right)/g(z),
        \end{aligned}
    \end{equation*}
    where $g(z)$ is
    \begin{equation}\label{equ-third degree polynomial}
        \begin{aligned}
            g(z)=&\lambda^2z^3-(\alpha\lambda+\beta\lambda+\lambda^2+\lambda\mu'_b+\lambda\mu'_g)z^2  \\
            &+(\alpha\mu'_b+\beta\mu'_g+\mu'_g\mu'_b+\lambda\mu'_b+\lambda\mu'_g)z-\mu'_g\mu'_b.
        \end{aligned}
    \end{equation}
    The steady state probabilities for an empty queue, depending on the channel states, are
    \begin{equation}
            \pi^g_0=\frac{\beta(\hat{\mu}-\lambda)z_0}{\mu'_g(1-z_0)(\mu'_b-\lambda z_0)}, \ \
            \pi^b_0=\frac{\alpha(\hat{\mu}-\lambda)z_0}{\mu'_b(1-z_0)(\mu'_g-\lambda z_0)}
    \end{equation}
    where $z_0$ is the root of the polynomial $g(z)$.
    The remaining steady state probabilities are as follows
    \begin{equation}
        \begin{aligned}
            \pi^g_m&= \pi^g_{m-1}\frac{\lambda}{\mu'_g}+\sum^{m-1}_{j=0}\pi^g_j\frac{\alpha}{\mu'_g}- \sum^{m-1}_{j=0}\pi^b_j\frac{\beta}{\mu'_g} \quad \quad m>0 \\
            \pi^b_m&= \pi^b_{m-1}\frac{\lambda}{\mu'_b}+\sum^{m-1}_{j=0}\pi^b_j\frac{\beta}{\mu'_b} - \sum^{m-1}_{j=0}\pi^g_j\frac{\alpha}{\mu'_b} \quad \quad m>0.
        \end{aligned}
    \end{equation}
    From the first derivative of the partial generating functions, we can attain the expected queue size, which includes the head of line packet. Unlike the analysis in subsection \ref{Approximate model 1}, now
    
    \ifdouble  \small \fi
    \begin{equation*}
      \overline{Q}=\frac{\lambda}{\hat{\mu}-\lambda}+\frac{\mu'_g(\mu'_b-\lambda)\pi^g_0+
      \mu'_b(\mu'_g-\lambda)\pi^b_0-(\mu'_g-\lambda)(\mu'_b-\lambda)}{(\alpha+\beta)(\hat{\mu}-\lambda)}.
    \end{equation*}
    \ifdouble  \normalsize \fi
    Using Little's theorem, one can attain the average waiting time in the queue, $ W=\overline{Q}/\lambda.$
    
    Note that these aforementioned results rely on $\mu'_b$ and $\mu'_g$ which rely on $p_{succ}$ which is assumed to be fixed, yet is unknown and needs to be computed. Hence, in the same manner as in the previous subsection, we define the probability that a specific user attempts transmission, i.e., exceeds the threshold and its queue is not empty, to be
    
   \begin{equation}\label{equ-probability for attempt transmission}
    \begin{aligned}
      P_t &\triangleq Pr(\text{transmission attempt})\\
      &=Pr(\text{exceedance occur \& queue is not empty})\\
      &=(G_g(1)-\pi^g_0)(1-e^{-\mu_g})+(G_b(1)-\pi^b_0)(1-e^{-\mu_b}),
    \end{aligned}
    \end{equation}
    where the first term consists of the probability to be in a Good state with a packet to transmit, times the probability which an exceedance occurs while existing in the Good state; The second term is similar, referring only to the Bad state.
The probability for success, given one user is about to transmit, can be obtained by a similar calculation as in Lemma \ref{lem-Transmission success probability satisfies the equation - Memoryless arrival}, and we have
    \begin{equation}\label{equ-probability for sucss-third model}
      p_{succ}=(1-P_t)^{K-1}.
    \end{equation}
    
    Since the root $z_0$ and $p_{succ}$ are coupled, \eqref{equ-probability for sucss-third model} and the roots of the third degree polynomial \eqref{equ-third degree polynomial} must be solved simultaneously in order to compute them both. 
    
       

    In order to evaluate the approximation, we ran a set of simulation. In the results presented here, the transition rates between Good and Bad states are set equal. Specifically, $\alpha=\beta=0.1$. The users were considered homogenous. The arrival and service rate parameters, presented in the figures, depict the total rates of the system, and were divided equally among all the users. The service rates were set to $\mu_g=0.7$ and $\mu_b=0.5$ for Good and Bad channel quality, respectively. 
Figure \ref{fig-QueuingPerformance_time dependent_P_succ} depicts the probability for success, $p_{succ}$ vs.\ the number of users for two different arrival rates, $\lambda_T=0.1$ and $\lambda_T=0.3$. The figure clearly depicts that the estimated probability for success $p_{succ}$, coincides with the simulation results for both arrival rates. Figure~\ref{fig-QueuingPerformance_time dependent} depicts the mean queue size and the mean sojourn time vs.\ the number of users for $\lambda_T=0.3$, which also shows good agreement between the approximated analytical results and the simulation results.

\ifdouble

\begin{figure}[t!]
        \centering
        \begin{subfigure}[b]{0.45\textwidth}
        \center{
                \includegraphics[width=\textwidth,height=4cm,keepaspectratio]{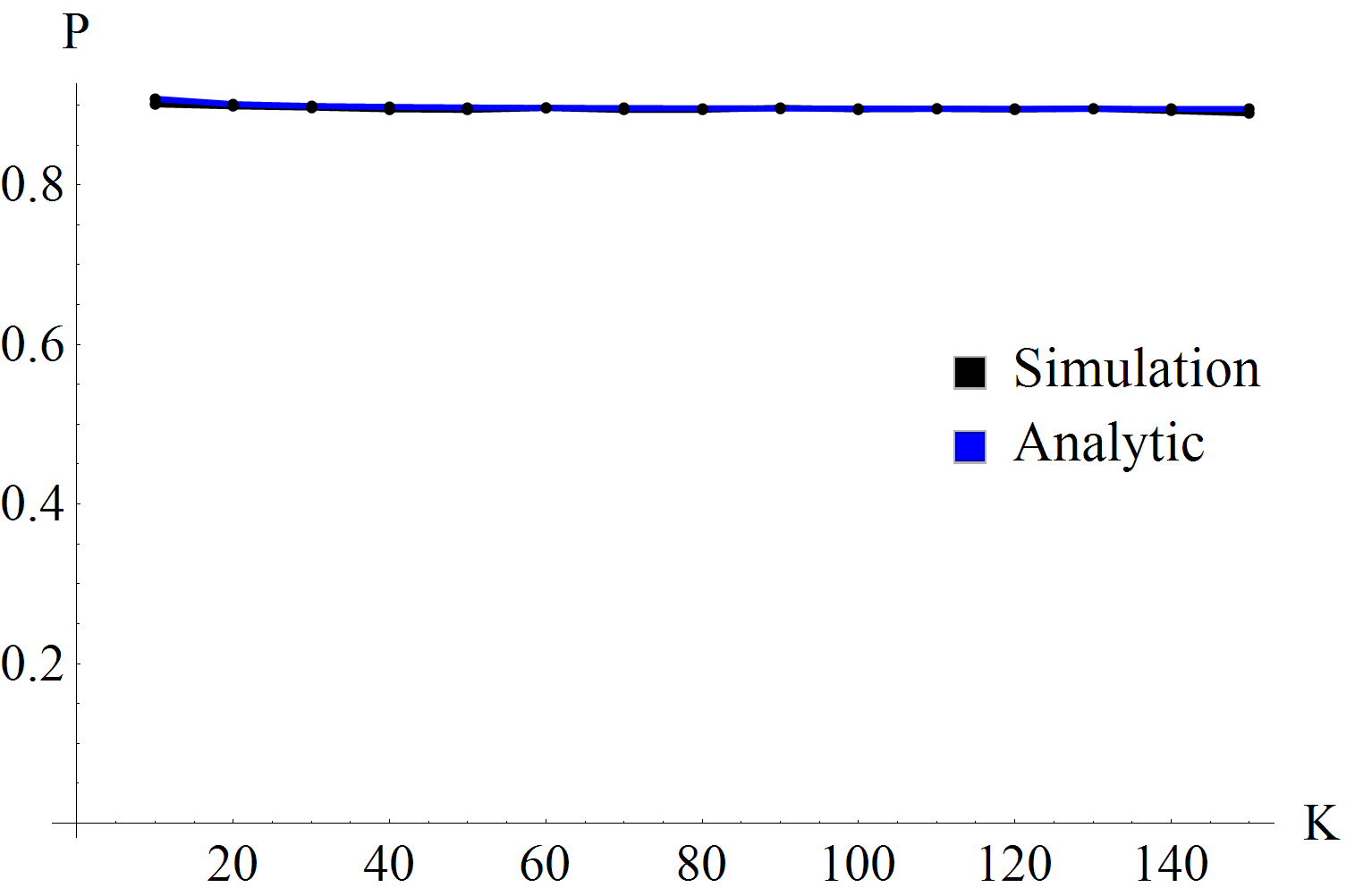}
                \caption{}}
                \label{fig-P_succ_L=0.1}
        \end{subfigure}%
        \begin{subfigure}[b]{0.45\textwidth}
                \center{
                \includegraphics[width=\textwidth,height=4cm,keepaspectratio]{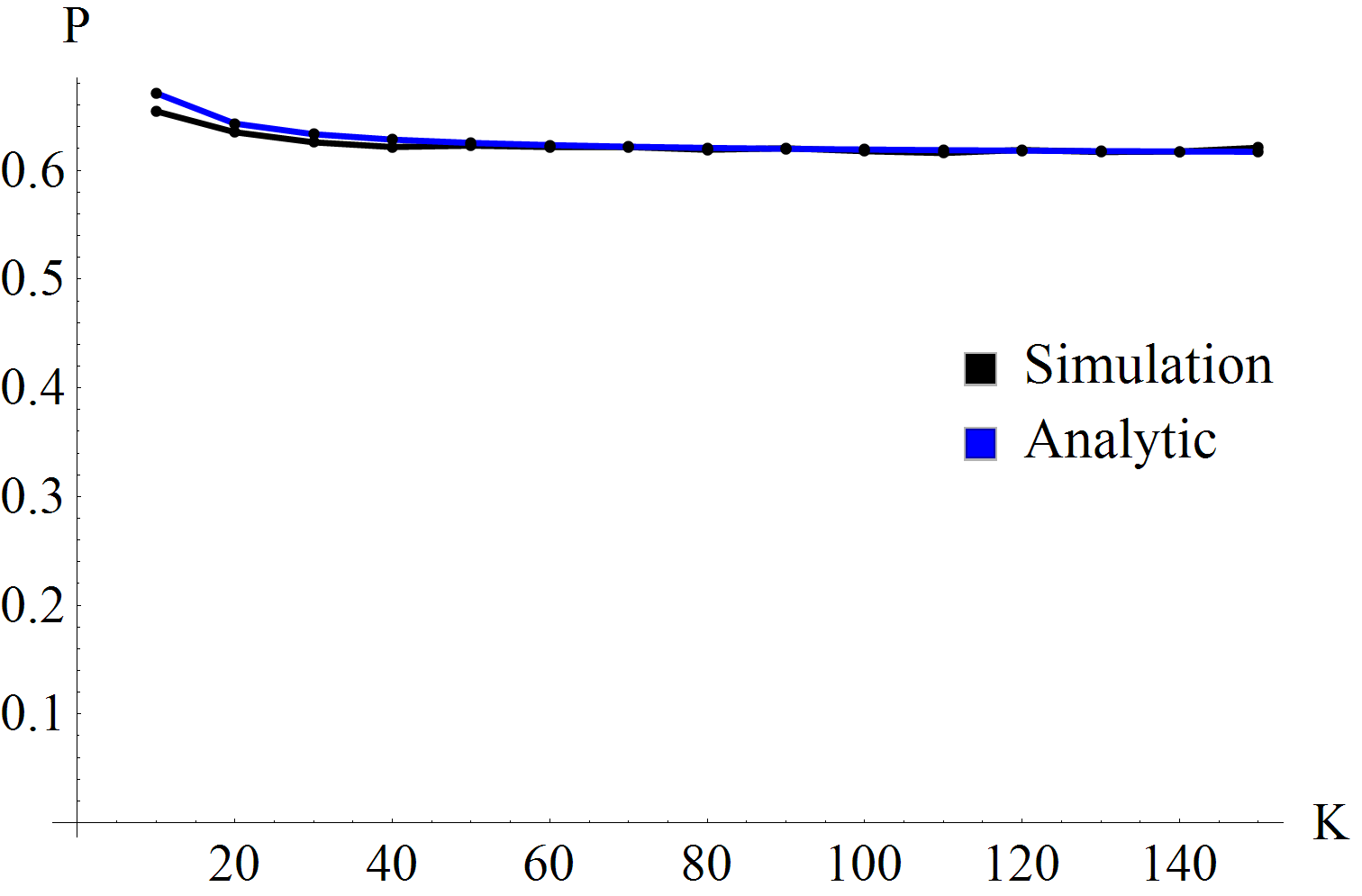}
                \caption{}}
                \label{fig-P_succ_L=0.3}
        \end{subfigure}
        \caption[Success probability time dependent model]{Success probability for the approximation using constant collision probability for time dependent channel, as given in \eqref{equ-probability for sucss-third model}, compared to simulation results of a time dependent queueing system, according to the Good-Bad channel model, as a function of the number of users. Where $\mu_g=0.7,\mu_b=0.5$ for (a)$\lambda_T=0.1$ (b)$\lambda_T=0.3$.}
        \label{fig-QueuingPerformance_time dependent_P_succ}
    \end{figure}
    
 \begin{figure}[!h]
        \centering
        \begin{subfigure}[b]{0.45\textwidth}
                \centering
                \includegraphics[width=\textwidth]{gfx/pdf/QueueingSimRes/timeDepenendetModel/MeanQueueSize_L_03_mug_07_mub_05_K_150}
                \caption{}
                \label{fig-MeanQueueSize_L=0.3}
        \end{subfigure}%
        \begin{subfigure}[b]{0.45\textwidth}
                \includegraphics[width=\textwidth]{gfx/pdf/QueueingSimRes/timeDepenendetModel/TimeLine_L_03_mug_07_mub_05_K_150}
                \caption{}
                \label{fig-TimeInLine_L=0.3}
        \end{subfigure}
        \caption[System performance time dependent model]{Performance metrics according to the analytic derivation of the approximation using constant collision probability for time dependent channel compared to simulation results of a time dependent queueing system, according to the Good-Bad channel model, as a function of the number of users. Where $\mu_g=0.7,\mu_b=0.5$ and $\lambda_T=0.3$. Where (a) mean queue size and (b) time in queue.}
        \label{fig-QueuingPerformance_time dependent}
    \end{figure}
    
    \else
	\begin{figure*}[t!]
        \centering
        \begin{subfigure}[b]{0.45\textwidth}
        \center{
                \includegraphics[width=\textwidth,height=4cm,keepaspectratio]{p_succ_L_01_mug_07_mub_05_K_150}
                \caption{}}
                \label{fig-P_succ_L=0.1}
        \end{subfigure}%
        \begin{subfigure}[b]{0.45\textwidth}
                \center{
                \includegraphics[width=\textwidth,height=4cm,keepaspectratio]{p_succ_L_03_mug_07_mub_05_K_150}
                \caption{}}
                \label{fig-P_succ_L=0.3}
        \end{subfigure}
        \caption[Success probability time dependent model]{Success probability for the approximation using constant collision probability for time dependent channel, as given in \eqref{equ-probability for sucss-third model}, compared to simulation results of a time dependent queueing system, according to the Good-Bad channel model, as a function of the number of users. Where $\mu_g=0.7,\mu_b=0.5$ for (a)$\lambda_T=0.1$ (b)$\lambda_T=0.3$.}
        \label{fig-QueuingPerformance_time dependent_P_succ}
    \end{figure*}

    \begin{figure*}[!t]
        \centering
        \begin{subfigure}[b]{0.45\textwidth}
                \centering
                \includegraphics[width=\textwidth]{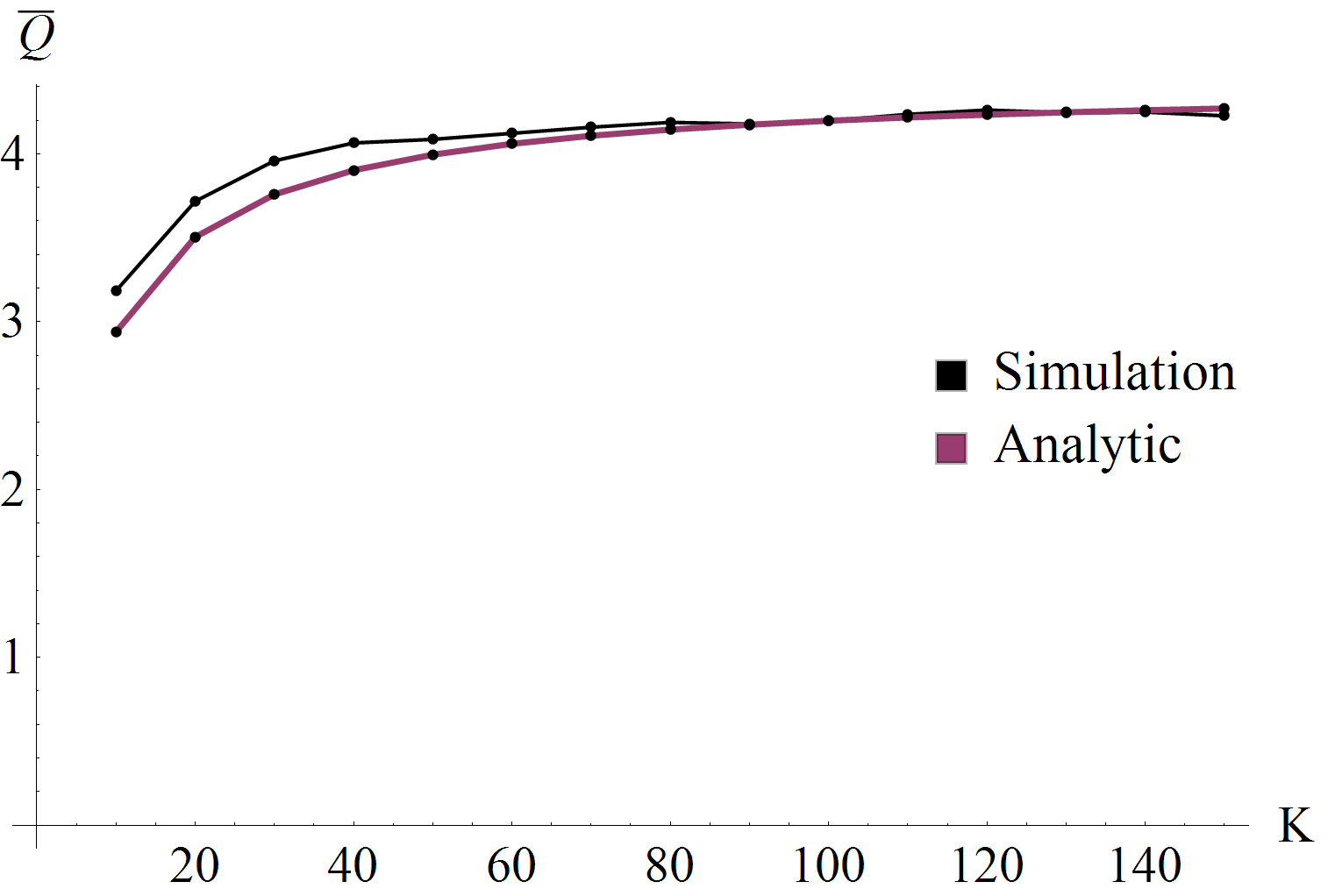}
                \caption{}
                \label{fig-MeanQueueSize_L=0.3}
        \end{subfigure}
        \begin{subfigure}[b]{0.45\textwidth}
                \includegraphics[width=\textwidth]{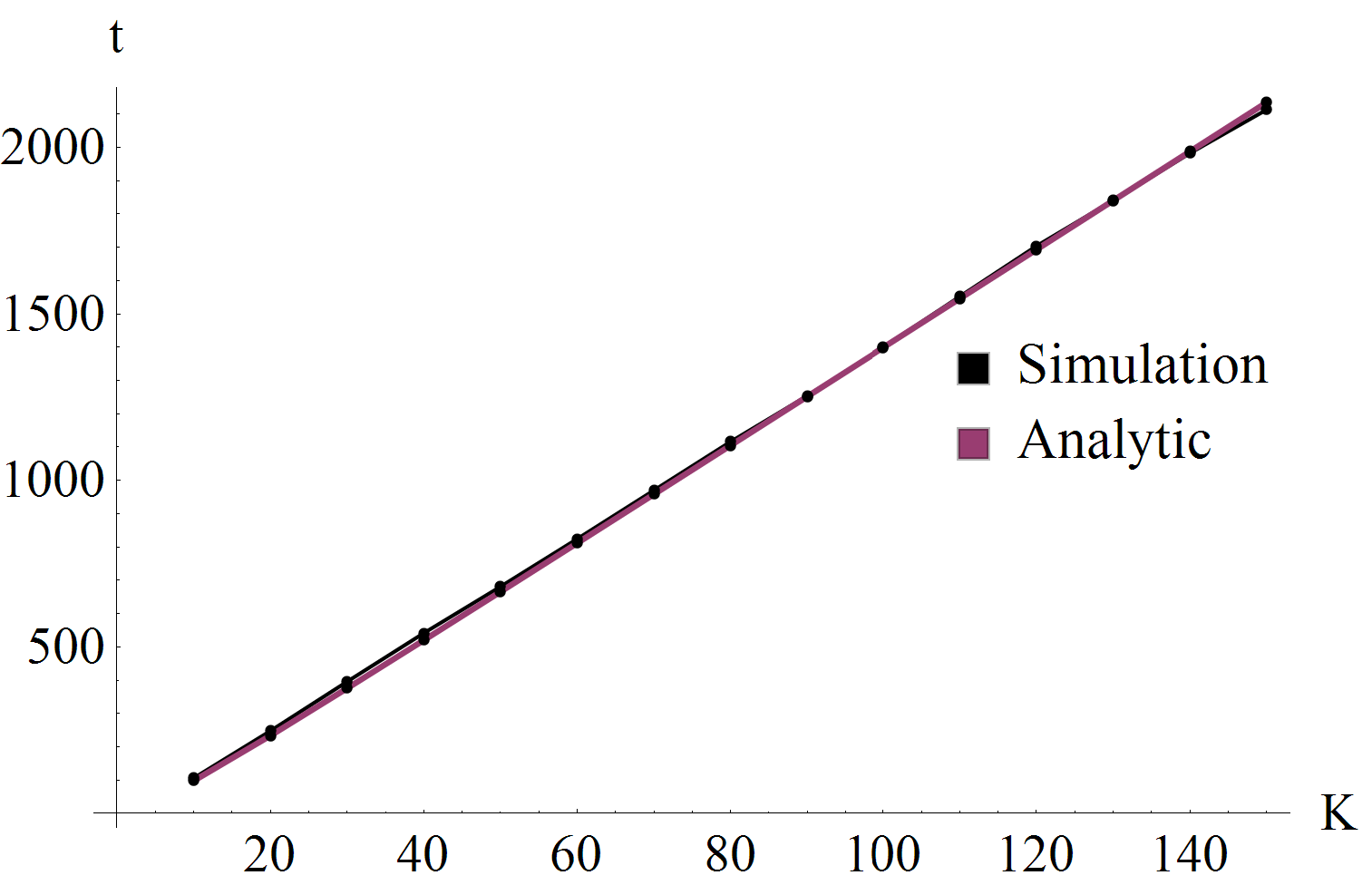}
                \caption{}
                \label{fig-TimeInLine_L=0.3}
        \end{subfigure}
        \caption[System performance time dependent model]{Performance metrics according to the analytic derivation of the approximation using constant collision probability for time dependent channel compared to simulation results of a time dependent queueing system, according to the Good-Bad channel model, as a function of the number of users. Where $\mu_g=0.7,\mu_b=0.5$ and $\lambda_T=0.3$. Where (a) mean queue size and (b) time in queue.}
        \label{fig-QueuingPerformance_time dependent}
    \end{figure*}
\fi

\section{Scaling Law Under Time Dependent Channel}\label{sec-Scaling Law Under Time Dependent Channel}
    
    Up until now, we explored the performance of the MAC system for time independent and time dependent channel scenarios. We considered a distributed threshold based scheduling algorithm, which ensures that when an exceedance occurs, the transmitting user transmits with high channel capacity. This is due to the high threshold value which is set such that only one user will exceed it on average. As mentioned, this scheme exploit multi-user diversity, i.e., let the best user utilize the channel. The scaling laws for independent channels were studied in \cite{qin2003exploiting,qin2006distributed,kampeas2014capacity}. However, scaling law of the channel capacity for time dependent channels, to the best of our knowledge, was not considered yet.  Hence, in this section, we derive the scaling laws of the channel capacity for our MAC system under the Good-Bad channel model described earlier. We note here that, for this analysis, we consider the users to be backlogged at all time. 
    
    We start by formulating the problem and analysing the scaling laws under a centrelized scheduling algorithm, where the base station chooses the strongest user for transmission using the channel state information (CSI) sent to it from the users. We then return to for the distributed algorithm and analyse it as well.    \\

    In order to exploit user diversity, the user with the best channel capacity must utilize the channel. Therefore, the problem is in finding the distribution of the random variable $\widetilde{M_K}$, the maximum capacity in a time dependent channel,  defined as:
    \begin{equation}\label{equ-CapacityDefinition}
            \widetilde{M_K}=\max\{C_1(n),C_2(n),...,C_K(n)\},
    \end{equation}
    where the capacity $C_i(n)$ of the $i$-th user in each slot is determined by the Good-Bad Markov process $\{J(n)\}$:
    \begin{equation}\label{equ-UserCapacityProcess}
    C_i(n)=
            \begin{cases}
                N_g    & \text{when  } J_i(n)=Good \\
                N_b    & \text{when  } J_i(n)=Bad.
            \end{cases}
    \end{equation}
    
$N_g$ and $N_b$ are random variables distributed normally with parameters $(\mu_g,\sigma_g)$ and $(\mu_b,\sigma_b)$, respectively. That is, we assume that in each slot, each user exists in either a Good or a Bad state, distinguished by different parameters of the Gaussian distribution which models the capacity. This is due to the Gaussian approximation for the MIMO\footnote{The assumption of a MIMO channel is used only to have a concrete expression for the capacity, with a reasonable approximation (in this case, as Gaussian random variable; see e.g., \cite{shmuel2014capacity}).} channel capacity \cite{smith2002gaussian,chiani2003capacity}. The parameters reflect the differences in the channel qualities, in a way that a good channel parameters maintain $(1) \ \sigma_g>\sigma_b, \ \mu_g,\mu_b\in\mathds{R}$ or $(2) \ \sigma_g=\sigma_b$ and $\mu_g>\mu_b$.\\

    We are interested in the expected channel capacity $E[\widetilde{M_K}]$ at the limit of large $K$. Here, we first use results from EVT, as well as results concerning time dependent processes in order to evaluate the limit distribution of the maximal value when we consider centralized scheduling. Then, when considering distributed scheduling, we use results  from PPA  in order to analyse threshold arrival rates and tail distributions.

    \subsection{Centralized Scheduling}
    Let us consider the distribution of each $C_i(n)$, as defined in \eqref{equ-UserCapacityProcess}, which is determined by the Good-Bad Markov process $\{J(n)\}$. The stationary distribution of the chain is $(\frac{\beta}{\alpha+\beta},\frac{\alpha}{\alpha+\beta})$, which we will denote in short as $(p,q)$. We have
    
    \ifdouble
    \small
    \begin{equation}\label{equ-stationary distribition of C_i(n)}
    \begin{aligned}
                F(x)\eqdef &P(C_i(n)\leq x)= P(N_g(n)\leq x \mid J(n)=G )\\
                 & \qquad \cdot P(J(n)=G)+P(N_b(n)\leq x \mid J(n)=B )P(J(n)=B)\\
                = &pF_g(x)+qF_b(x),
     \end{aligned}
    \end{equation}
    \normalsize
    \else
   \begin{equation}\label{equ-stationary distribition of C_i(n)}
    \begin{aligned}
                F(x)\eqdef &P(C_i(n)\leq x)= P(N_g(n)\leq x \mid J(n)=G )\\
                 & \qquad \cdot P(J(n)=G)+P(N_b(n)\leq x \mid J(n)=B )P(J(n)=B)\\
                = &pF_g(x)+qF_b(x),
     \end{aligned}
    \end{equation}
    \fi
    where $F_g(x)$ and $F_b(x)$ are Gaussian distributions with parameters $(\mu_g,\sigma_g)$ and $(\mu_b,\sigma_b)$, respectively. The distribution of the maximal value is in the form of
    \begin{equation*}
            P(\widetilde{M_K}\leq x)= P(C_1(n)\leq x,...,C_K(n)\leq x)=F^K(x),
    \end{equation*}
    due to the independence \emph{between the users}. We wish to test the behaviour of $F^K(x)$ as $K \rightarrow \infty$.
	Note that this means we actually use the stationary distribution as the marginal distribution. 

    
    Our main result in this context is the following.
    \begin{theorem}\label{thm-Capacity distribution in a time dependent channel convergence to a Gumbel}
            Let $(C_1,...,C_K)$ be the sequence of the users' capacities in a certain time slot where each capacity has a distribution $F(x)$ as given in \eqref{equ-stationary distribition of C_i(n)}. Then, the asymptotic distribution of the scheduled user's capacity, i.e., $\widetilde{M_K}=\max\{C_1,...,C_K\}$, is a Gumbel distribution. Specifically,
            \begin{equation}\label{equ-Capacity distribution of theorem}
                    P\{a_K(\widetilde{M_K}-b_K)\leq x\}\rightarrow e^{-e^{-x}}
            \end{equation}
            where
            
            \ifdouble
            \small
             \begin{equation}
             	\begin{aligned}
                    a_K&= \frac{\sqrt{2\log{K}}}{\sigma_g}, \\
                    b_K&= \sigma_g\left(\sqrt{2\log K}-\frac{\log{\log K}+\log{\frac{4\pi}{p^2}}}{2\sqrt{2\log K}}\right)+\mu_g.
            	\end{aligned}
            \end{equation}
            \normalsize
            \else
            \begin{equation}
                    a_K= \frac{\sqrt{2\log{K}}}{\sigma_g}, \ \ b_K= \sigma_g\left(\sqrt{2\log K}-\frac{\log{\log K}+\log{\frac{4\pi}{p^2}}}{2\sqrt{2\log K}}\right)+\mu_g.
            \end{equation}
            \fi
            Therefore, the expected throughput of the transmitting user is 
            \ifdouble
    \small
    \begin{multline}\label{equ-Expected channel capacity (stationary distribution)}
                E[\widetilde{M_K}]= b_K+\frac{\gamma}{a_K}=\\
                \sigma_g\left(\left(\sqrt{2\log K}-\frac{\log{\log K}+\log{\frac{4\pi}{p^2}}}{2\sqrt{2\log K}}\right)+\frac{\gamma}{\sqrt{2\log{K}}}\right) +\mu_g,
    \end{multline}
    \normalsize
    \else
    \begin{equation}\label{equ-Expected channel capacity (stationary distribution)}
                E[\widetilde{M_K}]= b_K+\frac{\gamma}{a_K}=\sigma_g\left(\left(\sqrt{2\log K}-\frac{\log{\log K}+\log{\frac{4\pi}{p^2}}}{2\sqrt{2\log K}}\right)+\frac{\gamma}{\sqrt{2\log{K}}}\right) +\mu_g,
    \end{equation}
    \fi
    where $\gamma=0.57721$ is the Euler-Mascheroni constant.
    \end{theorem}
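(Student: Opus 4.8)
The plan is to reduce the statement to the classical extreme-value behaviour of a single Gaussian by observing that in the mixture $F = pF_g + qF_b$ the heavier upper tail belongs to the ``good'' component, so that $F$ is \emph{tail-equivalent} to the $N(\mu_g,\sigma_g^2)$ law with constant $p$; the normalising sequences and the mean then follow from known formulas.

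\emph{Step 1 (tail equivalence).} Write $1-F(x) = p(1-F_g(x)) + q(1-F_b(x))$ and show $\frac{1-F_b(x)}{1-F_g(x)}\to 0$ as $x\to\infty$. By Mills' ratio, $1-\Phi(t)\sim \phi(t)/t$, so up to a ratio of polynomially-bounded prefactors the quantity is $\exp\!\left(\frac{(x-\mu_g)^2}{2\sigma_g^2}-\frac{(x-\mu_b)^2}{2\sigma_b^2}\right)$. Under condition~(1) the coefficient of $x^2$ in the exponent is $\tfrac12(\sigma_g^{-2}-\sigma_b^{-2})<0$; under condition~(2), $\sigma_g=\sigma_b$, the exponent equals $\frac{(\mu_b-\mu_g)(2x-\mu_g-\mu_b)}{2\sigma_g^2}\to-\infty$ because $\mu_g>\mu_b$. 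In both cases the ratio vanishes, hence $1-F(x)\sim p\,(1-F_g(x))$.

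\emph{Step 2 (limit law and constants).} Recall the classical Gaussian result: for $F_g=\Phi((\cdot-\mu_g)/\sigma_g)$ one has $K\bigl(1-F_g(b_K^{(g)}+x/a_K)\bigr)\to e^{-x}$ with $a_K=\sqrt{2\log K}/\sigma_g$ and $b_K^{(g)}=\sigma_g\bigl(\sqrt{2\log K}-(\log\log K+\log 4\pi)/(2\sqrt{2\log K})\bigr)+\mu_g$, so $F_g^K(b_K^{(g)}+x/a_K)\to e^{-e^{-x}}$. Since $b_K^{(g)}+x/a_K\to\infty$ for fixed $x$, Step~1 gives $K\bigl(1-F(b_K^{(g)}+x/a_K)\bigr)\to p\,e^{-x}=e^{-(x-\log p)}$; the substitution $x\mapsto x+\log p$ turns this into $K\bigl(1-F(b_K+x/a_K)\bigr)\to e^{-x}$ with $b_K=b_K^{(g)}+\log p/a_K=b_K^{(g)}+\sigma_g\log p/\sqrt{2\log K}$. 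Collecting the $1/\sqrt{2\log K}$-terms replaces $\log 4\pi$ by $\log 4\pi-2\log p=\log(4\pi/p^2)$, which is precisely the claimed $b_K$; therefore $F^K(b_K+x/a_K)=P\{a_K(\widetilde{M_K}-b_K)\le x\}\to e^{-e^{-x}}$. Note that, at a fixed slot in stationarity, each $C_i$ has marginal $F$ and the users are independent, so this is a genuine i.i.d.\ maximum and no mixing/dependence conditions are needed here.

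\emph{Step 3 (expected throughput).} The standard Gumbel law $e^{-e^{-x}}$ has mean $\gamma$. Upgrading the distributional limit to convergence of first moments requires uniform integrability of $\{a_K(\widetilde{M_K}-b_K)\}_K$, which follows from the standard exponential-type tail bounds on the normalised Gaussian maximum together with the tail estimate of Step~1; hence $E[\widetilde{M_K}]=b_K+\gamma/a_K+o(1/\sqrt{\log K})$, which is \eqref{equ-Expected channel capacity (stationary distribution)}. I expect the main obstacle to be making the tail-ratio estimate of Step~1 quantitative enough to feed \emph{both} the distributional limit (where uniform control of $(1-F)/(1-F_g)$ near $b_K+x/a_K$ is needed) \emph{and} the moment limit via uniform integrability; the conceptual reduction to the Gaussian case is the crux, and once it is in place the identification of $a_K$, $b_K$ and the $\gamma/a_K$ term is bookkeeping with the known Gaussian normalising sequences.
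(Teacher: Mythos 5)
Your proposal is correct, and its engine is the same observation the paper ultimately relies on: the upper tail of the mixture $F=pF_g+qF_b$ is dominated by the good component, so $1-F(x)\sim p\,(1-F_g(x))$ under either parameter condition. Where you differ is in organization and economy. The paper's Appendix C first verifies the von Mises sufficient condition $\lim_{t}\frac{f'(t)(1-F(t))}{f^2(t)}=-1$ through a long computation with upper and lower bounds on the complementary error function (split into four limits and squeezed), then separately checks the necessary-and-sufficient condition with $g(t)=\sigma_g^2/(t-\mu_g)$, and finally re-derives $a_K,b_K$ from scratch by solving $K(1-F(u_K))\to e^{-x}$. You bypass all of that: tail equivalence plus the known Gaussian normalizing sequences give $K\bigl(1-F(b_K^{(g)}+x/a_K)\bigr)\to p\,e^{-x}$, and the substitution $x\mapsto x+\log p$ absorbs the constant into $b_K$, turning $\log 4\pi$ into $\log(4\pi/p^2)$ exactly as claimed. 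This is a cleaner and shorter route to the same constants, and it makes transparent why only $p$, $\mu_g$, $\sigma_g$ appear in the answer. Two further remarks. First, your note that the across-user maximum at a fixed slot is a genuine i.i.d.\ maximum (so no mixing conditions are needed) is correct and matches the paper, whose invocation of chain-dependent-process results concerns only the within-user time dependence. Second, your Step 3 is actually more careful than the paper: the paper asserts $E[\widetilde{M_K}]=b_K+\gamma/a_K$ directly from the distributional limit, whereas you correctly flag that convergence of the first moment requires uniform integrability of the normalized maxima; this does hold by standard Gaussian tail bounds combined with your tail-equivalence estimate, so the conclusion stands, but the paper leaves that step implicit.
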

    
    The proof of Theorem \ref{thm-Capacity distribution in a time dependent channel convergence to a Gumbel} is based on results we obtained on EVT for stationary processes. The complete proof of Theorem \ref{thm-Capacity distribution in a time dependent channel convergence to a Gumbel} is given in Appendix \ref{Appendix C}.
    

\begin{figure}[t]
                \centering
                \includegraphics[width=0.6\textwidth]{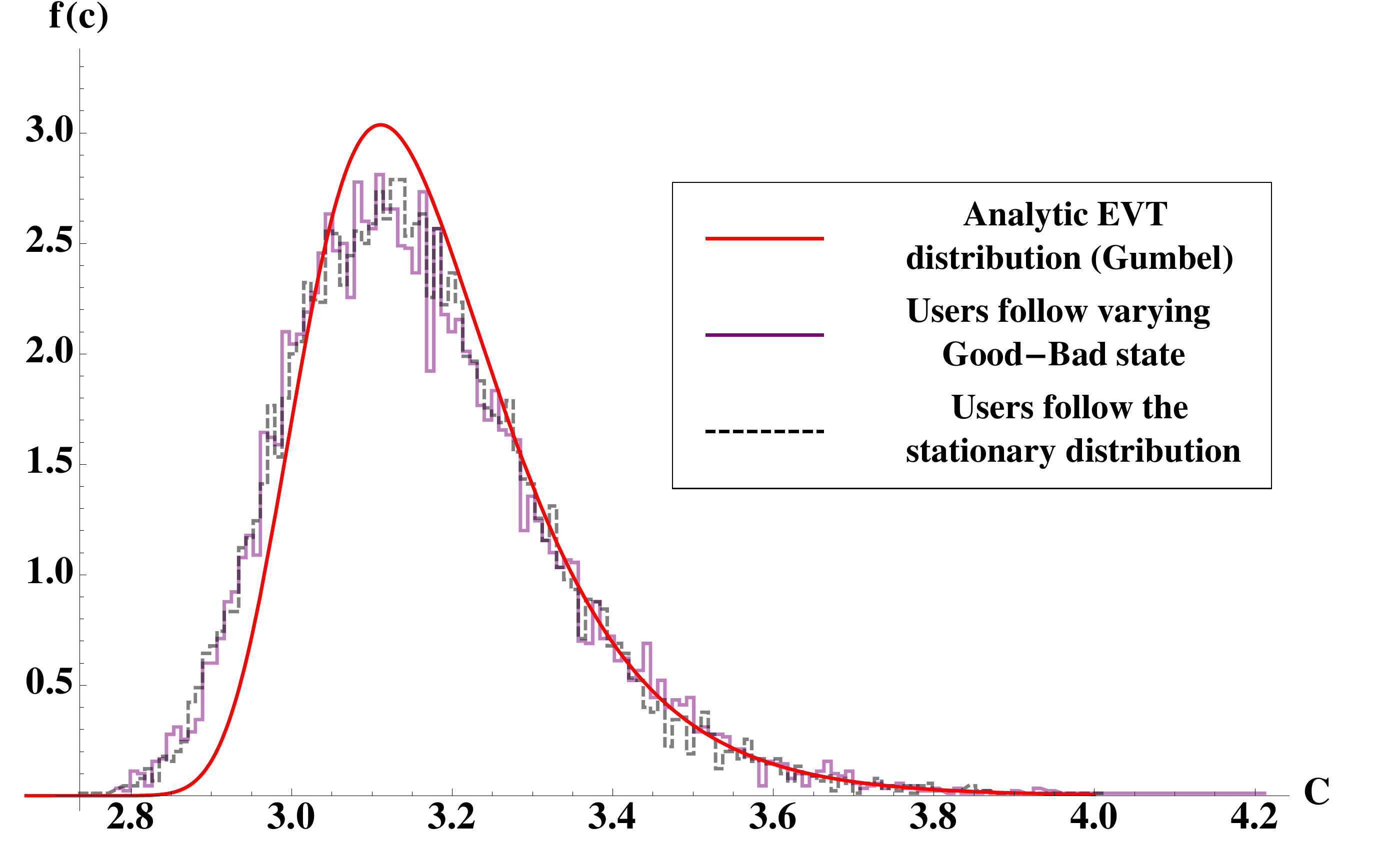}          
        \caption[Simulation for capacity distribution]{The maximal capacity distribution where the capacities were drawn according to the stationary distribution (dashed line), according to the a varying Good-Bad process for each user (solid line) compared to the corresponding Gumbel density with the constants $a_K$ and $b_K$ (red line). For 5000 users.}
        \label{fig-MaxialCapacityDistributionForTimeDependentChannel_5000}
\end{figure}

    Note that the proof of Theorem \ref{thm-Capacity distribution in a time dependent channel convergence to a Gumbel} is based on the assumption that $p,q \neq 0$ which implicitly implies that $p$ and $q$ are bounded away from zero. Nevertheless, note that the expected throughput of the transmitting user when setting $p=1$, which implies that all users have a Good channel state, agrees with the one presented in \cite{kampeas2014capacity}. Simulation results for the capacity distribution are given in Figure \ref{fig-MaxialCapacityDistributionForTimeDependentChannel_5000}. The figure clearly depicts a match between the analysis and simulation results.

    The probability $p$, which is the stationary probability to exist in a Good state, governs the average number of users which are in Good state. That is, in each time slot, one can distinguish between two groups of users: the users that are in a Good state and the users that are in Bad state.  It is easy to show that the number of good users, on average, is $pK$. Hence, as $p$ grows, the expected capacity grows since there are more users in a Good state. This is clear from analytical result in \eqref{equ-Expected channel capacity (stationary distribution)} for the channel capacity as well as in Figure \ref{fig-Capacity comparison as Function of p}.

    One may wonder why considering the bad group at all, meaning, why should a user which is in a Bad channel state be taken into account in the scheduling decision process. Leaving only the users with the good channel to compete for the channel, the capacity with $K$ sufficiently large is
    \begin{equation}\label{equ-CapacityExpressionOnlyGood}
        E[\widetilde{M_{pK}}] = \sigma_g\sqrt{(2\log pK)}+\mu_g+\text{o}\left(\frac{1}{\sqrt{\log{pK}}}\right).
    \end{equation}

    Figure \ref{fig-Capacity comparison as Function of p} depicts the influence of the bad group on the capacity. For rather small values of $p$, it is beneficial to schedule the strongest user \emph{from both groups}. As $p$ grows, the size of the good group grows as well, hence the two curves converge to the case were all the users are in Good state. Thus, for small values of $p$ the users in the \emph{Bad} state still have a significant impact, in the sense that they have a high enough probability to be the strongest and gain channel access. Figures \ref{fig-capacityGainFforTimeDependentChannel_p=0.5} and \ref{fig-capacityGainFforTimeDependentChannel_p=0.2} give a different perspective: the capacity as a function of the number of users.  In Figure \ref{fig-capacityGainFforTimeDependentChannel_p=0.2}, the difference between the capacities is noticeable.

 \begin{figure}[t]
        \centering
        \begin{subfigure}[b]{0.33\textwidth}
                \centering
                \includegraphics[width=\textwidth,height=4cm,keepaspectratio]{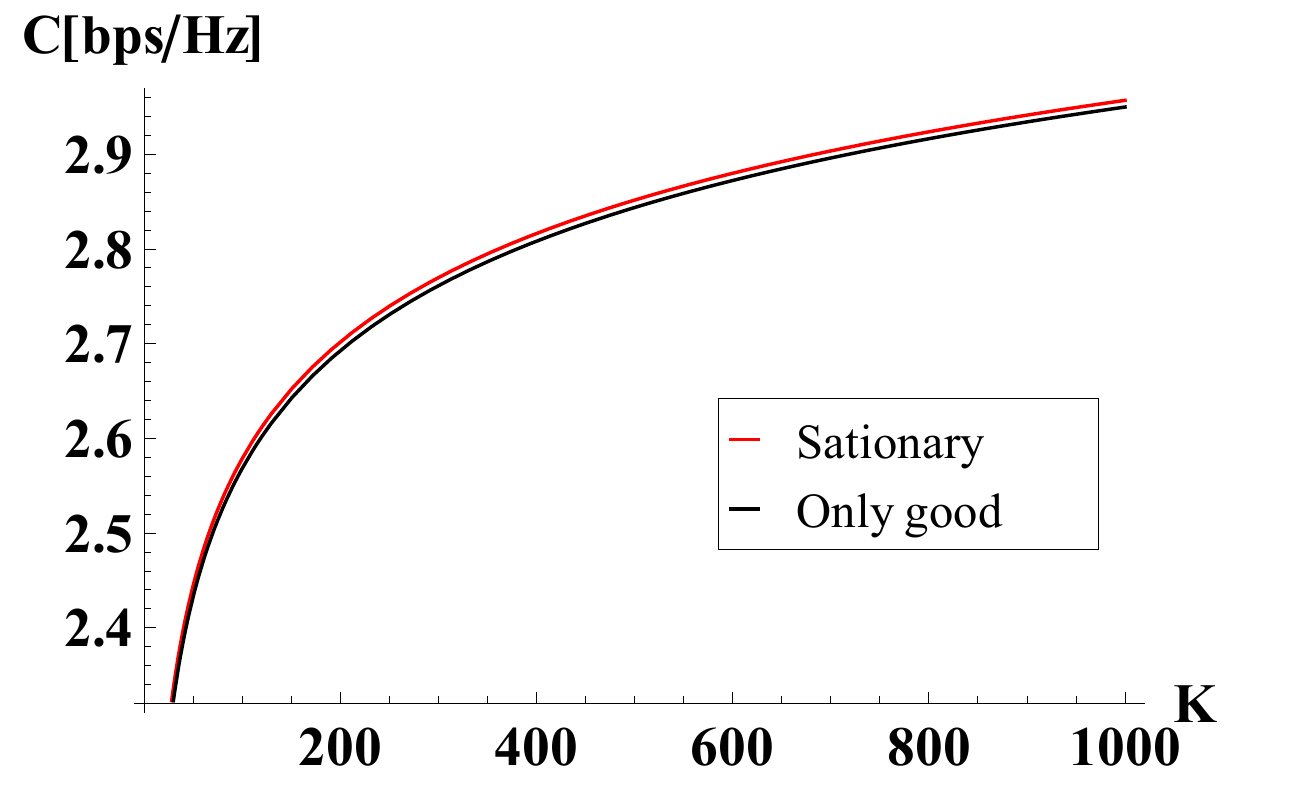}
                \caption{}
                \label{fig-capacityGainFforTimeDependentChannel_p=0.5}
        \end{subfigure}%
        \begin{subfigure}[b]{0.33\textwidth}
                \includegraphics[width=\textwidth,height=4cm,keepaspectratio]{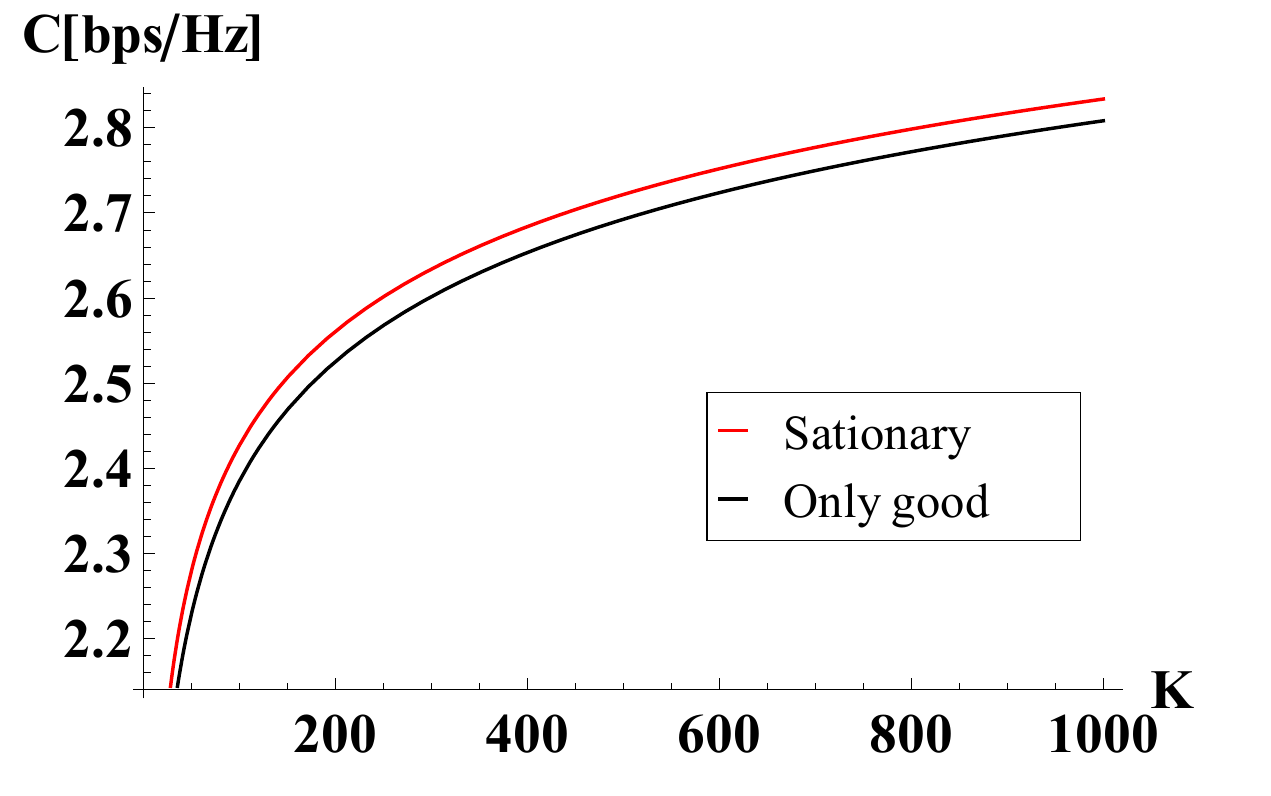}
                \caption{}
                \label{fig-capacityGainFforTimeDependentChannel_p=0.2}
        \end{subfigure}%
        \begin{subfigure}[b]{0.33\textwidth}
                \includegraphics[width=\textwidth,height=4cm,keepaspectratio]{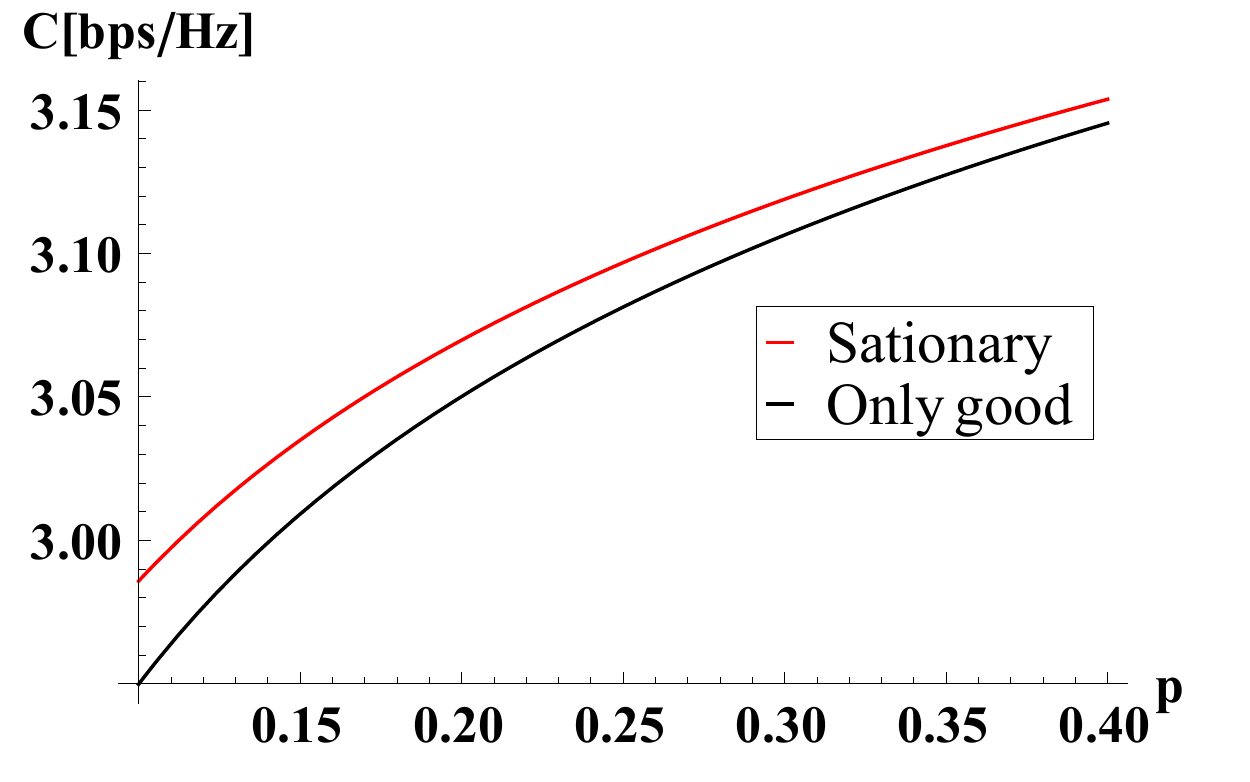}
                \caption{}
                \label{fig-Capacity comparison as Function of p}
        \end{subfigure}
        \caption[Capacity gain comparison]{Capacity comparison for choosing only from the "good" group, of size $Kp$, and the whole population (as given in \eqref{equ-Expected channel capacity (stationary distribution)}) as a function of the number of users with fixed $p$, where in (a) p=0.5 and in (b) p=0.2. And as a function of $p$, for $K=5000$ users in (c). Here $\mu_g = \sqrt{2}$ and $\sigma_g = 0.5$.}
        \label{fig-gain for time dependent channel}
    \end{figure}

\subsection{Distributed Scheduling}
	
	One of the main disadvantages in centralized scheduling is the overhead (e.g., CSI) which is needed for communicating properly. Of course for multi-user systems this drawback is more acute. Hence, distributive scheduling schemes become more attractive. One distributive approach  which has shown excellent exploitation of multi-user diversity, lets the users transmit based on their instantaneous channel condition \cite{qin2006distributed}. Specifically, given a predefined threshold, users may attempt transmission only if their channel gain exceeds it.
	When considering this distributed scheduling algorithm, we derive the expected channel capacity using PPA. This method can also be found in \cite{kampeas2014capacity} for the case of \textit{i.i.d.} and heterogeneous users. We first give a brief review on the construction of the point process and the relevant results which will be used throughout this subsection.
	
	Let $\{X_n\}$ be a sequence of standard normal \textit{i.i.d.} variables, with the Gumbel Distribution as the extreme value distribution and normalizing constants $a_n$ and $b_n$. We then define a sequence of points on $[0,1]\times \mathds{R}^2$ by:
      \begin{equation}\label{equ-sequence of point processes N_n}
          N_n=\Big\{\frac{i}{n},X_i:i=1,...,n \Big \}.
      \end{equation}
	Under the above, we have the following Theorem.
    \begin{theorem}\label{thm-Point process convergance of iid}(\cite{EVT:Springer1983})
        The sequence $N_n$ on $[0,1]\times (u,\infty)$, for some large value of $u$, converge to a non-homogeneous Poisson process $N$ with parameter $\uptau$, that is, $N_n\rightarrow N \ \ \text{as } \ n\rightarrow \infty$.
      \end{theorem}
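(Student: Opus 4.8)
The plan is to invoke the standard criterion for convergence in distribution of point processes on a locally compact, second countable space — Kallenberg's theorem — which reduces the claim to two checks on a convergence‑determining class of rectangles $R=(s,t]\times(c,d]$ with $u<c<d$: (i) the intensities converge, $E[N_n(R)]\to\Lambda(R)$, and (ii) the void probabilities converge, $P(N_n(A)=0)\to e^{-\Lambda(A)}$ for every finite disjoint union $A$ of such rectangles, where $\Lambda=\mathrm{Leb}\times\mu$ is the intensity measure of the candidate Poisson limit. Because the $X_i$ are \emph{i.i.d.}, no mixing/dependence conditions ($D$, $D'$) are required and both computations are elementary; the only genuine work is identifying $\mu$ and controlling the Gaussian tail.

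First I would fix the normalization implicit in the statement: after replacing $X_i$ by $a_n(X_i-b_n)$ (equivalently, working above the level $u_n=u/a_n+b_n$ attached to $u$), the Mills‑ratio estimate $1-\Phi(x)\sim\phi(x)/x$ together with the explicit form of $a_n,b_n$ used earlier in the paper gives $n\,P\big(a_n(X_1-b_n)>x\big)\to e^{-x}$ for every $x$, and in fact uniformly on $(u,\infty)$. Hence for $R=(s,t]\times(c,d]$,
\[
E[N_n(R)]=\big(\lfloor nt\rfloor-\lfloor ns\rfloor\big)\,P\big(c<a_n(X_1-b_n)\le d\big)\ \longrightarrow\ (t-s)\big(e^{-c}-e^{-d}\big),
\]
so $\Lambda(ds\times dx)=ds\,e^{-x}dx$. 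Note $\Lambda\big([0,1]\times(u,\infty)\big)=e^{-u}=:\uptau<\infty$, so the limit has finitely many points a.s. and is simple since $\Lambda$ is non‑atomic; projecting onto the time axis, the level-$u$ exceedances form a homogeneous Poisson process on $[0,1]$ of rate $\uptau$, which is the ``parameter $\uptau$'' in the statement.

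For the void probabilities, let $A=\bigcup_{j=1}^m R_j$ with $R_j=(s_j,t_j]\times B_j$ disjoint. Using independence of the $X_i$ and the fact that each index $i$ lands in at most one time interval,
\[
P(N_n(A)=0)=\prod_{j=1}^m\big(1-P(a_n(X_1-b_n)\in B_j)\big)^{\lfloor nt_j\rfloor-\lfloor ns_j\rfloor}\ \longrightarrow\ \prod_{j=1}^m e^{-(t_j-s_j)\mu(B_j)}=e^{-\Lambda(A)},
\]
where the limit uses $n\,P(a_n(X_1-b_n)\in B_j)\to\mu(B_j)$ (each $B_j$ is a $\mu$-continuity set bounded away from $-\infty$) and $(1-c_n/n)^{n}\to e^{-c}$ whenever $c_n\to c$. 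Equivalently one can work with Laplace functionals: for $f\ge0$ continuous with support in $[0,1]\times(u,\infty)$, $E[e^{-N_n(f)}]=\prod_i\big(1-\tfrac1n g_n(i/n)\big)$ with $g_n(s)=n\,E\big[1-e^{-f(s,a_n(X_1-b_n))}\big]\to\int_u^\infty(1-e^{-f(s,x)})e^{-x}dx$ uniformly in $s$, so the product converges (Riemann sum in the exponent) to $\exp\!\big(-\iint(1-e^{-f})\,d\Lambda\big)$, the Laplace functional of the asserted Poisson process.

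The main obstacle — really the only nontrivial point — is the \emph{uniform} tail control $\sup_{x>u}\big|\,n\,P(a_n(X_1-b_n)>x)-e^{-x}\big|\to0$ on the unbounded strip, which is what legitimizes the limit/product interchanges above; it follows from the precise normal tail asymptotics and the chosen $a_n,b_n$, but it is the step that must be written with care. Everything else is bookkeeping: verifying that the rectangles $(s,t]\times(c,d]$ with $u<c<d$ form a dissecting ring generating the Borel sets of $[0,1]\times(u,\infty)$, hence a convergence‑determining class for Kallenberg's criterion, and that the candidate limit is a well‑defined simple point process (immediate since $\Lambda$ is finite and non‑atomic). Since this is exactly the setting of the cited result, one may alternatively just quote \cite{EVT:Springer1983}; the sketch above is the route I would follow to reprove it.
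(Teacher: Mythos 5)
The paper does not prove this theorem at all: it is quoted verbatim from \cite{EVT:Springer1983} as a known result, so there is no in-paper argument to compare against. Your sketch is correct and is essentially the standard proof from that reference (Kallenberg's criterion: convergence of intensities on a dissecting semiring of rectangles plus convergence of void probabilities on finite disjoint unions, with the Gaussian tail asymptotics supplying $nP(a_n(X_1-b_n)>x)\to e^{-x}$); the only blemish is that your void-probability identity as written assumes the rectangles have disjoint time projections, which requires a preliminary refinement of the partition (or the equivalent Laplace-functional computation you also give), but this is bookkeeping rather than a gap.
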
    

	However, since we are interested in time dependent environment, we would like to explore the point process analysis for more general stationary sequences. In \cite{leadbetter1976weak}, exceedances of a high level $u_n$ by a stationary sequence $\{X_i\}$ (i.e., points where $X_i>u_n$), were analyzed, obtaining Poisson limits under weak dependence restrictions. In particularly, the following conditions make precise the notion of extreme events being near-independent if they are sufficiently distant in time.
	\begin{Definition}\label{def-StrongMixing}
        We say $\{X_n\}$ is strongly mixing if there is a function $g$ on the positive integers with $g(k) \rightarrow 0$ as $k \rightarrow \infty$ such that, if $A \in \mathfrak{F}(X_1,...,X_m)$ and $B \in \mathfrak{F}(X_{m+k},X_{m+K+1},...)$ for some $k, m\geq 1$, then $$\mid P(A\cap B)-P(A)P(B) \mid \leq g(k),$$ where $\mathfrak{F}(\cdot)$ denotes the $\sigma$-field generated by the indicated random variables.
      \end{Definition}
      When trying to weaken the strong mixing condition, one notes that the events of interest in extreme value theory are typically those of the form $\{X_n\leq u\}$. Hence we have,
      \begin{Definition}\label{def-D condition}(\cite{EVT:Springer1983})
    A stationary series $\{X_n\}$ is said to satisfy the $D(u_n)$ condition if \\ for all $i_1<...<i_p<j_1<...<j_q$ with $j_1-i_p>l$,
    
        \ifdouble
        \footnotesize
        \begin{multline*}
                |P_r\{X_{i_1}\leq u_n,...,X_{i_p}\leq u_n, X_{j_1}\leq u_n,...,X_{j_q}\leq u_n\} - \\
                P_r\{X_{i_1}\leq u_n,...X_{i_p}\leq u_n\}P_r\{ X_{j_1}\leq u_n,...,X_{j_q}\leq u_n\}|\leq\alpha(n,l),
        \end{multline*}
        \normalsize
        \else
        \begin{multline*}
                |P_r\{X_{i_1}\leq u_n,...,X_{i_p}\leq u_n, X_{j_1}\leq u_n,...,X_{j_q}\leq u_n\} - \\
                P_r\{X_{i_1}\leq u_n,...X_{i_p}\leq u_n\}P_r\{ X_{j_1}\leq u_n,...,X_{j_q}\leq u_n\}|\leq\alpha(n,l),
        \end{multline*}
        \fi
        where $\alpha(n,l)\rightarrow 0$ for some sequence $l_n$ such that $l_n/n \rightarrow 0$ as $n\rightarrow \infty$.
      \end{Definition}
    Another condition which is highly relevant is the local dependence condition $D'(u_n)$:
      \begin{Definition}\label{def-D' condition}(\cite{EVT:Springer1983})
    We say that $D'(u_n)$ is satisfied if, as $k\rightarrow \infty$,
        \begin{equation*}
     \limsup_{n \rightarrow \infty} n\sum_{j=2}^{ \lfloor n/k\rfloor } P\{X_1>u_n,X_j>u_n\}\rightarrow 0.
        \end{equation*}
      \end{Definition}
	
	Considering the above, we have,
      \begin{theorem}\label{thm-Point process convergance of stationary process}(\cite{leadbetter1976weak})
            Let $D(u_n),D'(u_n)$ hold for the stationary sequence $\{X_i\}$ with $u_n=u_n(\uptau)$, such that $n(1-F(u_n))=nP\{X_1>u_n\}\rightarrow \uptau$ as $n\rightarrow\infty$ for all $\uptau > 0$ . Let $N_n$ be the point process, consisting of the exceedances of $u_n(\uptau)$. Then $N_n \overset{d}{\rightarrow} N$ as $n\rightarrow\infty$, where $N$ is a Poisson process with parameter $\uptau$.            
      \end{theorem}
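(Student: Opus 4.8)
The plan is to prove $N_n \overset{d}{\rightarrow} N$ via Kallenberg's theorem on weak convergence of point processes: since the candidate limit $N$ is a \emph{simple} point process on $[0,1]$ whose mean measure $\uptau\,dx$ has no atoms, it suffices to show that for every set $B$ which is a finite union of disjoint half-open subintervals of $[0,1]$ one has $E[N_n(B)]\to\uptau|B|$ and $P\{N_n(B)=0\}\to e^{-\uptau|B|}$. The intensity statement is immediate: $N_n$ places a point at $i/n$ precisely when $X_i>u_n$, so by stationarity $E[N_n(B)]=\#\{i:i/n\in B\}\cdot P\{X_1>u_n\}$, and since $\#\{i:i/n\in B\}=n|B|+O(1)$ while $nP\{X_1>u_n\}\to\uptau$, this converges to $\uptau|B|$, the mean measure of the rate-$\uptau$ homogeneous Poisson process on $[0,1]$. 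The substance is in the void probabilities.

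Consider first a single interval $B=(c,d]$ and set $J_n=\{i:c<i/n\le d\}$, so $\{N_n(B)=0\}=\{\max_{i\in J_n}X_i\le u_n\}$. The core step is the blocking argument of the extremal-types theorem for stationary sequences: partition $J_n$ into $k$ consecutive sub-blocks of essentially equal length $\sim n(d-c)/k$, and delete from each a terminal gap of length $l_n$, where $l_n\to\infty$, $l_n/n\to0$ is the mixing sequence supplied by $D(u_n)$. Deleting the gaps changes $P\{N_n(B)=0\}$ by at most $kl_nP\{X_1>u_n\}=O(kl_n/n)\to0$ for fixed $k$, and applying $D(u_n)$ along the $k-1$ separating gaps (a telescoping use costing $\alpha(n,l_n)$ per gap, total $\le(k-1)\alpha(n,l_n)\to0$ for fixed $k$) shows that the void probability over the trimmed index set equals $\big(P\{\max_{i\in G}X_i\le u_n\}\big)^k$ up to $o(1)$, where $G$ is a generic sub-block of size $\sim n(d-c)/k$.

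For the per-block factor I would use Bonferroni, with $p_n:=P\{X_1>u_n\}$:
\[
|G|\,p_n-\!\!\sum_{i,i'\in G,\;i<i'}\!\!P\{X_i>u_n,X_{i'}>u_n\}\;\le\;P\Big\{\bigcup_{i\in G}\{X_i>u_n\}\Big\}\;\le\;|G|\,p_n .
\]
Here $|G|p_n\to(d-c)\uptau/k$, and by stationarity the double sum is at most $|G|\sum_{j\ge2}P\{X_1>u_n,X_j>u_n\}$, which $D'(u_n)$ controls: $\limsup_n n\sum_{j=2}^{\lfloor n/k\rfloor}P\{X_1>u_n,X_j>u_n\}=:\theta_k$ with $\theta_k\to0$ as $k\to\infty$, so the correction is $\le\theta_k/k+o_n(1)$. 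Thus $1-(d-c)\uptau/k\le\liminf_n P\{\max_{i\in G}X_i\le u_n\}$ and $\limsup_n P\{\max_{i\in G}X_i\le u_n\}\le 1-((d-c)\uptau-\theta_k)/k$; raising to the $k$-th power and reinserting the $o(1)$ factorization error gives, for each fixed $k$, $\liminf_n P\{N_n(B)=0\}\ge(1-(d-c)\uptau/k)^k$ and $\limsup_n P\{N_n(B)=0\}\le(1-((d-c)\uptau-\theta_k)/k)^k$. Letting $k\to\infty$ (so $\theta_k\to0$) squeezes both bounds to $e^{-(d-c)\uptau}$. For a general $B=\bigcup_{r=1}^m(c_r,d_r]$ the intervals are separated by fixed positive distances, so one further application of $D(u_n)$ factorizes $P\{N_n(B)=0\}$ into $\prod_r P\{N_n((c_r,d_r])=0\}\to\prod_r e^{-(d_r-c_r)\uptau}=e^{-\uptau|B|}$, exactly the void probability of the rate-$\uptau$ Poisson process; Kallenberg's theorem then gives $N_n\overset{d}{\rightarrow}N$.

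The main obstacle is the bookkeeping of the iterated limit: every error term — the $O(k)$ applications of $D(u_n)$ contributing $(k-1)\alpha(n,l_n)$, the gap-deletion loss $kl_np_n$, and the $D'(u_n)$ clustering correction $\theta_k$ — is controlled only after first sending $n\to\infty$ with $k$ held fixed, and only afterwards may one let $k\to\infty$ to turn $(1-(d-c)\uptau/k)^k$ into $e^{-(d-c)\uptau}$; choosing $l_n$ so that it is simultaneously an admissible mixing lag for $D(u_n)$ and satisfies $l_n/n\to0$, and verifying that the $D'(u_n)$ bound is invoked on the correct index range $\{2,\dots,\lfloor n/k\rfloor\}$, is the delicate part. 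This is, of course, the route of \cite{leadbetter1976weak}; the role of $D'(u_n)$ is precisely to force the extremal index to equal $1$, i.e.\ to prevent exceedances from clustering, which is what makes the limit an ordinary Poisson process rather than a compound one.
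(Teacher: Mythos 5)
The paper offers no proof of this theorem---it is quoted from \cite{leadbetter1976weak}---and your reconstruction is precisely the argument of that reference (and of \cite[Theorem 5.2.1]{EVT:Springer1983}): Kallenberg's criterion reduces the claim to convergence of mean measures and void probabilities, the blocking/gap-deletion step with $D(u_n)$ factorizes the void probability into $k$ block factors at a cost of $(k-1)\alpha(n,l_n)+kl_np_n\to 0$, and Bonferroni together with $D'(u_n)$ pins each factor between $1-(d-c)\uptau/k$ and $1-((d-c)\uptau-\theta_k)/k$, so the iterated limit ($n\to\infty$ first, then $k\to\infty$) yields the Poisson void probability $e^{-\uptau|B|}$. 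Your proof is correct, and the delicate points you flag---the order of the two limits, the admissibility of $l_n$, and the index range $\{2,\dots,\lfloor n/k\rfloor\}$ in $D'(u_n)$---are exactly the ones the original proof must and does handle the same way.
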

          
	Using Theorem \ref{thm-Point process convergance of stationary process}, we now analyze each user separately and examine its sequence of channel capacities over time. Specifically, we will show that the exceeding points in this \emph{dependent sequence} converge to a Poisson process, similar to a one if the sequence was \textit{i.i.d.}.

    The sequence $\{C_i(n)\}$ depends on which state user $i$ exists in, which is, in turn, governed by the underlying Markov chain process $\{J_i(n)\}$, as defined in \eqref{equ-UserCapacityProcess}. Such processes have been studied before,  \cite{janssen1969processus,denzel1975limit}, and are known as  $\tilde{J}-X$ processes or "chain-dependent" processes.
    In our context, we shall consider the sequence $\{X_n\}$ as the sequence of capacities $\{C_i(n)\}$ of the i-th' user over time, and the chain process $\{\tilde{J}_n\}$ as the sequence of the irreducible, aperiodic, 2-state Good-Bad Markov chain $\{J_i(n)\}$. Since we analyze the capacity process of one user, and all have the same distribution we omit the user index. We have
    
    \ifdouble \small \fi
    \begin{equation}\label{equ-J-C process defenition our model}
        \begin{aligned}
            P(J_n &= j , C_n \leq \alpha \mid J_0,C_1,J_1,...,C_{n-1},J_{n-1}=i) \\
               &= P(J_n=j,C_n\leq \alpha \mid J_{n-1}=i)=P_{ij}H_j(\alpha),
        \end{aligned}
    \end{equation}
     \ifdouble \normalsize \fi
    where $P$ is the transition matrix of the chain and $H_j(\alpha)$, where $j$ belongs to the state space, are the distribution functions associated with the chain states, respectively. Note that each state determines the distribution of $X$ for the current time transition. This means that given the chain process $\{J_n\}$, the random variables of the $\{C_n\}$ process are conditionally independent. If the initial distribution of the chain is the stationary distribution, i.e $P(J_0=i)=\pi_i$ for all $i$ in the finite state space, where $\pi$ is the stationary distribution, then the distribution of $C_n$ is $H(x)=\Sigma \pi_i H_i(x)$. In \cite{denzel1975limit}, it was stated that every stationary chain-dependent process is strongly mixing. We will show that the $J-C$ process, as defined in this paper, is indeed strongly mixing by definition \ref{def-StrongMixing}. 
    \begin{lemma}\label{lem-C_n is strongly mixing}
        $\{C_n\}$ is strongly mixing with $g(k)=\sum_{j} \pi_i \mid P_{ij}^k -\pi_j \mid$, where $\pi$ is the stationary distribution of the chain and $P_{ij}^k=(P^k)_{ij}$.
    \end{lemma}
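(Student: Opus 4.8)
The plan is to reduce the mixing inequality to the elementary convergence $P^k_{ij}\to\pi_j$ for the finite chain $\{J_n\}$, exploiting the conditional independence structure \eqref{equ-J-C process defenition our model}. I would first record two structural facts. (i) Given the whole path $\{J_n\}_{n}$, the variables $\{C_n\}$ are independent with $C_n$ distributed as $H_{J_n}$; equivalently one may write $C_n=\phi(J_n,U_n)$ with $\{U_n\}$ i.i.d.\ uniform and independent of the chain. (ii) $\{J_n\}$ is Markov and started from its stationary law $\pi$, so for the chain the past and the future are conditionally independent given the present state, and moreover $\pi_i>0$ for every state since the Good--Bad chain is irreducible and aperiodic (so all the conditionings below are well defined).

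Next, fix $m,k\ge 1$, $A\in\mathfrak F(C_1,\dots,C_m)$ and $B\in\mathfrak F(C_{m+k},C_{m+k+1},\dots)$, and set $a_i\eqdef P_r(A\mid J_m=i)$, $b_j\eqdef P_r(B\mid J_{m+k}=j)$, both in $[0,1]$. Conditioning first on the chain path and using (i), then conditioning on the pair $(J_m,J_{m+k})$ and using (ii) twice --- once at time $m$ to detach $A$ from the future of the chain, once at time $m+k$ to detach $B$ from its past --- one obtains
\begin{equation*}
  P_r(A\cap B)=\sum_{i,j}\pi_i\,P^k_{ij}\,a_i b_j,
  \qquad
  P_r(A)\,P_r(B)=\sum_{i,j}\pi_i\pi_j\,a_i b_j,
\end{equation*}
where the second identity also uses stationarity, $P_r(J_{m+k}=j)=\pi_j$. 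Subtracting and using $0\le a_i,b_j\le 1$ gives
\begin{equation*}
  \bigl|P_r(A\cap B)-P_r(A)P_r(B)\bigr|
  =\Bigl|\sum_{i,j}\pi_i a_i b_j\,(P^k_{ij}-\pi_j)\Bigr|
  \le \sum_{i,j}\pi_i\,|P^k_{ij}-\pi_j|\eqdef g(k),
\end{equation*}
the sum being over the finite state space, which is the asserted bound. Finally, since the chain is finite, irreducible and aperiodic, $P^k_{ij}\to\pi_j$ for every $i,j$, hence $g(k)\to 0$ as $k\to\infty$, so $\{C_n\}$ is strongly mixing in the sense of Definition~\ref{def-StrongMixing}.

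I expect the only delicate point to be the bookkeeping in the middle step: justifying that conditioning on $(J_m,J_{m+k})$ both makes $A$ and $B$ conditionally independent and simultaneously decouples $A$ from $J_{m+k}$ and $B$ from $J_m$. This is exactly the Markov statement that $\sigma(J_1,\dots,J_m)$ and $\sigma(J_m,J_{m+1},\dots)$ are conditionally independent given $\sigma(J_m)$ --- and therefore also given any $\sigma$-field sandwiched between $\sigma(J_m)$ and $\sigma(J_m,J_{m+1},\dots)$, in particular given $\sigma(J_m,J_{m+k})$ --- combined with fact (i), which makes $P_r(A\mid\{J_n\})$ a function of $J_1,\dots,J_m$ and $P_r(B\mid\{J_n\})$ a function of $J_{m+k},J_{m+k+1},\dots$. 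Everything else is a one-line computation, so no further obstacles are anticipated.
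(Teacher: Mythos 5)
Your proposal is correct and follows essentially the same route as the paper's proof: decompose over the values of $(J_m,J_{m+k})$, use the Markov property together with the conditional independence of the $C_n$'s given the chain to factor $P(A\cap B,J_m=i,J_{m+k}=j)=P(A\mid J_m=i)P(B\mid J_{m+k}=j)\pi_i P^k_{ij}$, bound the conditional probabilities by one, and invoke $P^k_{ij}\to\pi_j$ for the finite irreducible aperiodic chain. The only cosmetic difference is that you write the exact identities for $P(A\cap B)$ and $P(A)P(B)$ before subtracting, whereas the paper applies the triangle inequality termwise first; your added care in justifying the conditional decoupling given $\sigma(J_m,J_{m+k})$ is a welcome elaboration of a step the paper leaves implicit.
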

    \begin{proof}
        Let $A$ and $B$ be as definition \ref{def-StrongMixing}. Then
        
        \ifdouble
        \small
               \begin{equation*}
            \begin{aligned}
                 & \mid P(A \cap B)-P(A)P(B) \mid  \\
                 & \leq \sum_{i,j\in\{0,1\}} \mid P(A \cap B,J_m=i,J_{m+k}=j) -P(A,J_m=i)P(B,J_{m+k}=j) \mid \\
                 &=\sum_{i,j\in\{0,1\}} P(A|J_m=i)P(B|J_{m+k}=j)P(J_m=i)\mid P(J_{m+k}=j|J_m=i)-P(J_{m+k}=j) \mid \\
                 &\leq  \sum_{i,j\in\{0,1\}} \pi_i \mid P_{ij}^k -\pi_j \mid = g(k).\\
            \end{aligned}
        \end{equation*}
        \normalsize
        \else
        \begin{equation*}
            \begin{aligned}
                 & \mid P(A \cap B)-P(A)P(B) \mid  \\
                 & \leq \sum_{i,j\in\{0,1\}} \mid P(A \cap B,J_m=i,J_{m+k}=j)-P(A,J_m=i)P(B,J_{m+k}=j) \mid \\
                 &=\sum_{i,j\in\{0,1\}} P(A|J_m=i)P(B|J_{m+k}=j)P(J_m=i)\mid P(J_{m+k}=j|J_m=i)-P(J_{m+k}=j) \mid \\
                 &\leq  \sum_{i,j\in\{0,1\}} \pi_i \mid P_{ij}^k -\pi_j \mid = g(k).\\
            \end{aligned}
        \end{equation*}
        \fi
        For each $i \in {0,1}$, $\sum_{j} \pi_i \mid P_{ij}^k -\pi_j \mid \rightarrow 0$ as $k \rightarrow \infty$. This was also used in \cite{o1974limit} in order to show the strongly mixing property, nevertheless, it is easy to notice that ,in fact, as $k \rightarrow \infty$, $(P^k)_{ij} \rightarrow \pi_j$ regardless of $i$.
    \end{proof}
    
    Since strong mixing holds for our sequence of channel capacities, due to the fact that it is a weaken case, so does the condition $D(u_n)$ hold. We would like to show that condition $D'(u_n)$ also holds, so we are able to characterize the rate of exceedance over the threshold $u_n$. Note that we are only interested in a sequence of reals $\{u_n\}$ which satisfies $1-H(u_n)=\uptau/n+o(1/n)$ when considering the $D'(u_n)$ condition. We thus have the following Lemma.
    \begin{lemma}\label{lem-condition D'(u_n) holds on C_n}
           The local dependence condition $D'(u_n)$ holds for the sequence $\{C_n\}$ as defined in \eqref{equ-UserCapacityProcess}.
    \end{lemma}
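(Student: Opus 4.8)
The plan is to exploit the chain-dependent (conditionally independent) structure of $\{C_n\}$ to reduce the joint exceedance probability $P\{C_1>u_n,C_j>u_n\}$ to a product of one-dimensional tails, and then to compare those state-conditional tails with the marginal tail $\overline{H}(u_n):=1-H(u_n)$, which is controlled by the defining relation $n\overline{H}(u_n)\to\uptau$.

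First I would condition on the underlying Markov chain. By \eqref{equ-J-C process defenition our model}, given the whole state sequence $\{J_n\}$ the variables $\{C_n\}$ are independent with $C_n\sim H_{J_n}$, so
\[
P\{C_1>u_n,\,C_j>u_n\mid\{J_n\}\}=\overline{H}_{J_1}(u_n)\,\overline{H}_{J_j}(u_n),
\]
and, starting the chain from its stationary law $\pi=(p,q)$,
\[
P\{C_1>u_n,\,C_j>u_n\}=\sum_{i,\ell}\pi_i\,P_{i\ell}^{\,j-1}\,\overline{H}_i(u_n)\,\overline{H}_\ell(u_n),
\]
where $\overline{H}_i:=1-H_i$, the sum is over the two chain states, and $P_{i\ell}^{\,j-1}=(P^{j-1})_{i\ell}$. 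Next I would bound each state-conditional tail by the marginal tail: since $\overline{H}(u_n)=p\,\overline{H}_G(u_n)+q\,\overline{H}_B(u_n)\ge\min(p,q)\,\overline{H}_i(u_n)$ and $p,q$ are bounded away from zero (the standing assumption also used in Theorem~\ref{thm-Capacity distribution in a time dependent channel convergence to a Gumbel}), we get $\overline{H}_i(u_n)\le c\,\overline{H}(u_n)$ with $c:=1/\min(p,q)$. Substituting, and using $\sum_\ell P_{i\ell}^{\,j-1}=1$ together with $\sum_i\pi_i=1$, yields the state-free estimate $P\{C_1>u_n,C_j>u_n\}\le c^2\,\overline{H}(u_n)^2$ for every $j$. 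Hence
\[
n\sum_{j=2}^{\lfloor n/k\rfloor}P\{C_1>u_n,\,C_j>u_n\}\le n\cdot\frac{n}{k}\cdot c^2\,\overline{H}(u_n)^2=\frac{c^2}{k}\bigl(n\,\overline{H}(u_n)\bigr)^2\longrightarrow\frac{c^2\uptau^2}{k}\quad(n\to\infty),
\]
and letting $k\to\infty$ gives $\limsup_{n\to\infty}n\sum_{j=2}^{\lfloor n/k\rfloor}P\{C_1>u_n,C_j>u_n\}\to0$, which is precisely $D'(u_n)$.

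The argument is essentially a two-line estimate, so I do not expect a genuine obstacle beyond bookkeeping; the one point that needs care is the uniform comparison $\overline{H}_i(u_n)\le c\,\overline{H}(u_n)$, which is exactly where the hypothesis that $p,q$ are bounded away from zero is used — without it a single state's tail could dominate the mixed tail and the crude bound would break. The Markov dependence itself contributes only the harmless normalization $\sum_\ell P_{i\ell}^{\,j-1}=1$, so everything else reduces to the standard "crude bound plus $n\overline{H}(u_n)\to\uptau$" computation familiar from the i.i.d.\ case.
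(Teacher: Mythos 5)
Your proof is correct and follows essentially the same route as the paper: both condition on the chain states $(J_1,J_j)$ to factor the joint exceedance into state-conditional tails, both control $\overline{H}_i(u_n)$ by the marginal tail via stationarity (the paper uses $\pi_i\overline{H}_i(u_n)\le \uptau/n+o(1/n)$, i.e.\ the constant $1/\pi_i$ in place of your $1/\min(p,q)$), and both finish with the crude $\lfloor n/k\rfloor$-term bound so that $(n\overline{H}(u_n))^2/k\to\uptau^2/k\to 0$. The only cosmetic difference is that you sum the transition-matrix row to $1$ where the paper keeps a $\max_r P_{ij}^r$ factor; the substance is identical.
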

    \begin{proof}
    
    \ifdouble \small \fi
        \begin{equation*}
            \begin{aligned}
                &\limsup_{n \rightarrow \infty} n\sum_{r=2}^{ \lfloor n/k\rfloor } P(C_1>u_n,C_r>u_n) \\
                &\overset{(a)}{\leq} \limsup_{n \rightarrow \infty} n\sum_{r=2}^{ \lfloor n/k\rfloor } \sum_{i,j\in\{0,1\}} P(C_1>u_n | J_1=i)\\
                &\quad \quad \quad \cdot P(C_r>u_n | J_r=j)  P(J_1=i)P(J_r=j | J_1=i) \\
                &= \limsup_{n \rightarrow \infty} n\sum_{r=2}^{ \lfloor n/k\rfloor } \sum_{i,j\in\{0,1\}} (1-H_i(u_n))(1-H_j(u_n))\pi_i P_{ij}^r\\
                &\overset{(b)}{\leq} \limsup_{n \rightarrow \infty} n\sum_{r=2}^{ \lfloor n/k\rfloor } \sum_{i,j\in\{0,1\}} \left(\frac{\uptau}{n\pi_i}+o\left(\frac{1}{n}\right) \right)
                \left(\frac{\uptau}{n\pi_j}+o\left(\frac{1}{n}\right) \right) \pi_i P_{ij}^r\\
                &\leq \limsup_{n \rightarrow \infty} n\sum_{r=2}^{ \lfloor n/k\rfloor } \sum_{i,j\in\{0,1\}}\frac{1}{\pi_j\pi_i} \left(\frac{\uptau}{n}+o\left(\frac{1}{n}\right) \right)^2  \pi_i P_{ij}^r\\
                &\leq \limsup_{n \rightarrow \infty} n \left\lfloor \frac{n}{k}\right\rfloor \left(\frac{\uptau}{n}+o\left(\frac{1}{n}\right) \right)^2 \sum_{i,j\in\{0,1\}}\frac{1}{\pi_j\pi_i} \pi_i \max_r\{P_{ij}^r\}\\
                &=(\uptau^2+o(1))^2 \frac{1}{k} \sum_{i,j\in\{0,1\}}\frac{1}{\pi_j\pi_i} \pi_i \max_r\{P_{ij}^r\}  \rightarrow 0 \text{  as  } k \rightarrow \infty.
            \end{aligned}
        \end{equation*}
        \ifdouble \normalsize \fi
        In the above chain, (a) is since once $J_1$ is known, $C_1$ and $C_r,J_r$ are independent, then we conditioned on $J_r$. (b) is true since $u_n$ maintains $n(1-F_1(u_n))=\uptau +o(1)$, so we have
        
        \ifdouble \small \fi
        \begin{equation*}
         \begin{aligned}
              &1-F_1(u_n)=1-H(u_n)=1-\sum_{i} \pi_i H_i(u_n)\\
              &=\sum_{i} \pi_i(1- H_i(u_n))= \frac{\uptau}{n}+o\left(\frac{1}{n}\right)
         \end{aligned}
        \end{equation*}
         \ifdouble \normalsize \fi
        and therefore
        
         \ifdouble \small \fi
        \begin{equation*}
              \pi_l(1-H_l(u_n))=\frac{\uptau}{n}-\sum_{i,i\neq l} \pi_i(1- H_i(u_n))\leq \frac{\uptau}{n} +o\left(\frac{1}{n}\right).
        \end{equation*}
         \ifdouble \normalsize \fi
        Note that $\pi_l$ and $\uptau$ are constants.
     \end{proof}
     Thus, in our paradigm, a single users' channel capacity process has the same laws of convergence as if the sequence $\{C_n\}$ was \textit{i.i.d.} with a marginal distribution $H(x)$. Therefore, since the users are independent and each user sees the same marginal distribution $H(x)$, we can analyze the point process of the sequence of all users' capacities at a specific time (e.g, time slot), resulting in the basic case of \textit{i.i.d.} random variables, as given in Theorem  \ref{thm-Point process convergance of iid}. Considering the above, we can now turn to evaluate the expected channel capacity.

     \subsubsection{Distributed Algorithm}
  	Given the number of users, we set a capacity threshold $u$ such that only a small fraction of the users will exceed it. At the beginning of each slot, each user estimates its capacity for that slot. If the capacity anticipated by the user is greater than the capacity threshold, it transmits in that slot. Otherwise, it keeps silent.
    The threshold value is set such that one user exceeds the threshold on average, and as a result, its transmission is successful. We treat this slot as a utilized slot. Hence the expected channel capacity has the form:
     \begin{equation*}
        C_{av}(u)=P_r(\text{utilized slot})E[C|C>u].
     \end{equation*}

     For the calculation of $E[C|C>u]$, the expected capacity experienced by a user who passed $u$, one needs to evaluate the distance of the exceeding points from the threshold.
     \cite{kampeas2014capacity} gives the analytical tools to compute the tail distribution of the exceeding points. These points follow the generalized Pareto distribution, hence by using the PPA exceedance rate results, and since we already showed that we have the same exceedance as an \textit{i.i.d.} case with $H(x)$ as the marginal distribution, the result is the same and we have
     \begin{equation*}
        E[C|C>u]=u+\frac{1}{a_K}+o\left(\frac{1}{a_K}\right),
     \end{equation*}
     where $a_K$ is the normalizing constant as in Theorem \ref{thm-Capacity distribution in a time dependent channel convergence to a Gumbel}.

     We say that a slot is utilized if only one point out of all $K$ points exceeds the threshold. Hence, as $K \to \infty$
     \begin{equation*}
        K\left(\frac{1}{K}\right)\left(1-\frac{1}{K}\right)^{K-1}\to e^{-1},
     \end{equation*}
     where the threshold $u$ was chosen such that $1-H(u)=1/K$. We will elaborate on this value in the next subsection. The expected channel capacity is thus
     \begin{equation}\label{equ-Expected channel capacity (PPA)}
        C_{av}(u)=e^{-1}\left(u+\frac{1}{a_K}+o\left(\frac{1}{a_K}\right)\right).
     \end{equation}
     Clearly, in order to assess the expression above and understand how the capacity scales in a distributed algorithm, one has to compute the value of the threshold.

     \subsubsection{Threshold Estimation}

     The threshold is set such that only one user on average exceeds it in each time slot. This selection maximizes the probability of a successful transmission in a slot, and is asymptotically optimal as $n \rightarrow \infty$ \cite{qin2003exploiting}. By using this rule we can estimate the optimal threshold value for the time dependent channel as well:
    \begin{equation*}
        1-H(u)=\frac{1}{K} \ \  \text{Hence,} \ \ \  1-pF_g(u)-qF_b(u)=\frac{1}{K}.\\
     \end{equation*}

     where the above represents the probability that a user capacity will exceed the threshold $u$. With similar derivation as in Theorem \ref{thm-Capacity distribution in a time dependent channel convergence to a Gumbel} we get that
     
     \ifdouble
     \small
     \begin{multline}\label{equ-Estimated threshold}
            u=\sigma_g\sqrt{2\log{K}}\left( 1-\frac{\frac{1}{2}\log{\frac{4\pi}{p^2}}+\frac{1}{2}\log{\log{K}}}{2\log{K}}+o\left( \frac{1}{\log{K}} \right) \right)
            +\mu_g=b_K.
     \end{multline}
     \normalsize
     \else
     \begin{equation}\label{equ-Estimated threshold}
            u=\sigma_g\sqrt{2\log{K}}\left( 1-\frac{\frac{1}{2}\log{\frac{4\pi}{p^2}}+\frac{1}{2}\log{\log{K}}}{2\log{K}}+o\left( \frac{1}{\log{K}} \right) \right)+\mu_g=b_K.
     \end{equation}
     \fi
     
     Putting \eqref{equ-Estimated threshold} in \eqref{equ-Expected channel capacity (PPA)} we get the same expression for the expected channel capacity under distributed scheduling which show that both approaches have the same scaling laws, the distributed approach being smaller only by a factor of $e^{-1}$, so there was no loss of optimality due to the distributed algorithm.

     \subsubsection{Threshold exceedance process} \label{subsec-Threshold_exceedance_process}

    We saw that the point process described earlier result in convergence of the points exceeding $u_n$ to a non-homogeneous Poisson process $N$. This justify our initial assumption, in the performance part of this work, that the time between threshold exceedances is exponentially distributed with parameter $\uptau$.  It is important to note that convergence in distribution happens if we change the "time scale" by a factor of $n$ as defined in the definition of $N_n$ in \eqref{equ-sequence of point processes N_n}. Meaning, we have a point process in the interval $(0,1]$ and the exceedance of such points has a limiting Poisson distribution.  Of course when considering random arrivals  it would interest us to assess the convergence of the exceeding points for the entire positive line and not just the unit interval.

    Here, we shall use the notation given in \cite{EVT:Springer1983} for the Poisson properties of exceedance.\\
    Let $C_1,C_2,...,C_n$, a sequence of \textit{i.i.d.} r.v.s with distribution $F$, be the channel capacity draws of a specific user. From \cite[Theorem 2.1.1]{EVT:Springer1983}, if $u_n$ satisfies $n(1-F(u_n))\rightarrow \uptau$, then for $k=0,1,2,..., \ \ P(N_n\leq k)\rightarrow e^{-\uptau}\sum_{s=0}^{k}\frac{\uptau^{s}}{s!},$
    where $N_n$ is the number of exceedances of a level $u_n$ by $\{C_n\}$. 
    
     By \cite[Theorem $5.2.1 (\rmnum{2})$]{EVT:Springer1983}, if for each $\uptau >0$, there exists a sequence $\{u_n(\uptau)\}$ satisfying $n(1-F(u_n(\uptau)))=nP(C_1>u_n(\uptau))\rightarrow \uptau$ as $n \rightarrow \infty$, and that $D(u_n(\uptau)), D'(u_n(\uptau))$ hold for all $\uptau > 0$, then for any fixed $\uptau$, $N_n$ converges in distribution to a Poisson process $N$ on $(0,\infty)$ with parameter $\uptau$.

     Clearly, for the approximation model given in Section \ref{Approximate model 2}, the dependence conditions hold, since the sequence $\{C_n\}$ is \textit{i.i.d.}, so one only needs to show that the first condition holds for each $\uptau$. As for the approximation model given in Section \ref{Approximate model 3}, we already showed that the dependence conditions holds and similar derivation will show that  the first condition holds also for each $\uptau$.

    \begin{lemma}\label{lem-condition for poisson convergence on all positive line}
        Assume F is the Gaussian distribution and let $a_n$ and $b_n$ be given according to \cite[Theorem 1.5.3]{EVT:Springer1983}.  Fix any  $\uptau>0$ and set $u_n(\uptau)= \frac{\log{1/\uptau}}{a_n}+b_n$ Then,
        \begin{equation*}
          \lim_{n \to \infty} n(1-F(u_n(\uptau))) = \uptau.
        \end{equation*}

    \end{lemma}
    \begin{proof}
        In a similar way for the derivation of the normalizing constant in \cite[Theorem 1.5.3]{EVT:Springer1983}.  Let us find $u_n(\uptau)$ which satisfies the equivalence condition for the convergence of the expression $n(1-F(u_n(\uptau)))$
        
         \ifdouble \small \fi
        \begin{equation*}
          \begin{aligned}
                    n(1-F(u_n(\uptau))) &\rightarrow \uptau \qquad \text{as  } n\rightarrow \infty \\
                    \frac{nf(u_n(\uptau))}{u_n(\uptau)}&\rightarrow \uptau \qquad  \text{as  } n\rightarrow \infty
          \end{aligned}
        \end{equation*}
         \ifdouble \normalsize \fi
        where the second line is true due to the Gaussian relation $1-\Phi(u)\sim \frac{\phi(u)}{u}$ as $u\rightarrow \infty$, which in our case $u_n(\uptau)$ grows with $n$. So
        
         \ifdouble \small \fi
        \begin{equation*}
          \begin{aligned}
                    \frac{1}{\sqrt{2\pi}}e^{-\frac{u_n^2(\uptau)}{2}} &\underset{n\rightarrow \infty}{\rightarrow} \frac{\uptau \ u_n(\uptau)}{n} \\
                    -\log\sqrt{2\pi}-\frac{u_n^2(\uptau)}{2}&\underset{n\rightarrow \infty}{\rightarrow} \log \uptau + \log(u_n(\uptau)) - \log n \qquad
          \end{aligned}
        \end{equation*}
         \ifdouble \normalsize \fi
        we know that $\log(u_n(\uptau))=\frac{1}{2}(\log 2 +\log{\log{n}})+o(1) $, hence
        
       \ifdouble
       \small
        \begin{equation*}
          \begin{aligned}
                    &\frac{u_n^2(\uptau)}{2}= \log{ \frac{1}{\uptau}}-\frac{1}{2}\log{4\pi}-\frac{1}{2}\log{\log{n}} + \log n +o(1)\\
                    &u_n^2(\uptau)=2\log n\left( 1+ \frac{\log{ \frac{1}{\uptau}}-\frac{1}{2}\log{4\pi}-\frac{1}{2}\log{\log{n}}}{\log n} +o\left(\frac{1}{\log n} \right) \right)\\
                    &u_n(\uptau)=\sqrt{2\log n}\left( 1+ \frac{\log{ \frac{1}{\uptau}}-\frac{1}{2}\log{4\pi}-\frac{1}{2}\log{\log{n}}}{2\log n} +o\left(\frac{1}{\log n} \right) \right)\\
                    &u_n(\uptau)=\frac{\log{1/\uptau}}{a_n}+b_n
          \end{aligned}
        \end{equation*}
        \normalsize
       \else
        \begin{equation*}
          \begin{aligned}
                    &\frac{u_n^2(\uptau)}{2}= \log{ \frac{1}{\uptau}}-\frac{1}{2}\log{4\pi}-\frac{1}{2}\log{\log{n}} + \log n +o(1)\\
                    &u_n^2(\uptau)=2\log n\left( 1+ \frac{\log{ \frac{1}{\uptau}}-\frac{1}{2}\log{4\pi}-\frac{1}{2}\log{\log{n}}}{\log n}+o\left(\frac{1}{\log n} \right) \right)\\
                    &u_n(\uptau)=\sqrt{2\log n}\left( 1+ \frac{\log{ \frac{1}{\uptau}}-\frac{1}{2}\log{4\pi}-\frac{1}{2}\log{\log{n}}}{2\log n}+o\left(\frac{1}{\log n} \right) \right)\\
                    &u_n(\uptau)=\frac{\log{1/\uptau}}{a_n}+b_n
          \end{aligned}
        \end{equation*}
        \fi
        where the penultimate line is due to Taylor expansion.
    \end{proof}
    Now, since all conditions for convergence hold, and the exceeding points indeed converge to a Poisson process on the real line of the exceeding points, we can conclude that a user attempts transmission at a rate of $\uptau$, assuming he has packets to send. \ifdouble
\else Figure \ref{fig-PoissonConvergence} show simulation for convergence on the real line. \fi


\ifdouble
\else    
 \begin{figure} \centering
\begin{tikzpicture}[      
        every node/.style={anchor=south west,inner sep=0pt},
        x=1mm, y=1mm,
      ]   
     \node (fig1) at (0,0)
       {\includegraphics[width=0.45\textwidth]{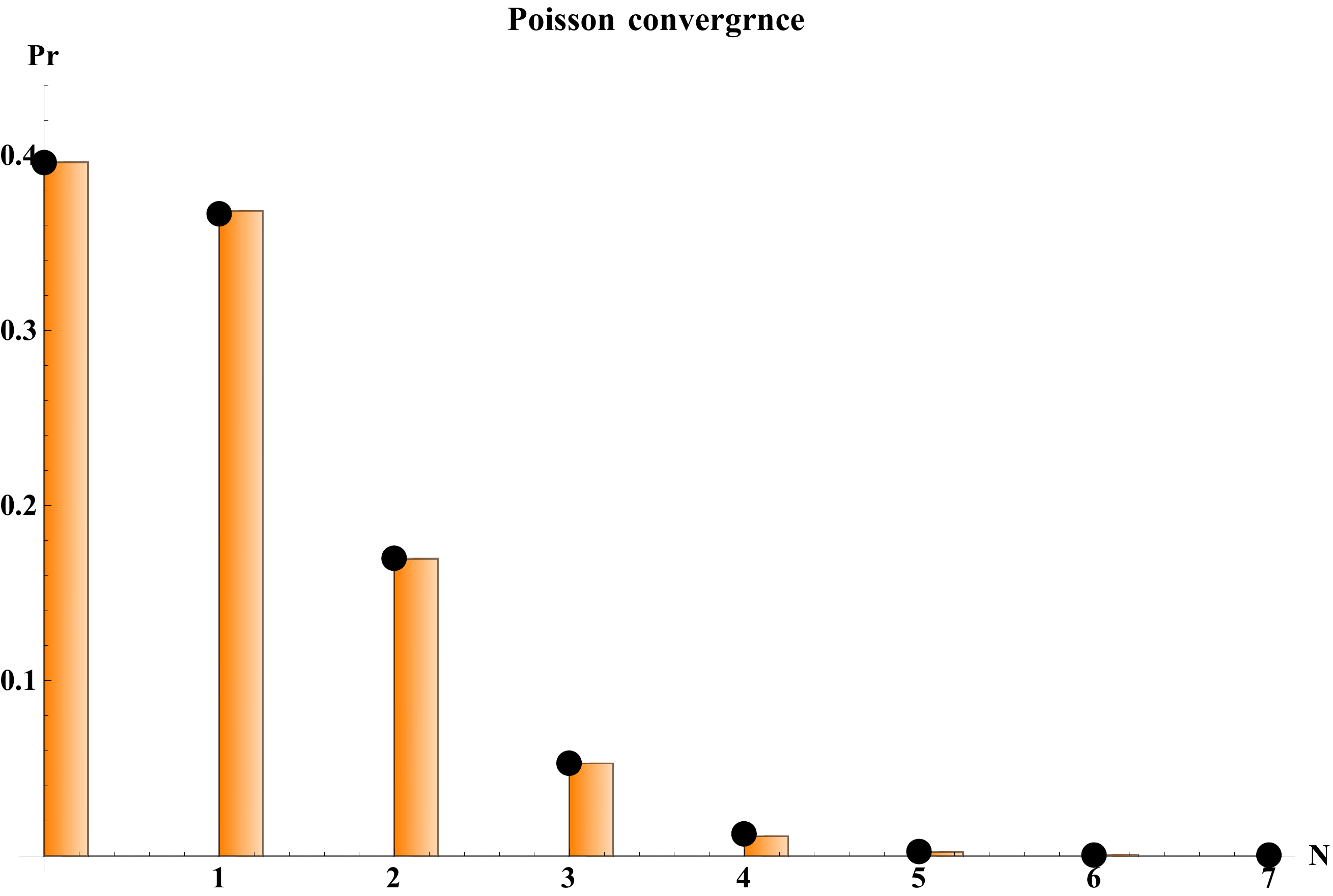}};
     \node (fig2) at (45,20)
       {\tiny
       \begin{tabular}{l|c c}
          \hline
           & Simulation & Poisson \\
           &            & distribution \\
          \hline
          Pr(N=0) & 0.3959 & 0.3961 \\
          Pr(N=1) & 0.3682 & 0.3668 \\
          Pr(N=2) & 0.1696 & 0.1698 \\
          Pr(N=3) & 0.0527 & 0.0524 \\
          Pr(N=4) & 0.0112 & 0.0121 \\
          Pr(N=5) & 0.002  & 0.0022 \\
          Pr(N=6) & 0.0004 & 0.0003 \\
          Pr(N=7) & 0.0    & 0.00004 \\
          \hline
        \end{tabular}};  
\end{tikzpicture}
\caption{Simulation for 10000 user's capacities which follows an \textit{i.i.d.} Gaussian distribution, showing the behaviour of exceedance which converge to a Poisson distribution (the dots) with parameter $\uptau$. On the right up side the values table is given.}
 \label{fig-PoissonConvergence}
\end{figure}
\fi

\section{Conclusion}\label{sec-conclusion}
In this work, we investigated the performance and the channel capacity of a multi-user MAC system in a time dependent and independent environment under distributed scheduling. Specifically, the performance of the system was derived while considering queueing theory aspects. In fact, a precise characterization is a very difficult mission, which up until today was not solved. Therefore, we presented approximation models to describe its behaviour. First, we addressed the \textit{i.i.d.}\ case, where the users do not experience a time varying channel. For that case, we elaborated an existing model and showed results for our paradigm. In addition, we gave another approach, which assumed the queues are independent, derived the probability of collision in the random access mechanism, and enabled us to consider them each as a much more simple queue.  Then, we suggested a queue model, which is time dependent and modelled by our Good-Bad channel model. We showed good agreement between the analytic models and the simulation results. Lastly, the expected channel capacity gain was derived in the case where the dependent capacity sequence was modelled as a stationary process, characterized by the Good-Bad channel Markov process.

\appendices
\section{System-status chain transition probabilities}\label{Appendix A}

    Here we show the calculation of the transition probabilities of the system chain. We use the guide lines in \cite{ephremides1987delay} to simplify things and use also their notions but under our case which state that $r_i=s_i=q_i=p_i$. Due to this fact there are many more possible transition is the state space and therefore the following calculations would be more complicated. In the same manner the calculations are based on the change in the number of active users, thus the transitions can be classified into four types as follows:
    \begin{enumerate}
      \item \underline{The number of active users transits from 0 to 1:}\\
      Only a blocked user may become active, and only one, say $j$. All other blocked blocked must not exceed the threshold. In addition all idle users which may receive a package must not exceed the threshold and therefore, say $w$, becomes blocked. The latter will repeat it self during this calculation. The probability of such a transition is expressed by:
      \begin{equation*}
        P(\Delta A=1, \Delta B=-1+w, \Delta I=-w)=\prod^{K-n-w-1}_{k}\overline{\lambda}_k\prod^{w}_{l}\lambda_l\overline{p}_l
        \prod^{n+1}_{i}\overline{p}_i\frac{p_j}{\overline{p}_j}\left[(1-P_j(0|2))+\lambda_jP_j(0|2)\right]
      \end{equation*}
      where the product is on the sets of users by their transition and $n$ is the number of blocked users after the transition. Note that the user $j$ will be active if his queue is not empty after the successful transmission or a package arrived in the beginning of the slot. Also note that $\overline{p}$ and $\overline{\lambda}$ stand for $1-p$ and $1-\lambda$, respectively.

      \item \underline{The number of active users remains 1:}\\
      This case divides to two subcases, the same active user $j$ remains active or another blocked user $s$ becomes active. Nevertheless $w$ idle users may still become blocked.
      \begin{enumerate}
        \item the probability for the first case is:
        \begin{equation*}
            P(\Delta A=0, \Delta B=+w, \Delta I=-w,j\rightarrow j)= \prod^{K-n-w-1}_{k}\overline{\lambda}_k\prod^{w}_{l}\lambda_l\overline{p}_l
            \prod^{n+1}_{i}\overline{p}_i\frac{p_j}{\overline{p}_j}\left[P_j(1|1)+\lambda_j(1-P_j(1|1))\right]
      \end{equation*}
        \item the probability for the second case is:
        \begin{equation*}
            P(\Delta A=0, \Delta B=+w, \Delta I=-w,j\rightarrow s)= \\ \prod^{K-n-w-1}_{k}\overline{\lambda}_k\prod^{w}_{l}\lambda_l\overline{p}_l
            \prod^{n+1}_{i}\overline{p}_i\frac{p_s}{\overline{p}_s}\left[P_s(1|1)+\lambda_s(1-P_s(1|1))\right]
      \end{equation*}
      \end{enumerate}

      \item \underline{The number of active users transits from 1 to 0:}\\
      This case is subdivided to three subcases,
      \begin{enumerate}
        \item The first subcase is that the active user $j$ becomes idle:
        \begin{equation*}
            P(\Delta A=-1, \Delta B=+w, \Delta I=-w)= \prod^{K-n-w-1}_{k}\overline{\lambda}_k\prod^{w}_{l}\lambda_l\overline{p}_l
            \prod^{n+1}_{i}\overline{p}_i\frac{p_j}{\overline{p}_j}\left[\overline{\lambda}_j(1-P_j(1|1))\right]
        \end{equation*}
        \item The second subcase is that the active user $j$ becomes blocked and blocked user $s$ becomes idle:
        \begin{equation*}
            P(\Delta A=-1, \Delta B=+1-1+w, \Delta I=+1-w)= \prod^{K-n-w-1}_{k}\overline{\lambda}_k\prod^{w}_{l}\lambda_l\overline{p}_l
            \prod^{n+1}_{i}\overline{p}_i\frac{p_s}{\overline{p}_s}\left[\overline{\lambda}_s(1-P_s(1|1))\right]
        \end{equation*}
        \item The third subcase is that the active user $j$ becomes blocked:
        \begin{equation*}
        \begin{aligned}
            &P(\Delta A=-1, \Delta B=+1+w, \Delta I=-w)= \\
            &\prod^{K-n-w-1}_{k}\overline{\lambda}_k\prod^{w}_{l}\lambda_l\overline{p}_l\prod^{n+1}_{i}\overline{p}_i\\
            +&\prod^{K-n-w-1}_{k}\overline{\lambda}_k\prod^{w}_{l}\lambda_l\overline{p}_l \sum^{K-n-w-1}_{s}\frac{\lambda_sp_s}{\overline{\lambda}_s}  \prod^{n+1}_{i}\overline{p}_i\\
            +&\prod^{K-n-w-1}_{k}\overline{\lambda}_k\prod^{w}_{l}\lambda_l\left( 1-\prod^{w}_{s}\overline{p}_s \right)\left(1-\prod^{n}_{i}\overline{p}_i\right)\\
            +&\prod^{K-n-w-1}_{k}\overline{\lambda}_k\prod^{w}_{l}\lambda_l\overline{p}_l\left[ 1-\prod^{n}_{s}\overline{p}_s - \overline{p}_j\prod^{n}_{i}\overline{p}_i\sum^{n}_{q}\frac{p_q}{\overline{p}_q} \right]\\
            +&\prod^{K-n-w-1}_{k}\overline{\lambda}_k\prod^{n}_{i}\overline{p}_i\prod^{w}_{l}\lambda_l\left[ 1-\prod^{w}_{s}\overline{p}_s - \overline{p}_j\prod^{w}_{b}\overline{p}_b\sum^{w}_{q}\frac{p_q}{\overline{p}_q} \right]
        \end{aligned}
        \end{equation*}
        The first expression is for the case which non of the users exceeds the threshold. The second is for the case which one idle user managed to successfully transmits. The third case describes the case which one or more from the blocked and the idle groups tries to transmit, therefore what happens with user $j$ is meaningless. The forth expression describes the situation which user $j$ collided with one or more users from the blocked group, or collision happened between the blocked users and $j$ didn't exceed the threshold. The fifth expression is the same as the fourth concerning the group of idle users which receives a package.
      \end{enumerate}

      \item \underline{The number of active users remains 0 :}\\
      This case is subdivided to three subcases,
      \begin{enumerate}
        \item The first subcase is that all users maintain their status without change:
        \begin{equation*}
            P(\Delta B=0, \Delta I=0)= \left[ 1-\prod^{n}_{i}\overline{p}_i\sum^{n}_{q}\frac{p_q}{\overline{p}_q} \right]\prod^{K-n}_{k}\overline{\lambda}_k + \prod^{n}_{i}\overline{p}_i\prod^{K-n}_{k}\overline{\lambda}_k \sum^{K-n}_{q}\frac{\lambda_qp_q}{\overline{\lambda}_q}
        \end{equation*}
        where the first term is the probability that no packet arrives at the idle users while no blocked user, or at least two blocked users, transmit, and the second term is the probability that no blocked user exceeds the threshold while only one of the idle users successfully transmits.
        \item The second subcase is that one blocked user $j$ becomes idle:
        \begin{equation*}
            P(\Delta B=-1+w, \Delta I=+1-w)= \prod^{K-n-w-1}_{k}\overline{\lambda}_k\prod^{w}_{l}\lambda_l\overline{p}_l
            \prod^{n+1}_{i}\overline{p}_i\left[\overline{\lambda}_j\frac{p_j}{\overline{p}_j}P_j(0|2)\right]
        \end{equation*}
        \item The third subcase describes some situation for $w$ idle users becomes blocked:
        \begin{equation*}
        \begin{aligned}
            &P(\Delta B=+w, \Delta I=-w)= \\
            &\prod^{K-n-w}_{k}\overline{\lambda}_k\prod^{w}_{l}\lambda_l\overline{p}_l\left[ 1-\prod^{n}_{i}\overline{p}_i\sum^{n}_{q}\frac{p_q}{\overline{p}_q} \right]\\
            +&\prod^{K-n-w}_{k}\overline{\lambda}_k\prod^{w}_{l}\lambda_l\left[ 1-\prod^{w}_{s}\overline{p}_s -\prod^{w}_{b}\overline{p}_b\sum^{w}_{q}\frac{p_q}{\overline{p}_q} \right]\\
            +&\prod^{K-n-w}_{k}\overline{\lambda}_k\prod^{w}_{l}\lambda_l\overline{p}_l\sum^{w}_{q}\frac{p_q}{\overline{p}_q} \left[ 1-\prod^{n}_{i}\overline{p}_i\right]\\
            +&\prod^{K-n-w}_{k}\overline{\lambda}_k\sum^{K-n-w}_{s}\frac{\lambda_sp_s}{\overline{\lambda}_s} \prod^{w}_{l}\lambda_l\overline{p}_l\prod^{n}_{i}\overline{p}_i
        \end{aligned}
        \end{equation*}
        The first expression is for the case which non of the $w$ users exceeds the threshold and no blocked user, or at least two blocked users, transmit, the second is for the case which at least two users from $w$ exceeds the threshold, the third case describes collision between one of the users from $w$ with at least one from the blocked users and the forth expression describes the situation which one idle user succussed to transmit while all the other does't exceed the threshold.
      \end{enumerate}
    \end{enumerate}
    According to these transition probabilities we can calculate the steady state of the chain given the auxiliary quantities $p_i(1|1)$ and $p_i(0|2)$, the probability of exceedance and the users arrival rates.


\section{Average success probabilities}\label{Appendix B}

    Once the steady state of the system chain is known, the average success probabilities can be calculated. We do so by calculate the values of $P_B(i),P_A(i)$ and $P_I(i)$ in the same manner as calculated in \cite{ephremides1987delay} while taking in consideration that user must exceed the threshold in order to transmit.
    \begin{enumerate}
      \item For Blocked user $i$:
      \begin{equation*}
        \begin{aligned}
        P_B(i)&=Pr(\text{user $i$ success $\mid$ user $i$ is blocked})
              &=\frac{Pr(\text{user $i$ success and user $i$ is blocked})}{Pr(\text{user $i$ is blocked})}
        \end{aligned}
      \end{equation*}
      where
      \begin{equation*}
        Pr(\text{user $i$ success and user $i$ is blocked})=p_i\sum_{\substack{S_i=2 \\ j\neq i \\ S_j=0,1,2}}P(S_1,...,S_K)\prod_{j\neq i} (\lambda_j\overline{p}_j+\overline{\lambda}_j)^{\delta_{S_j=0}}(\overline{p}_j)^{\delta_{S_j=1}} (\overline{p}_j)^{\delta_{S_j=2}}
      \end{equation*}
      The exponent $\delta$ equals $1$ when it's condition holds.
      \begin{equation*}
        Pr(\text{user $i$ is blocked})=\sum_{S_i=2}P(S_1,..,S_i,..,S_K)
      \end{equation*}

      \item For Active user $i$:
      \begin{equation*}
        \begin{aligned}
        P_A(i)&=Pr(\text{user $i$ success $\mid$ user $i$ is active})
              &=\frac{Pr(\text{user $i$ success and user $i$ is active})}{Pr(\text{user $i$ is active})}
        \end{aligned}
      \end{equation*}
      where
      \begin{equation*}
        Pr(\text{user $i$ success and user $i$ is active})=p_i\sum_{\substack{S_i=1 \\ j\neq i \\ S_j=0,2}}P(S_1,...,S_K)\prod_{j\neq i} (\lambda_j\overline{p}_j+\overline{\lambda}_j)^{\delta_{S_j=0}}(\overline{p}_j)^{\delta_{S_j=2}}
      \end{equation*}
      the difference here is due to the fact that no more than one user may be active.
      \begin{equation*}
        Pr(\text{user $i$ is active})=\sum_{S_i=1}P(S_1,..,S_i,..,S_K)
      \end{equation*}

      \item For Idle user $i$:
      \begin{equation*}
        \begin{aligned}
        P_I(i)&=Pr(\text{user $i$ success $\mid$ user $i$ is idle})
              &=\frac{Pr(\text{user $i$ success and user $i$ is idle})}{Pr(\text{user $i$ is idle})}
        \end{aligned}
      \end{equation*}
      where
      \begin{equation*}
        Pr(\text{user i success and user $i$ is idle})=\lambda_ip_i\sum_{\substack{S_i=0 \\ j\neq i \\ S_j=0,1,2}}P(S_1,...,S_K)\prod_{j\neq i} (\lambda_j\overline{p}_j+\overline{\lambda}_j)^{\delta_{S_j=0}}(\overline{p}_j)^{\delta_{S_j=1}} (\overline{p}_j)^{\delta_{S_j=2}}
      \end{equation*}
      and
      \begin{equation*}
        Pr(\text{user $i$ is idle})=\sum_{S_i=0}P(S_1,..,S_i,..,S_K)
      \end{equation*}
    \end{enumerate}
    After knowing the these values the boundary conditions can be calculated:
    \begin{equation*}
      \begin{aligned}
        P_i(1\mid 1)=&1-\frac{\pi(1,1)}{G^i_1(1)-\pi(1,0)} \\
        P_i(0\mid 2)=&\frac{\pi(0,0)}{G^i_0(1)},
      \end{aligned}
    \end{equation*}
    
    where,
    
     \begin{equation}\label{equ-probability for blocked and empty}
      \pi(0,0)=\frac{\lambda_i\overline{P}_I(i)}{\lambda_iP_A(i)+\overline{\lambda}_iP_B(i)}\pi(1,0),
    \end{equation}
    \begin{equation}\label{equ-probability for idle and empty}
      \pi(1,0)=\frac{\overline{\lambda}_iP_B(i)-\lambda_i\overline{P}_A(i)}{\overline{\lambda}_iP_B(i)-\lambda_i(P_I(i)-P_A(i))},
    \end{equation}
    \begin{equation}\label{equ-probability for active and not empty}
      \pi(1,1)=\frac{\lambda_i}{\overline{\lambda}_i}\pi(0,0),
    \end{equation}
    \begin{equation}\label{equ-probability to be blocked}
      G_{0}^i(1)=\frac{\lambda_i\overline{\lambda}_i\overline{P}_I(i)}{\overline{\lambda}_iP_B(i)-\lambda_i(P_I(i)-P_A(i))},
    \end{equation}
    \begin{equation}\label{equ-probability to be unblocked}
      G_{1}^i(1)=\lambda_i+\overline{\lambda}_i\frac{\overline{\lambda}_iP_B(i)-\lambda_i\overline{P}_A(i)}{\overline{\lambda}_iP_B(i)-\lambda_i(P_I(i)-P_A(i))}.
    \end{equation}

\section{Convergence to Extreme Value Distribution}\label{Appendix C}
        \begin{proof}
        Using the stationary distribution as the marginal distribution does not impairs the convergence when $K \rightarrow \infty$, as shown in  \cite{denzel1975limit}, for such type of dependent sequences. Thus, we can analyze the expected channel capacity using EVT with the distribution above as the marginal distribution function for each user. In order to do so, one first need to prove that convergence to one of the extreme distributions types exists, and derive normilizing constants $a_K$ and $b_K$. The result in \cite[1.6.2]{EVT:Springer1983} gives a necessary and sufficient condition on the marginal distribution $F$ to belong to each of the three possible domains of attraction of the extreme value distributions. 
    
    The first sufficient condition states that if $f$ has a negative derivative $f'$ for all $x$ in some interval $(x_0,x_F),\ (x_F\leq \infty),\ f(x)=0$ for $x\geq x_F$, and
    \begin{equation}\label{equ-Sufficient type 1 condition}
        \lim_{t \uparrow x_F} \frac{f'(t)(1-F(t))}{f^2(t)}=-1
    \end{equation}
    then $F$ is in the domain of attraction of Type \Rmnum{1} extreme value distribution (Gumbel distribution). The second necessary and sufficient condition for $F$ to be in the domain of attraction of Type \Rmnum{1} extreme value distribution, states that there exists some strictly positive function $g(t)$ such that
    \begin{equation}\label{equ-Necessary and sufficient type 1 condition}
     \lim_{t \uparrow x_F} \frac{1-F(t+xg(t))}{1-F(t)}=e^{-x}\\
    \end{equation}
    for all real $x$.\\

           In the following we give compliance for the convergence conditions and the derivation of $a_K$ and $b_K$.
           The stationary distribution  $F(t)=pF_g(t)+qF_b(t)$ as shown earlier has a negative derivative $f'$ from $x_0=\max\{\mu_g,\mu_b\}$ till $\infty$. So we only need to show that \eqref{equ-Sufficient type 1 condition} holds,
           \begin{equation*}
                \begin{aligned}
                    &\lim_{t \rightarrow \infty} \frac{f'(t)\left(1-F(t)\right)}{f^2(t)} =
                    \lim_{t \rightarrow \infty} \frac{\left(pf_g'(t)+qf_b'(t)\right)\left(1-\left(pF_g(t)+qF_b(t)\right)\right)}{\left(pf_g(t)+qf_b(t)\right)^2}\\
                    &=\lim_{t \rightarrow \infty} \frac{\frac{1}{2}\left(pf_g'(t)+qf_b'(t)\right)\left(pErfc\left(\frac{t-\mu_g}{\sqrt{2}\sigma_g}\right) +qErfc\left(\frac{t-\mu_b}{\sqrt{2}\sigma_b}\right)\right)}{\left(pf_g(t)+qf_b(t)\right)^2}\\
                \end{aligned}
               \end{equation*}
               In \cite{abramowitz2012handbook}, 7.1.13, we can find upper and lower bounds for the complementary error function,
               \begin{equation*}
                \frac{2}{\sqrt{\pi}}\frac{e^{-t^2}}{t+\sqrt{t^2+2}}< Erfc(t)\leq  \frac{2}{\sqrt{\pi}}\frac{e^{-t^2}}{t+\sqrt{t^2+\frac{4}{\pi}}}.
           \end{equation*}
           where these inequalities are true for $t>0$ which fits our case for $t \rightarrow \infty$. Using these bounds we will show with the sandwich rule that the limit above converge to $-1$. Let us consider first the lower bound of the complementary error function,
           \begin{equation*}
                 \begin{aligned}
                     &\lim_{t \rightarrow \infty} \frac{\frac{1}{2}\left(pf_g'(t)+qf_b'(t)\right)
                     \left( p\frac{2}{\sqrt{\pi}}\frac{e^{-\frac{(t-\mu_g)^2}{2\sigma_g^2}}}{\frac{t-\mu_g}{\sqrt{2}\sigma_g}+\sqrt{\frac{(t-\mu_g)^2}{2\sigma_g^2}+2}} + q\frac{2}{\sqrt{\pi}}\frac{e^{-\frac{(t-\mu_b)^2}{2\sigma_b^2}}}{\frac{t-\mu_b}{\sqrt{2}\sigma_b}+\sqrt{\frac{(t-\mu_b)^2}{2\sigma_b^2}+2}}\right)}
                     {\left(pf_g(t)+qf_b(t)\right)^2}\\
                     &=  \lim_{t \rightarrow \infty}\frac{1}{\sqrt{\pi}} \frac{\left(pf_g'(t)+qf_b'(t)\right)
                     \left( p\frac{\sqrt{2\pi}\sigma_g f_g(t)}{\frac{t-\mu_g}{\sqrt{2}\sigma_g}+\sqrt{\frac{(t-\mu_g)^2}{2\sigma_g^2}+2}} +
                            q\frac{\sqrt{2\pi}\sigma_b f_b(t)}{\frac{t-\mu_b}{\sqrt{2}\sigma_b}+\sqrt{\frac{(t-\mu_b)^2}{2\sigma_b^2}+2}}   \right)}
                     {\left(pf_g(t)+qf_b(t)\right)^2}\\
                     &=  \lim_{t \rightarrow \infty}-\sqrt{2} \frac{\left(pf_g(t)\frac{t-\mu_g}{\sigma_g^2}+qf_b(t)\frac{t-\mu_b}{\sigma_b^2}\right)
                     \left(p \frac{\sigma_g f_g(t)}{\frac{t-\mu_g}{\sqrt{2}\sigma_g}+\sqrt{\frac{(t-\mu_g)^2}{2\sigma_g^2}+2}} +
                           q \frac{\sigma_b f_b(t)}{\frac{t-\mu_b}{\sqrt{2}\sigma_b}+\sqrt{\frac{(t-\mu_b)^2}{2\sigma_b^2}+2}}   \right)}
                     {\left(pf_g(t)+qf_b(t)\right)^2}\\
                 \end{aligned}
           \end{equation*}
           The Limit above can be break to four different limits,
           \begin{equation*}
                 \begin{aligned}
                     &\lim_{t \rightarrow \infty}-\sqrt{2}p^2 \frac{f_g(t)\frac{t-\mu_g}{\sigma_g^2}
                      \frac{\sigma_g f_g(t)}{\frac{t-\mu_g}{\sqrt{2}\sigma_g}+\sqrt{\frac{(t-\mu_g)^2}{2\sigma_g^2}+2}}}
                     {\left(pf_g(t)+qf_b(t)\right)^2}+
                     \lim_{t \rightarrow \infty}-\sqrt{2}q^2 \frac{f_b(t)\frac{t-\mu_b}{\sigma_b^2}
                      \frac{\sigma_b f_b(t)}{\frac{t-\mu_b}{\sqrt{2}\sigma_b}+\sqrt{\frac{(t-\mu_b)^2}{2\sigma_b^2}+2}}}
                     {\left(pf_g(t)+qf_b(t)\right)^2}+\\
                     &\lim_{t \rightarrow \infty}-\sqrt{2}pq \frac{f_g(t)\frac{t-\mu_g}{\sigma_g^2}
                      \frac{\sigma_b f_b(t)}{\frac{t-\mu_b}{\sqrt{2}\sigma_b}+\sqrt{\frac{(t-\mu_b)^2}{2\sigma_b^2}+2}}}
                     {\left(pf_g(t)+qf_b(t)\right)^2}+
                     \lim_{t \rightarrow \infty}-\sqrt{2}pq \frac{f_b(t)\frac{t-\mu_b}{\sigma_b^2}
                      \frac{\sigma_g f_g(t)}{\frac{t-\mu_g}{\sqrt{2}\sigma_g}+\sqrt{\frac{(t-\mu_g)^2}{2\sigma_g^2}+2}}}
                     {\left(pf_g(t)+qf_b(t)\right)^2}\\
                 \end{aligned}
           \end{equation*}
           please notice that the first and the second limits are similar with the exception of their indexes, and so does the third and the fourth limit. We start with the first limit calculation,
           \begin{equation*}
                 \begin{aligned}
                     &\lim_{t \rightarrow \infty}-\sqrt{2}p^2 \ \frac{f_g(t) \ \frac{t-\mu_g}{\sigma_g^2} \
                      \frac{\sigma_g f_g(t)}{\frac{t-\mu_g}{\sqrt{2}\sigma_g}+\sqrt{\frac{(t-\mu_g)^2}{2\sigma_g^2}+2}}}
                     {\left(pf_g(t)+qf_b(t)\right)^2}
                     =\lim_{t \rightarrow \infty}-\sqrt{2} \ \frac{p^2f_g^2(t) \ (t-\mu_g)\sigma_g}
                     {\left(pf_g(t)+qf_b(t)\right)^2 \ \sigma_g^2 \ \frac{t-\mu_g}{\sqrt{2}\sigma_g}\left(1+\sqrt{1+\frac{4\sigma_g^2}{(t-\mu_g)^2}} \right) }\\
                     &=\lim_{t \rightarrow \infty} -2 \ \frac{p^2f_g^2(t)}
                     {\left(pf_g(t)+qf_b(t)\right)^2 \ \left(1+\sqrt{1+\frac{4\sigma_g^2}{(t-\mu_g)^2}} \right)}
                     =\lim_{t \rightarrow \infty} -2 \ \frac{p^2f_g^2(t)}
                     {\left(pf_g(t)+qf_b(t)\right)^2} \cdot
                     \lim_{t \rightarrow \infty} \frac{1}
                     {\left(1+\sqrt{1+\frac{4\sigma_g^2}{(t-\mu_g)^2}} \right)}\\
                     &=\lim_{t \rightarrow \infty} -2 \ \frac{p^2f_g^2(t)}
                     {\left(pf_g(t)+qf_b(t)\right)^2 } \cdot \frac{1}{2}
                     =\lim_{t \rightarrow \infty} - \ \frac{p^2f_g^2(t)}
                     {\left(pf_g(t)+qf_b(t)\right)^2}
                     =-\left(\lim_{t \rightarrow \infty} \frac{pf_g(t)}
                     {\left(pf_g(t)+qf_b(t)\right)}\right)^2\\
                     &\overset{(a)}{=}-\left(\lim_{t \rightarrow \infty} \frac{1}
                     {1+\frac{qf_b(t)}{pf_g(t)}}\right)^2
                     =-\left( \frac{1}
                     {1+\lim_{t \rightarrow \infty}\frac{qf_b(t)}{pf_g(t)}}\right)^2 \ =
                \end{aligned}
           \end{equation*}
           \begin{equation*}
                \begin{aligned}
                     &\Rightarrow \lim_{t \rightarrow \infty}\frac{qf_b(t)}{pf_g(t)}
                     =\frac{\sigma_g}{\sigma_b} \frac{q}{p} \lim_{t \rightarrow \infty} e^{-\frac{(t-\mu_b)^2}{2\sigma_b^2}+\frac{(t-\mu_g)^2}{2\sigma_g^2}}
                     = \frac{\sigma_g}{\sigma_b}\frac{q}{p} e^{\lim_{t \rightarrow \infty} -\frac{(t-\mu_b)^2}{2\sigma_b^2}+\frac{(t-\mu_g)^2}{2\sigma_g^2}}\\
                     &\Rightarrow \lim_{t \rightarrow \infty}\frac{t^2(\sigma_b^2-\sigma_g^2)+t(2\mu_b\sigma_g^2-2\mu_g\sigma_b^2)+C}{2\sigma_b^2\sigma_g^2}
                     = \left\{
                                \begin{array}{l l}
                                   \infty  & \quad \sigma_g^2 < \sigma_b^2\\
                                   -\infty  & \quad \sigma_g^2 \geq \sigma_b^2 \ \text{assuming} \ \mu_g>\mu_b
                                \end{array} \right.
                \end{aligned}
           \end{equation*}
           So,
           \begin{equation*}
                 \lim_{t \rightarrow \infty}\frac{qf_b(t)}{pf_g(t)} =
                        \left\{
                            \begin{array}{l l}
                               \infty  & \quad \sigma_g^2 < \sigma_b^2\\
                               0  & \quad \sigma_g^2 \geq \sigma_b^2 \ \text{assuming} \ \mu_g>\mu_b
                        \end{array} \right.
           \end{equation*}
           Note that in (a) we assume that $p \neq 0$. This assumption implies that the situation which all the users are in the bad group is not taken in consideration. For that case all the users have the same channel and the analysis is known and not in our interest. In the same manner we assume also that $q \neq 0$ for the opposite situation.
           Hence, the first limit result is
           \begin{equation*}
                 \lim_{t \rightarrow \infty}-\sqrt{2}p^2 \ \frac{f_g(t) \ \frac{t-\mu_g}{\sigma_g^2} \
                  \frac{\sigma_g f_g(t)}{\frac{t-\mu_g}{\sqrt{2}\sigma_g}+\sqrt{\frac{(t-\mu_g)^2}{2\sigma_g^2}+2}}}
                 {\left(pf_g(t)+qf_b(t)\right)^2}=
                        \left\{
                            \begin{array}{l l}
                               0  & \quad \sigma_g^2 < \sigma_b^2\\
                               -1  & \quad \sigma_g^2 \geq \sigma_b^2 \ \ \text{assuming} \ \mu_g>\mu_b
                        \end{array} \right.
           \end{equation*}
           As mentioned earlier the first and the second limits different only in their indexes, therefore the result for the second limit is
           \begin{equation*}
                 \lim_{t \rightarrow \infty}-\sqrt{2}q^2 \frac{f_b(t)\frac{t-\mu_b}{\sigma_b^2}
                  \frac{\sigma_b f_b(t)}{\frac{t-\mu_b}{\sqrt{2}\sigma_b}+\sqrt{\frac{(t-\mu_b)^2}{2\sigma_b^2}+2}}}
                 {\left(pf_g(t)+qf_b(t)\right)^2}=
                        \left\{
                            \begin{array}{l l}
                               -1  & \quad \sigma_g^2 < \sigma_b^2\\
                               0  & \quad \sigma_g^2 \geq \sigma_b^2 \ \ \text{assuming} \ \mu_g>\mu_b
                        \end{array} \right.
           \end{equation*}
           We turn now for the third limit calculation
           \begin{equation*}
                \begin{aligned}
                     &\lim_{t \rightarrow \infty}-\sqrt{2}pq \frac{f_g(t)\frac{t-\mu_b}{\sigma_g^2}
                      \frac{\sigma_b f_b(t)}{\frac{t-\mu_b}{\sqrt{2}\sigma_b}+\sqrt{\frac{(t-\mu_b)^2}{2\sigma_b^2}+2}}}
                     {\left(pf_g(t)+qf_b(t)\right)^2}
                     =\lim_{t \rightarrow \infty}-\sqrt{2} \ \frac{pqf_g(t) \ (t-\mu_g)\sigma_b \ f_b(t) \sqrt{2} \sigma_b}
                     {\left(pf_g(t)+qf_b(t)\right)^2 \ \sigma_g^2 \ (t-\mu_b) \ \left(1+\sqrt{1+\frac{4\sigma_b^2}{(t-\mu_b)^2}} \right) }\\
                     &=-2\frac{\sigma_b^2}{\sigma_g^2}\lim_{t \rightarrow \infty} \ \frac{pf_g(t)}{\left(pf_g(t)+qf_b(t)\right)} \cdot
                     \lim_{t \rightarrow \infty}\ \frac{qf_b(t)}{\left(pf_g(t)+qf_b(t)\right)} \cdot
                     \lim_{t \rightarrow \infty}\ \frac{(t-\mu_g)}{(t-\mu_b)}\cdot
                     \lim_{t \rightarrow \infty}\ \frac{1}
                     {\left(1+\sqrt{1+\frac{4\sigma_g^2}{(t-\mu_g)^2}} \right) }\\
                     &=-\frac{\sigma_b^2}{\sigma_g^2}\lim_{t \rightarrow \infty}\ \frac{pf_g(t)}{\left(pf_g(t)+qf_b(t)\right)} \cdot
                     \lim_{t \rightarrow \infty}\ \frac{qf_b(t)}{\left(pf_g(t)+qf_b(t)\right)}\\
                     &=-\frac{\sigma_b^2}{\sigma_g^2}
                     \left\{
                        \begin{array}{l l}
                             0  & \quad \sigma_g^2 < \sigma_b^2\\
                             1  & \quad \sigma_g^2 \geq \sigma_b^2 \ \ \text{assuming} \ \mu_g>\mu_b
                        \end{array} \right.
                     \cdot\left\{
                        \begin{array}{l l}
                             1  & \quad \sigma_g^2 < \sigma_b^2\\
                             0  & \quad \sigma_g^2 \geq \sigma_b^2 \ \ \text{assuming} \ \mu_g>\mu_b\
                        \end{array} \right.
                     =0
                 \end{aligned}
           \end{equation*}
           The fourth limit also share the same result, and if we add all the parts we can see that the limit converge to $-1$. If we consider the upper bound of the complementary error function we will find also that the limit convergence to $-1$ since the analytical development is the same and the limit
           \begin{equation*}
                 \lim_{t \rightarrow \infty} \frac{1}
                 {\left(1+\sqrt{1+\frac{4\sigma_g^2}{(t-\mu_g)^2}} \right)}
                 =\lim_{t \rightarrow \infty} \frac{1}
                 {\left(1+\sqrt{1+\frac{8\sigma_g^2}{\pi(t-\mu_g)^2}} \right)}
                 =\frac{1}{2}
           \end{equation*}
           Therefore we can conclude that  condition \eqref{equ-Sufficient type 1 condition} holds for our stationary distribution.
            We now show that the second condition \eqref{equ-Necessary and sufficient type 1 condition} also holds for the stationary distribution. Let us examine the expression:
       \begin{equation*}
            1-F(t)=p(1-F_g(t))+q(1-F_b(t))
       \end{equation*}
       Since $F_i(t)$ is a Gaussian distribution we shall use the asymptotic relation:
       \begin{equation}\label{equ-Asymptotic relation of Gaussian distribution}
            1-\Phi(t) \sim \frac{\phi(t)}{t} \quad \text{as } t \rightarrow \infty
       \end{equation}
       \begin{equation*}
            \begin{aligned}
                1-F(t)&=p\left(\frac{\sigma_g}{t-\mu_g}\phi(\frac{t-\mu_g}{\sigma_g}) \right)+q\left(\frac{\sigma_b}{t-\mu_b}\phi(\frac{t-\mu_b}{\sigma_b}) \right)      \\
                      &=\frac{1}{\sqrt{2\pi}}\left(\frac{p}{t-\mu_g}e^{-\frac{(t-\mu_g)^2}{2\sigma_g^2}} +\frac{q}{t-\mu_b}e^{-\frac{(t-\mu_b)^2}{2\sigma_b^2}} \right)  \\
                      &\overset{(a)}{=}\frac{1}{\sqrt{2\pi}}\frac{p}{t-\mu_g}e^{-\frac{(t-\mu_g)^2}{2\sigma_g^2}}(1+o(1)) \quad \text{as } t \rightarrow \infty
            \end{aligned}
       \end{equation*}
       where (a) is true since
       \begin{equation*}
       \lim_{t \rightarrow \infty} \frac{\frac{\frac{q}{t-\mu_b}e^{-\frac{(t-\mu_b)^2}{2\sigma_b^2}}}{\frac{p}{t-\mu_g}e^{-\frac{(t-\mu_g)^2}{2\sigma_g^2}}}}{1}=
       \lim_{t \rightarrow \infty} \frac{e^{-\frac{(t-\mu_b)^2}{2\sigma_b^2}}}{e^{-\frac{(t-\mu_g)^2}{2\sigma_g^2}}} = 0
       \end{equation*}
       assuming $\sigma_g>\sigma_b$.
       So taking in consideration condition \eqref{equ-Necessary and sufficient type 1 condition}:
       \begin{equation*}
            \begin{aligned}
                &\frac{1-F(t+xg(t))}{1-F(t)}=\frac{\frac{1}{\sqrt{2\pi}}\frac{p}{t+xg(t)-\mu_g}e^{-\frac{(t+xg(t)-\mu_g)^2}{2\sigma_g^2}}(1+o(1))}
                                             {\frac{1}{\sqrt{2\pi}}\frac{p}{t-\mu_g}e^{-\frac{(t-\mu_g)^2}{2\sigma_g^2}}(1+o(1))}                     \\
                &=\frac{t-\mu_g}{t+xg(t)-\mu_g} e^{\frac{-(t+xg(t)-\mu_g)^2+(t-\mu_g)^2}{2\sigma_g^2}} (1+o(1))            \\
                &=\frac{1}{1+\frac{xg(t)}{t-\mu_g}}e^{-\frac{g(t)x(t-\mu_g)}{\sigma_g^2}}e^{-\frac{g^2(t)x^2}{2\sigma_g^2}}(1+o(1))=\\
            \end{aligned}
       \end{equation*}
       By choosing $g(t)=\frac{\sigma_g^2}{t-\mu_g}$ as the strictly positive function for $t \rightarrow \infty$ we get
       \begin{equation*}
                =\frac{1}{1+\frac{x\sigma_g^2}{(t-\mu_g)^2}}e^{-x}e^{-\frac{\sigma_g^2 \ x^2}{2(t-\mu_g)^2}}(1+o(1))=e^{-x} \quad \quad \text{as } t \rightarrow \infty
       \end{equation*}
       That conclude that the distribution function $F(x)$ belongs to the domain of attraction of Type \Rmnum{1}. Similar analysis can be found in \cite{mladenovic1999extreme}, where some examples for convergence of sequences of independent random variables with the same mixed distribution is investigated.\\       
       
       We now derive the normalizing constants $a_K$ and $b_K$:\\
       According to EVT results for \emph{i.i.d.} sequences \cite[Theorem 1.5.1]{EVT:Springer1983}, $u_K=u_K(x)=x/a_K+b_K$ is a sequence of real numbers such that $K(1-F(u_K))\rightarrow \uptau$ as $K\rightarrow \infty$, therefore in our case:
       \begin{equation*}
         1-pF_g(u_K)-qF_b(u_K)\rightarrow \frac{1}{K}e^{-x}, \quad K\rightarrow \infty
       \end{equation*}
       where $\uptau=e^{-x}$. The same way as the previous proof using \eqref{equ-Asymptotic relation of Gaussian distribution} we obtain that
       \begin{equation*}
         \left(\frac{p\sigma_g}{u_K-\mu_g}\phi(\frac{u_K-\mu_g}{\sigma_g}) \right)+\left(\frac{q\sigma_b}{u_K-\mu_b}\phi(\frac{u_K-\mu_b}{\sigma_b}) \right)\rightarrow \frac{1}{K}e^{-x}
       \end{equation*}
       \begin{equation*}
         \frac{1}{\sqrt{2\pi}}\frac{p}{u_K-\mu_g}e^{-\frac{(u_K-\mu_g)^2}{2\sigma_g^2}}(1+o(1)) \rightarrow \frac{1}{K}e^{-x}
       \end{equation*}
       the last step is true since $u_K \rightarrow \infty$ as $K \rightarrow \infty$, similar to the pervious proof.
       \begin{equation}\label{equ-Proof for a_n b_n (1)}
                 -\frac{1}{2}\log2\pi+\log p-\log{(u_K-\mu_g)}-\frac{(u_K-\mu_g)^2}{2\sigma_g^2}+\log K+x+o(1) \rightarrow 0
       \end{equation}
       It follows at once that $(t-\mu_g)^2 / 2\log K \rightarrow 1$, and hence
       \begin{equation*}
         \log{(u_K-\mu_g)}=\frac{1}{2}(\log 2 +\log{\log{K}})+o(1)
       \end{equation*}
       Putting this in \eqref{equ-Proof for a_n b_n (1)} ,we obtain
       \begin{equation*}
                 \frac{(u_K-\mu_g)^2}{2\sigma_g^2}=-\frac{1}{2}\log2\pi+\log p-\frac{1}{2}(\log 2 +\log{\log{K}})+\log K +x+o(1)
       \end{equation*}
       or
       \begin{equation*}
                 \frac{(u_K-\mu_g)^2}{\sigma_g^2}=
                  2\log K\left(1+\frac{x-\frac{1}{2}\log{\frac{4\pi}{p^2}}-\frac{1}{2}\log{\log{K}}}{\log K}+o\left(\frac{1}{\log K}\right)\right)
       \end{equation*}
       and hence
       \begin{equation*}
                 \frac{(u_K-\mu_g)}{\sigma_g}=
                  \sqrt{2\log K}\left(1+\frac{x-\frac{1}{2}\log{\frac{4\pi}{p^2}}-\frac{1}{2}\log{\log{K}}}{2\log K}+o\left(\frac{1}{\log K}\right)\right)
       \end{equation*}
       so by using expansion we have,
       \begin{equation*}
                 u_K=\sigma_g\sqrt{2\log K}
                  \left(1+\frac{x-\frac{1}{2}\log{\frac{4\pi}{p^2}}-\frac{1}{2}\log{\log{K}}}{2\log K}+o\left(\frac{1}{\log K}\right)\right)+\mu_g
       \end{equation*}
       since we know that $u_K=x/a_K+b_K$ we conclude that
       \begin{equation*}
           \begin{aligned}
                 &a_K=\frac{\sqrt{2\log{K}}}{\sigma_g}\\
                 &b_K=\sigma_g\left((2\log K)^{1/2}-\frac{\log{\log K}+\log{\frac{4\pi}{p^2}}}{2(2\log K)^{1/2}}\right)+\mu_g
           \end{aligned}
       \end{equation*}
\end{proof}


\bibliography{My_bib}

\end{document}